\newcommand{\e}{\ensuremath{\varepsilon}\xspace}
\newcommand{\Real}{\mathbb{R}}
 \newcommand{\tangent}{\ensuremath{T}\xspace}
\newcommand{\normal}{\ensuremath{N}\xspace}
\newcommand{\cham}[1]{\ensuremath{\bar{#1}}\xspace}
\def\EMPH#1{\textbf{\emph{\boldmath #1}}}
\def\into{\DOTSB\hookrightarrow}
\def\set#1{\{ #1 \}}
\def\abs#1{\mathopen| #1 \mathclose|}		% use instead of $|x|$ 
\title{Chasing Puppies:\\ Mobile Beacon Routing on Closed Curves}
\titlerunning{Chasing Puppies: Mobile Beacon Routing on Closed Curves}
\author{Mikkel Abrahamsen}{Basic Algorithms Research Copenhagen (BARC), University of Copenhagen, Denmark}{miab@di.ku.dk}{http://orcid.org/0000-0003-2734-4690}{Partially supported by the VILLUM Foundation grant 16582.}
\author{Jeff Erickson}{University of Illinois at Urbana-Champaign, USA}{jeffe@illinois.edu}{https://orcid.org/0000-0002-5253-2282}{}
\author{Irina Kostitsyna}{Eindhoven University of Technology, Netherlands}{i.kostitsyna@tue.nl}{}{}
\author{Maarten Löffler}{Utrecht University, Netherlands}{m.loffler@uu.nl}{}{Partially supported by the Dutch Research Council (NWO) under project number 614.001.504 and 628.011.005.}
\author{Tillmann Miltzow}{Utrecht University, Netherlands}{t.miltzow@uu.nl}{https://orcid.org/0000-0003-4563-2864}{Supported by the Dutch Research Council (NWO) under Veni grant EAGER}
\author{Jérôme Urhausen}{Utrecht University, Netherlands}{j.e.urhausen@uu.nl}{}{Supported by the Dutch Research Council (NWO); 612.001.651.}
\author{Jordi Vermeulen}{Utrecht University, Netherlands}{j.l.vermeulen@uu.nl}{}{Supported by the Dutch Research Council (NWO); 612.001.651.}
\author{Giovanni Viglietta}{Japan Advanced Institute of Science and Technology, Nomi City, Ishikawa, Japan}{johnny@jaist.ac.jp}{https://orcid.org/0000-0001-6145-4602}{}
\authorrunning{Abrahamsen, Erickson, Kostitsyna, Löffler, Miltzow, Urhausen, Vermeulen, Viglietta} %TODO mandatory. First: Use abbreviated first/middle names. Second (only in severe cases): Use first author plus 'et al.'
\keywords{Beacon routing, navigation, generic smooth curves, puppies} %TODO mandatory; please add comma-separated list of keywords
\begin{document}

\maketitle

\begin{abstract}
We solve an open problem posed by Michael Biro at CCCG 2013 that was inspired by his and others’ work on beacon-based routing.
Consider a human and a puppy on a simple closed curve in the plane.
The human can walk along the curve at bounded speed and change direction as desired.
The puppy runs with unbounded speed along the curve as long as the Euclidean straight-line distance to the human is decreasing, so that it is always at a point on the curve where the distance is locally minimal.
Assuming that the curve is smooth (with some mild genericity constraints) or a simple polygon, we prove that the human can always catch the puppy in finite time.
\end{abstract}

\clearpage
\section{Introduction}
\label{S:intro}

You have lost your puppy somewhere on a simple closed curve.
Both of you are forced to stay on the curve.
You can see each other and both want to reunite.
The problem is that the puppy runs infinitely faster than you, and it believes naively that it is always a good idea to minimize its straight-line distance to you.
What do you do?

To be more precise, let $\gamma\colon S^1\into \Real^2$ be a simple closed curve in the plane, which we informally call the \emph{track}.
Two special points move around the track, called the \emph{puppy} $p$ and the \emph{human} $h$.
The human can walk along the track at bounded speed and change direction as desired.
The puppy runs with unbounded speed along the track as long as its Euclidean straight-line distance to the human is decreasing, until it reaches a point on the curve where the distance is locally minimized.  As the human moves along the track, the puppy moves to stay  at a local distance minimum.  The human's goal is to move in such a way that the puppy and the human meet.  See \cref{F:init-example} for a simple example.

\begin{figure}[ht]
\centering
\includegraphics[scale=0.48]{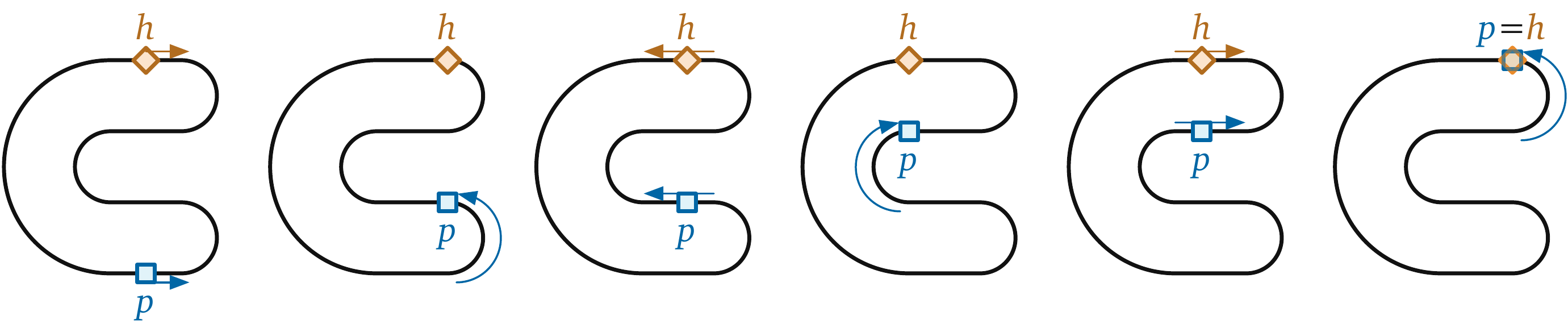}
\caption{Catching the puppy.}
\label{F:init-example}
\end{figure}

In this paper we show that it is always possible to reunite with the puppy, under the assumption that the curve is well-behaved in a sense to be defined.

The problem was posed in a different guise at the open problem session of the 25th Canadian Conference on Computational Geometry (CCCG 2013) by Michael Biro.  In Biro's formulation, 
the track was a railway, the human a locomotive, and the puppy a train carriage that was attracted to an infinitely strong magnet installed in the locomotive.

\begin{figure}[ht]
\centering
\includegraphics[page=4]{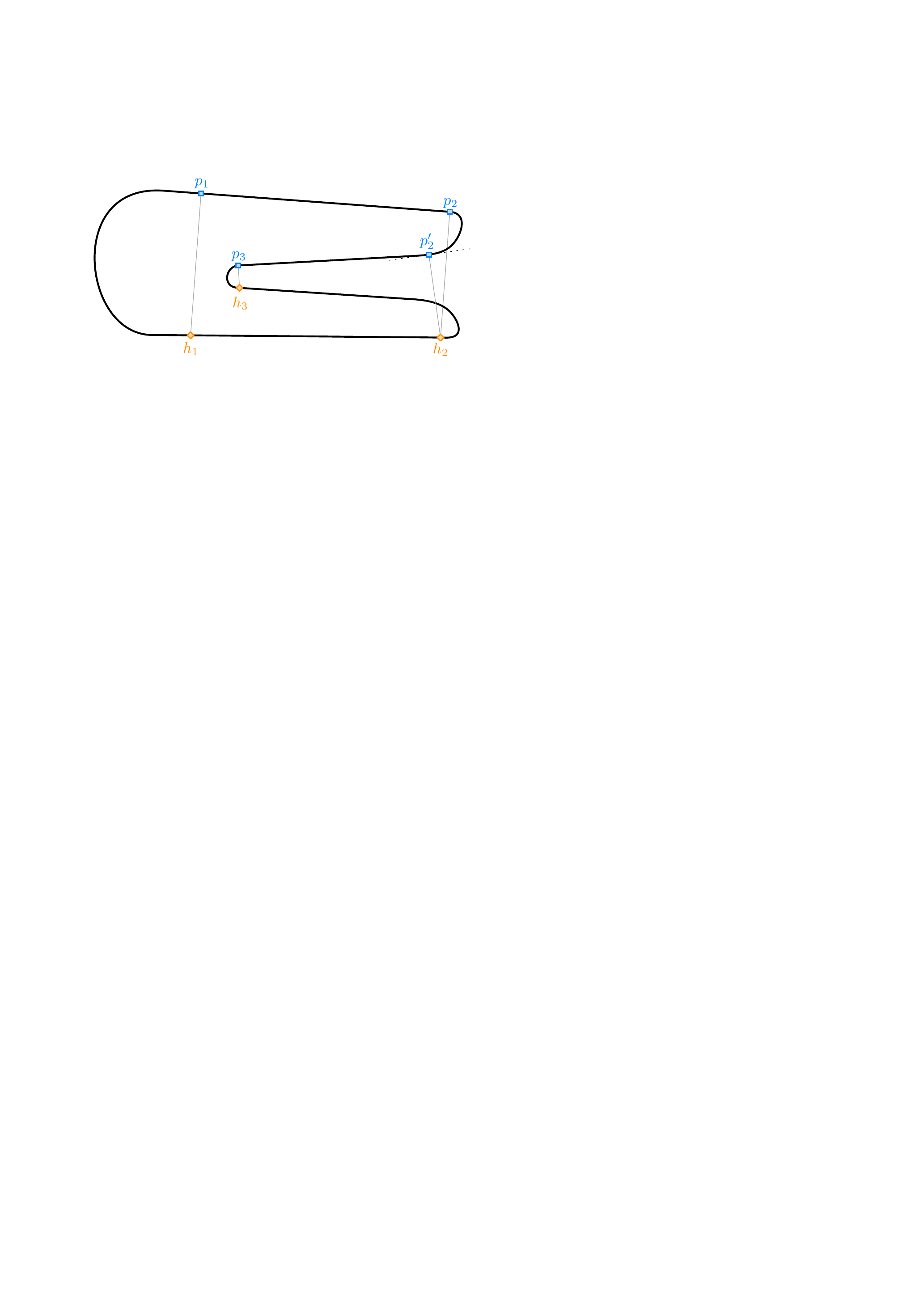}
\caption{If the human keeps walking counterclockwise from $h_1$, the human and the puppy will never meet.
To the right are closeups of two of the spikes of the star.
}
\label{F:intro2}
\end{figure}

Returning to our formulation of catching a puppy, it was also asked if the human will always catch the puppy by choosing an arbitrary direction and walking only in that direction.
This turns out not to be the case; consider the star-shaped track in \cref{F:intro2}.  Suppose the human and puppy start at points $h_1$ and $p_1$, respectively, and the human walks counterclockwise around the track.
When the human reaches $h_2$, the puppy runs from $p_2$ to~$p'_2$.
When the human reaches $h_3$, the puppy runs from $p_3$ to $p'_3$.
Then the pattern repeats indefinitely. 
Examples of this type, where the human walking in the wrong direction will never catch the puppy, were independently discovered during the conference by some of the authors and by David Eppstein.

\subsection{Related work}

Biro’s problem was inspired by his and others’ work on \emph{beacon-based geometric routing}, a generalization of both greedy geometric routing and the art gallery problem introduced at the 2011 Fall Workshop on Computational Geometry \cite{bgikm-bre-11} and the 2012 Young Researchers Forum \cite{bgikm-bbspd-12}, and further developed in Biro’s PhD thesis~\cite{biro2013beacon} and papers~\cite{DBLP:conf/wads/BiroIKM13, DBLP:conf/cccg/BiroGIKM13}.  A \emph{beacon} is a stationary point object that can be activated to create a “magnetic pull” towards itself everywhere in a given polygonal domain $P$.  When a beacon at point $b$ is \emph{activated}, a point object $p$ moves moves greedily to decrease its Euclidean distance to $b$, alternately moving through the interior of $P$ and sliding along its boundary, until it either reaches $b$ or gets stuck at a “dead point” where Euclidean distance is minimized.  By activating different beacons one at a time, one can route a moving point object through the domain.
%
%\jeff{Argh.}
%One natural question is how many beacons are necessary to route a point object from a given starting location $s$ to a given target location $t$, or more generally to route the object between any two points $s$ and $t$ (assuming an extra beacon at $t$). 
%Another is how to compute the order in which beacons should be activated and deactivated to for the point to reach $t$.
%
Initial results for this model by Biro and his colleagues \cite{bgikm-bre-11, bgikm-bbspd-12, biro2013beacon, DBLP:conf/wads/BiroIKM13, DBLP:conf/cccg/BiroGIKM13} sparked significant interest and subsequent work in the community~\cite{DBLP:journals/comgeo/BaeSV19,kostitsyna_et_al:LIPIcs:2018:8768,DBLP:conf/cccg/KouhestaniRS15,DBLP:conf/cccg/KouhestaniRS16,DBLP:journals/comgeo/KouhestaniRS18,DBLP:conf/cccg/Shermer15, aacns-bcop-17, aacns-tbico-20, cm-cbrtd-20}.
More recent works have also studied how to utilize objects that repel points instead of attracting them~\cite{DBLP:journals/comgeo/BoseS20,DBLP:conf/cocoa/MozafariS18}.

Biro’s problem can also be viewed as a novel variant of classical \emph{pursuit} problems, which have been an object of intense study for centuries~\cite{n-cempe-07}.  The oldest pursuit problems ask for a description of the \emph{pursuit curve} traced by a \emph{pursuer} moving at constant speed directly toward a \emph{target} moving along some other curve.  Pursuit curves were first systematically studied by Bouguer \cite{b-ncapd-32} and de Maupertuis \cite{m-cp-32} in 1732, who used the metaphor of a pirate overtaking a merchant ship; another notable example is Hathaway’s problem \cite{h-psp2-20}, which asks for the the pursuit curve of a dog swimming at unit speed in a circular lake directly toward a duck swimming at unit speed around its circumference.  In more modern \emph{pursuit-evasion} problems, starting with Rado’s famous “lion and man” problem \cite[pp.114--117]{b-lm-86}, the pursuer and target both move strategically within some geometric domain; the pursuer attempts to \emph{capture} the target by making their positions coincide while the target attempts to evade capture.  Countless variants of pursuit-evasion problems have been studied, with multiple pursuers and/or targets, different classes of domains, various constraints on motion or visibility, different capture conditions, and so on.  Biro’s problem can be naturally described as a \emph{cooperative pursuit} or \emph{pursuit-attraction} problem, in which a strategic target (the human) \emph{wants} to be captured by a greedy pursuer (the puppy).

Kouhestani and Rappaport~\cite{kouhestani2017} studied a natural variant of Biro’s problem, which we can recast as follows.
A \emph{guppy} is restricted to a closed and simply-connected \emph{lake}, while the human is restricted to the boundary of the lake.  The guppy swims with unbounded speed to decrease its Euclidean distance to the human as quickly as possible.  Kouhestani and Rappaport described a polynomial-time algorithm that finds a strategy for the human to catch the guppy, if such a strategy exists, given a simple polygon as input; they also conjectured that a capturing strategy always exists. 
Abel, Akitaya, Demaine, Demaine, Hesterberg, Korman, Ku, and Lynch~\cite{DBLP:journals/corr/abs-2006-01202} recently proved that for some polygons and starting configurations, the human cannot catch the guppy, even if the human is allowed to walk in the exterior of the polygon, thereby disproving Kouhestani and Rappaport’s conjecture.  Their simplest counterexample is an orthogonal polygon with about 50 vertices.

\subsection{Our results}

Before describing our results in detail, we need to carefully define the terms of the problem.  The \emph{track} is a simple closed curve $\gamma\colon S^1\into \Real^2$.  We consider the motion of two points on this curve, called the \emph{human} (or \emph{beacon} or \emph{target}) and the \emph{puppy} (or \emph{pursuer}).  A \emph{configuration} is a pair $(x, y) \in S^1\times S^1$ that specifies the locations $h = \gamma(x)$ and $p = \gamma(y)$ for the human and puppy, respectively.  Let $D(x, y)$ denote the straight-line Euclidean distance between these two points.  When the human is located at $h = \gamma(x)$, the puppy moves from $p = \gamma(y)$ to greedily decrease its distance to the human, as follows.
\begin{itemize}
\item
If $D(x, y+\e) < D(x, y)$ for all sufficiently small $\e>0$, the puppy runs forward along the track, by increasing the parameter $y$.
\item 
If $D(x, y-\e) < D(x, y)$ for all sufficiently small $\e>0$, the puppy runs backward along the track, by decreasing the parameter $y$.
\end{itemize}
If both of these conditions hold, the puppy runs in an arbitrary direction.  While the puppy is running, the human remains stationary.  If neither condition holds, the configuration is \emph{stable}; the puppy does not move until the human does.  When the configuration is stable, the human can walk in either direction along the track; the puppy walks along the track in response to keep the configuration stable, until it is forced to run again.  The human's goal is to \emph{catch} the puppy; that is, to reach a configuration in which the two points coincide. 

Our main result is that the human can always catch the puppy in finite time, starting from any initial configuration, provided the track is either a generic simple smooth curve or an arbitrary simple polygon.

The remainder of the paper is structured as follows.  We begin in \cref{S:warmup} by  considering some variants and special cases of the problem.  In particular, we give a simple self-contained proof of our main result for the special case of orthogonal polygons.

We consider generic smooth tracks in \cref{S:smooth-diagrams,S:dexterity}. Specifically, in \cref{S:smooth-diagrams} we define two important diagrams, which we call the \emph{attraction diagram} and the \emph{dual attraction diagram}, and prove some useful structural results.  At a high level, the attraction diagram is a decomposition of the configuration space $S^1\times S^1$ according to the puppy's behavior, similar to the \emph{free space diagrams} introduced by Alt and Godau to compute Fréchet distance \cite{ag-cfdbt-95}. We show that for a sufficiently generic smooth track, the attraction diagram consists of a finite number of disjoint simple closed \emph{critical} curves, exactly two of which are topologically nontrivial.  Then in \cref{S:dexterity}, we argue that the human can catch the puppy on any track whose attraction diagram has this structure.

In \cref{S:polygons}, we describe an extension of our analysis from smooth curves to simple polygonal tracks.  Because polygons do not have well-defined tangent directions at their vertices, this extension requires explicitly modeling the puppy's direction of motion in addition to its location.  We first prove that the human can catch the puppy on a polygon that has no acute vertex angles and where no three vertices form a right angle; under these conditions, the attraction diagram has exactly the same structure as for generic smooth curves.  We then reduce the problem for arbitrary simple polygons to this special case by \emph{chamfering}---cutting off a small triangle at each vertex---and arguing that any strategy for catching the puppy on the chamfered track can be pulled back to the original polygon.  

Finally, we close the paper by suggesting several directions for further research.

% ——————————————————————————————————————————————
\section{Warmup: other settings and a special case}
\label{S:warmup}

In this section, we discuss two variants of Biro’s problem and the special case of orthogonal polygons.

In the first variant, both the human $h$ and the puppy $p$ are allowed to move anywhere in the interior and on the boundary of a simple polygon $P$.  Here, as in beacon routing and Kouhestani and Rappaport’s variant~\cite{kouhestani2017, DBLP:journals/corr/abs-2006-01202}, the puppy moves greedily to decrease its Euclidean distance to the human, alternately moving through the interior of $P$ and sliding along its boundary.

As we will show in \cref{thm:pol}, $h$ has a simple strategy to catch $p$ in this setting, essentially by walking along the dual graph of any triangulation.  This is an interesting contrast to the proof by Abel et al. \cite{DBLP:journals/corr/abs-2006-01202} that $h$ and $p$ cannot always meet when $h$ is restricted to the \emph{exterior} of $P$ and $p$ to the interior.  Our main result that $h$ and $p$ can meet when both are restricted to the \emph{boundary} of $P$ (even for a much wider class of simple closed curves), somehow sits in between these other two variants.

\begin{figure}[ht]
\centering
\includegraphics[page=3]{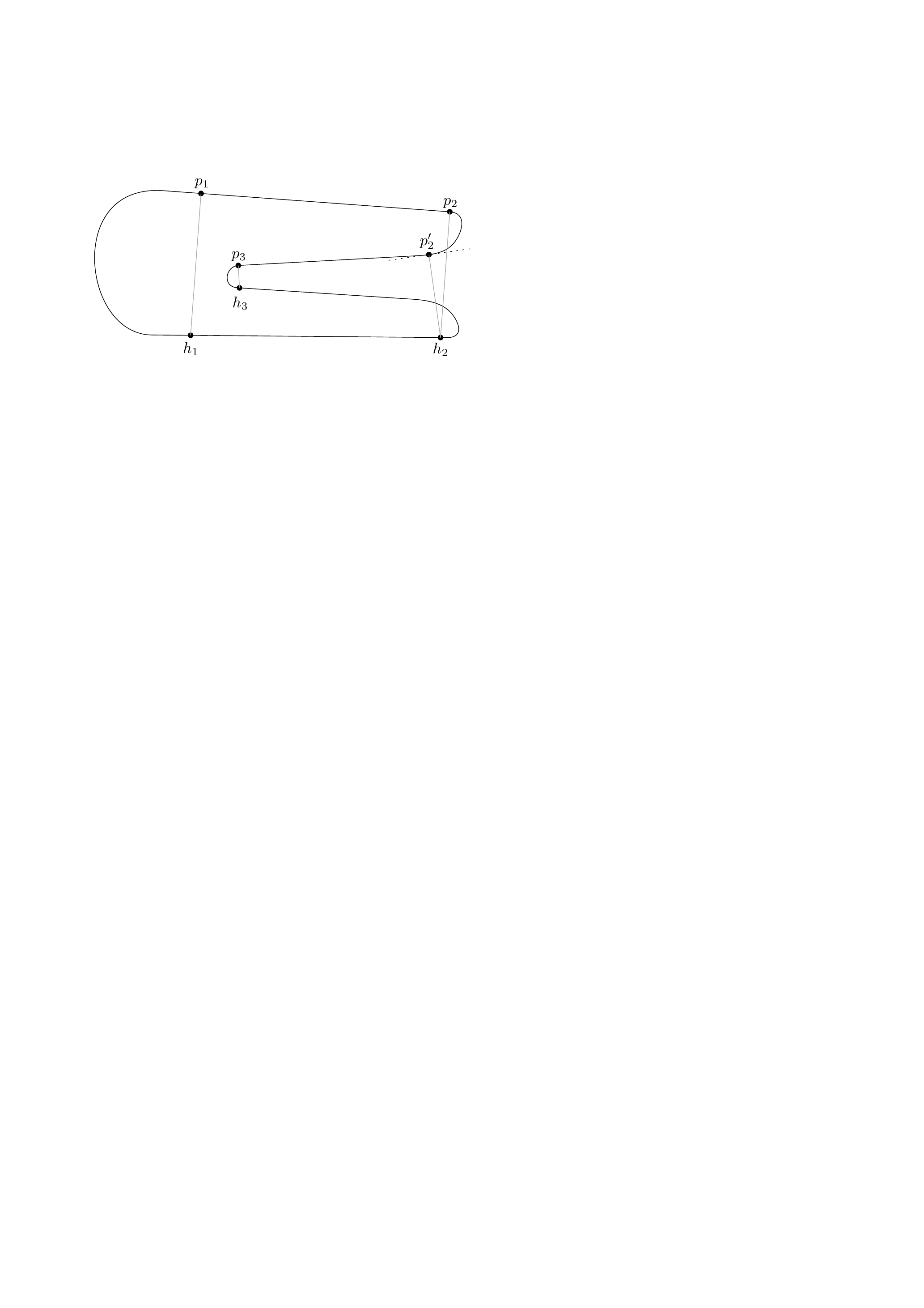}
\caption{The proposed strategy when $h$ and $p$ are restricted to the interior of a simple polygon~$P$.
The human $h$ will follow the path $\pi$.
Note that the triangle containing $p$ will change before $h$ reaches $d_1$, and $\pi$ will be updated accordingly.}
\label{F:thm:pol}
\end{figure}

When both $h$ and $p$ are restricted to the interior of $P$, we propose the following strategy for $h$; see \cref{F:thm:pol}.
Let $\mathcal T$ be a triangulation of $P$ and let $t_1,\ldots,t_k$ be the path of pairwise adjacent triangles in $\mathcal T$ such that $h\in t_1$ and $p\in t_k$.
Let $e_i$ be the common edge of $t_i$ and $t_{i+1}$ and let $d_i$ be the midpoint of $e_i$.
Let $\pi=hd_1d_2\ldots d_{k-1}$ be a path from $h$ to $d_{k-1}$, which is contained in the triangles $t_1,\ldots,t_{k-1}$.
The human starts walking along $\pi$.
As soon as the puppy enters a new triangle, the human recomputes $\pi$ as described and follows the new path.

\begin{theorem}\label{thm:pol}
The proposed strategy will make $h$ and $p$ meet.
\end{theorem}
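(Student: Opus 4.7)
I would prove \cref{thm:pol} by strong induction on $k$, the number of triangles on the dual-graph path from $t(h)$ to $t(p)$, where $t(x)$ denotes the triangle of $\mathcal{T}$ containing a point $x$. For the base case $k=1$, both points lie in a common convex triangle $t$, so the segment $\overline{hp}$ lies inside $P$, the puppy has direct line of sight to $h$ at all times, and walking straight toward the puppy catches it in finite time.

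For the inductive step, assume the statement holds whenever the dual-graph distance is smaller than $k$, and consider the human walking along $\pi=hd_1\ldots d_{k-1}$. The key geometric observation is that each midpoint $d_i$ lies in the interior of the edge shared by the convex triangles $t_i$ and $t_{i+1}$, so whenever $h$ reaches $d_i$ it has line of sight to every point of both triangles. In particular, if the human ever reaches $d_{k-1}$ while the puppy is still in $t_k$, the puppy runs straight to $d_{k-1}$ and the two meet. It therefore suffices to show that the walk along $\pi$ either completes at $d_{k-1}$ or is interrupted by a triangle change that strictly reduces the potential $\Phi=\mathrm{dist}_{\mathcal{T}^*}(t(h),t(p))$, after which the inductive hypothesis applies.

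This in turn reduces to a monotonicity claim: whenever the puppy crosses a triangulation edge into a new triangle $t'$, one has $\mathrm{dist}_{\mathcal{T}^*}(t(h),t')\le\mathrm{dist}_{\mathcal{T}^*}(t(h),t(p))$ as measured just before the crossing. Intuitively, the puppy moves along a locally shortest path toward the human---either a straight segment through the interior, or a taut chain sliding along reflex vertices of $\partial P$---and such trajectories cross triangulation edges in an order compatible with dual-graph distance toward $t(h)$. Formalizing this claim when the puppy slides past several reflex vertices is the main obstacle: ruling out excursions into triangles that are dual-graph-farther from $t(h)$ (despite being closer in Euclidean distance) seems to require invoking the shortest-path structure of geodesics in simple polygons together with a careful analysis of the puppy's instantaneous direction of motion in the beacon-routing model.
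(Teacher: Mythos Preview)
Your proposal has a genuine gap, and you already sense where it is. The monotonicity claim---that whenever the puppy crosses a triangulation edge the dual-graph distance to $t(h)$ does not increase---is left unproven, and it is not clear that it is even true. The puppy's beacon trajectory is governed only by Euclidean distance to $h$; while sliding along a reflex chain it can enter a side pocket that happens to point toward $h$ in the Euclidean sense, and such a pocket may well sit in a subtree of the dual tree that is \emph{farther} from $t(h)$. Moreover, even granting the $\le$ version you state, your induction needs a strict decrease of $\Phi$, which you never establish: the puppy might repeatedly change triangles without $\Phi$ dropping, and the human could reach $d_{k-1}$ with the puppy no longer in $t_k$.

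The paper's proof sidesteps all of this with a much coarser potential. Rather than tracking $\mathrm{dist}_{\mathcal T^*}(t(h),t(p))$, it tracks the connected component $P_p$ of $P\setminus t(h)$ containing $p$. The single observation needed is that if the puppy ever enters the human's current triangle $t(h)$, the segment $\overline{hp}$ lies in that convex triangle and hence in $P$, so the puppy runs straight to $h$ and they meet. Consequently the puppy is confined to $P_p$ for as long as the human stays in $t(h)$, regardless of which individual triangles of $P_p$ the puppy wanders through. When the human steps from $t_1$ into the adjacent triangle $t_2\subset P_p$, the new component of $P\setminus t_2$ containing $p$ is strictly contained in $P_p$ (it has at least one fewer triangle). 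Thus the number of triangles available to the puppy strictly decreases with each step of the human, and no analysis of the puppy's trajectory inside $P_p$ is required at all.
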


\begin{proof}
First, we observe that if the puppy ever enters the triangle $t_1$ that
is occupied by the human, then the puppy and the human will meet
immediately.
Assume that the human does not meet the puppy right from the beginning.
The region $P\setminus t_1$ consists of one, two, or three polygons, one of which $P_p$ contains $p$.
Thus, whenever the human moves from one triangle to another,
the set of triangles that can possibly contain $p$ shrinks.
We conclude that the human and the puppy must meet eventually.
\end{proof}

In our second variant, the human and the puppy are both restricted to a simple, closed curve~$\gamma$ in $\Real^3$.  Here it is easy to construct curves on which $h$ and $p$ will never meet; the simplest example is a “double loop” that approximately winds twice around a planar circle, as shown in \cref{F:doubleLoop}. 

\begin{figure}[ht]
\centering
\includegraphics[page=2]{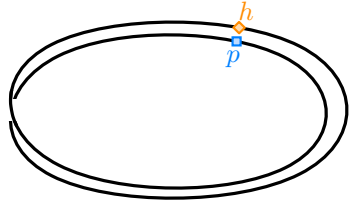}
\caption{A double loop in $\Real^3$; the human and puppy will never meet.}
\label{F:doubleLoop}
\end{figure}

Finally, we consider the special case of Biro’s original problem where the track $\gamma$ is the boundary of an orthogonal polygon in the plane.  This special case of our main results admits a much simpler self-contained proof.

\begin{theorem}
\label{Th:ortho}
The human can catch the puppy on any simple orthogonal polygon, by walking counterclockwise around the polygon at most twice.
\end{theorem}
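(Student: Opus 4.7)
The plan is to exploit the restricted geometry of orthogonal polygons to give a direct combinatorial analysis. The first step is a classification of stable puppy positions: for any human position $h$, a boundary point $p$ is a local minimum of $D(h,\cdot)$ if and only if either (i) $p$ is a convex vertex and $h$ lies in the open quadrant bounded by the two lines supporting the edges incident to $p$ and lying on the side opposite to the polygon's interior at $p$, or (ii) $p$ lies in the interior of a horizontal or vertical edge $e$ and is the foot of the perpendicular from $h$ onto the line supporting $e$. In particular, reflex vertices and generic edge-interior points are never stable. This finite catalogue is what makes it possible to describe the puppy's motion explicitly while the human walks counterclockwise.

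With this classification I would analyze the puppy's trajectory as a function of the human's arc-length parameter $x$. The puppy's position is piecewise smooth: on each maximal interval during which it sits at a fixed convex vertex it is stationary; on each maximal interval during which it sits at a perpendicular foot on an edge $e$, the puppy slides along $e$ so that its coordinate along $e$ tracks the human's coordinate in the same direction; and between these intervals the puppy jumps to a new local minimum. The central monovariant would be the counterclockwise arc-length $\phi(x)$ from the human to the puppy. The key claim is that in one full counterclockwise lap of the human, either the puppy is caught, or the puppy's net displacement is also one counterclockwise lap, so that $\phi$ returns to its starting value. A second lap then forces $\phi$ through $0$ by an intermediate-value argument, once the combinatorial state of which vertex is stable and which edge carries the foot has stabilized after the first lap.

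The main obstacle is to control the discrete jumps, since in principle a jump can move the puppy clockwise past the human and thereby undo progress accumulated during the continuous phases. To rule this out I would prove a combinatorial lemma bounding the number of ``backward'' jumps per lap in terms of the number of convex vertices, using the fact that every jump is triggered either when a perpendicular foot slides off an edge at a convex vertex, or when a convex vertex leaves the ``back quadrant'' of the human; each such trigger can occur only a bounded number of times per counterclockwise lap of the corresponding geometric feature. A minor subtlety to handle is the case where several local minima coincide and the puppy's direction of motion is ambiguous; this is resolved by the tiebreaking rule stated in the problem.
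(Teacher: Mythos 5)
Your classification of the stable configurations (convex vertices whose outer normal quadrant contains $h$, and perpendicular feet in edge interiors) is correct, but it is not where the difficulty of the theorem lies, and the rest of the outline has a genuine gap. The puppy moves continuously along the track even while it "runs," so the counterclockwise arc length $\phi$ from the human to the puppy is already a continuous function of time; catching means $\phi$ reaches $0$ or the full perimeter $L$. Your key claim — that over one human lap the puppy's net displacement is also one counterclockwise lap, so that $\phi$ returns to its starting value — is, even if true, consistent with $\phi$ remaining strictly inside the open interval $(0,L)$ forever, and your intermediate-value step therefore has no mechanism to force $\phi$ through $0$. The star-shaped example of \cref{F:intro2} (with the human walking in the unlucky direction) is exactly this situation: the state is periodic with period one lap, each lap contains only a bounded number of backward runs, and the puppy is never caught. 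So the combinatorial lemma you propose — bounding the \emph{number} of backward jumps per lap — cannot suffice; one backward run per lap, repeated forever, already defeats the strategy. What you would need to control is the total backward \emph{displacement}, or better, to exhibit a point on the track that the puppy provably cannot pass.

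That barrier is precisely what the paper's proof constructs, and it is the only place where orthogonality is used beyond the local classification. Taking $u_1$ to be the leftmost among the topmost points and $u_2$ the next vertex clockwise, the paper shows that during the second counterclockwise lap the puppy can never enter the interior of the top edge $u_1u_2$: to do so it would first have to climb to $u_2$, which forces the human onto the horizontal supporting line $\ell$ to the right of the puppy, and the instant the human leaves $\ell$ the puppy retreats downward from $u_2$. Hence the puppy stays trapped on the counterclockwise arc from $h$ to $u_2$ and is caught by the time the human returns to $u_2$ — in other words, the puppy's winding over the second lap is \emph{strictly less} than the human's, which is the opposite of your key claim. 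To repair your approach you would need to identify and prove such a one-way barrier; as written, the proposal does not go through.
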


\begin{proof}
Let~$P$ be an arbitrary simple orthogonal polygon.
Let~$u_1$ be its leftmost point with the maximum $y$-coordinate, and $u_2$ be the next boundary vertex of $P$ in the clockwise order (see \cref{F:ortho}).
Finally, let~$\ell$ be the horizontal line supporting the segment $u_1u_2$.

We break the motion of the human into two phases.
In the first phase, the human moves counterclockwise around $P$ from their starting location to $u_1$.
If the human catches the puppy during this phase, we are done, so assume otherwise.
In the second phase, the human walks counterclockwise around $P$ starting from~$u_1$ to $u_2$.

We claim that the puppy $p$ is never in the interior of the segment $u_1u_2$ during the second phase; thus, $p$ always lies on the closed counterclockwise subpath of~$P$ from $h$ to  $u_2$ (or less formally, ``between $h$ and $u_2$'').  This claim implies that the human and the puppy are united during the second phase on $u_2$ at the latest.

\begin{figure}[ht]
\centering
\includegraphics[scale=0.5]{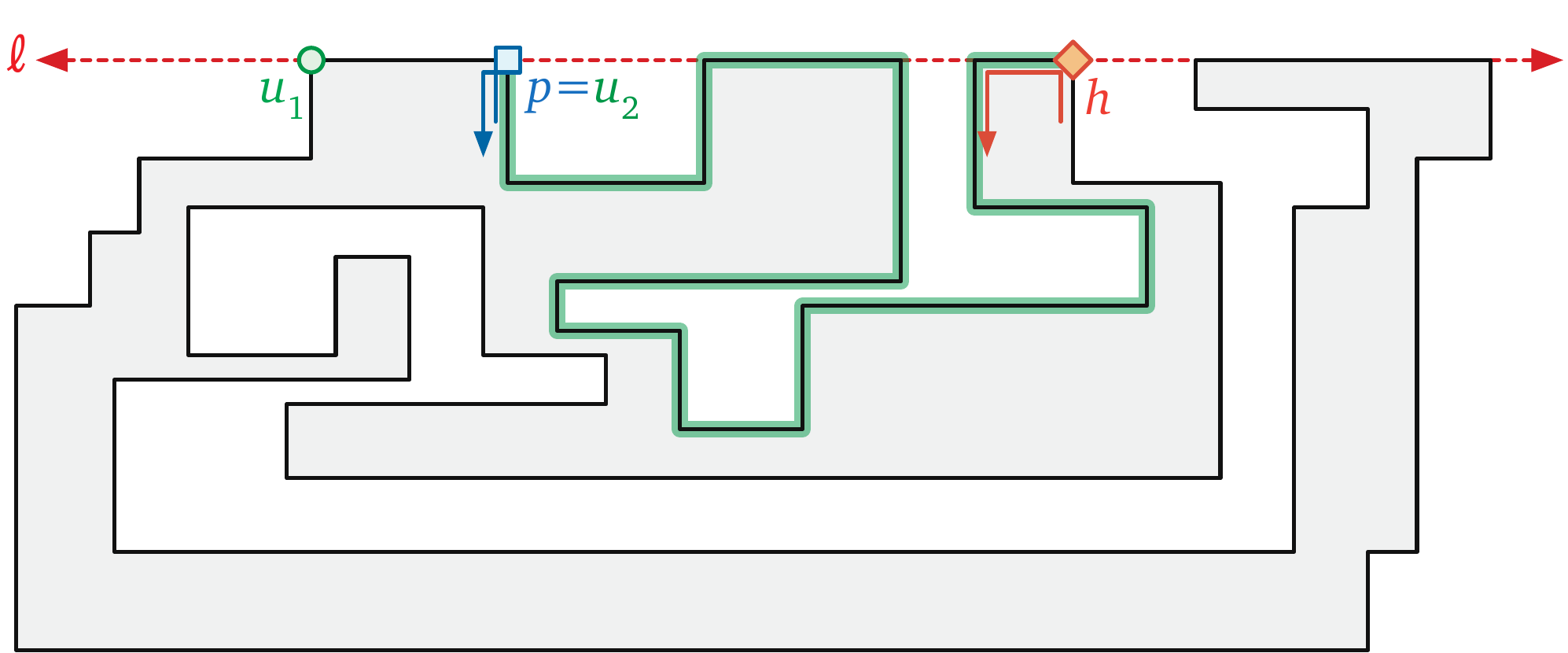}
\caption{Proof of \cref{Th:ortho}.  During the human's second trip around $P$, the puppy lies between $u_2$ and the human.}
\label{F:ortho}
\end{figure}

The puppy must first be at $u_2$ if it ever wants to be in
the interior of $u_1u_2$.
So consider any moment during the second phase when~$p$ 
moves upward to the vertex $u_2$.
At that moment,~$h$ must be on the line~$\ell$ to the right of~$p$.
(For any point~$a$ below~$\ell$,  there is a point~$b$ below~$u_2$ that is closer to $a$ than~$u_2$.)
Thus, the puppy will stay on $u_2$ as long as $h$ is on $\ell$.
As soon as~$h$ leaves~$\ell$ the puppy will leave $u_2$ downward.
Thus the puppy can never go to the interior of the edge $u_1u_2$.
\end{proof}

The star-shaped track in \cref{F:intro2} shows that this simple argument does not extend to arbitrary polygons, even with a constant number of edge directions.

% ——————————————————————————————————————————————
\section{Diagrams of smooth tracks}
\label{S:smooth-diagrams}

We first formalize both the problem and our solution under the assumption that the track is a generic smooth simple closed curve $\gamma\colon S^1\into \Real^2$.  In particular, for ease of exposition, we assume that $\gamma$ is regular and $C^3$, meaning it has well-defined continuous first, second, and third derivatives, and its first derivative is nowhere zero.  We also assume $\gamma$ satisfies some additional genericity constraints, to be specified later.  We consider polygonal tracks in \cref{S:polygons}.
 
\subsection{Configurations and genericity assumptions}

We analyze the behavior of the puppy in terms of the \emph{configuration space} $S^1\times S^1$, which is the standard torus.  Each configuration point $(x, y) \in S^1\times S^1$ corresponds to the human being located at $h = \gamma(x)$ and the puppy being located at $p = \gamma(y)$.

For any configuration $(x,y)$, recall that $D(x,y)$ denotes the straight-line Euclidean distance between the points $\gamma(x)$ and $\gamma(y)$.
We classify all configurations $(x, y) \in S^1\times S^1$ into three types, according to the sign of the partial derivative of distance with respect to the puppy's position.
\begin{itemize}\itemsep0pt
\item
$(x,y)$ is a \emph{forward} configuration if $\frac{\partial}{\partial y} D(x,y) < 0$.
\item
$(x,y)$ is a \emph{backward} configuration if $\frac{\partial}{\partial y} D(x,y) > 0$.
\item
$(x,y)$ is a \emph{critical} configuration if $\frac{\partial}{\partial y} D(x,y) = 0$.
\end{itemize}
Starting in any forward (resp.~backward) configuration, the puppy automatically runs forward (resp.~backward) along the track $\gamma$. 
Genericity implies that there are a finite number of critical configurations $(x, y)$ with any fixed value of $x$, or with any fixed value of $y$.  We further classify the critical configurations as follows:
\begin{itemize}\itemsep0pt
\item
$(x,y)$ is a \emph{stable} critical configuration if $\frac{\partial^2}{\partial y^2} D(x,y) > 0$.
\item
$(x,y)$ is an \emph{unstable} critical configuration if $\frac{\partial^2}{\partial y^2} D(x,y) < 0$.
\item
$(x,y)$ is a \emph{forward pivot} configuration if $\frac{\partial^2}{\partial y^2} D(x,y) = 0$ and $\frac{\partial^3}{\partial y^3} D(x,y) < 0$.
\item
$(x,y)$ is a \emph{backward pivot} configuration if $\frac{\partial^2}{\partial y^2} D(x,y) = 0$ and $\frac{\partial^3}{\partial y^3} D(x,y) > 0$.
\end{itemize}
In any stable configuration, the puppy's distance to the human is locally minimized, so the puppy does not move unless the human moves.  
In any unstable configuration, the puppy can decrease its distance by running in either direction.  
Finally, in any forward (resp.~backward) pivot configuration, the puppy can decrease its distance by moving 
in one direction but not the other, and thus automatically runs forward (resp.~backward) along the track.

Critical points can also be characterized geometrically as follows. Refer to \cref{F:critical}.  A configuration $(x,y)$ is critical if the human $\gamma(x)$ lies on the line $\normal(y)$ normal to $\gamma$ at the puppy's location $\gamma(y)$.  Let $C(y)$ denote the center of curvature of the track at $\gamma(y)$.  Then $(x, y)$ is a pivot configuration if $\gamma(x) = C(y)$, a stable critical configuration if the open ray from $C(y)$ through the human point $\gamma(x)$ contains the puppy point $\gamma(y)$, and an unstable critical configuration otherwise.

\begin{figure}[ht]
\centering
\includegraphics[scale=0.5]{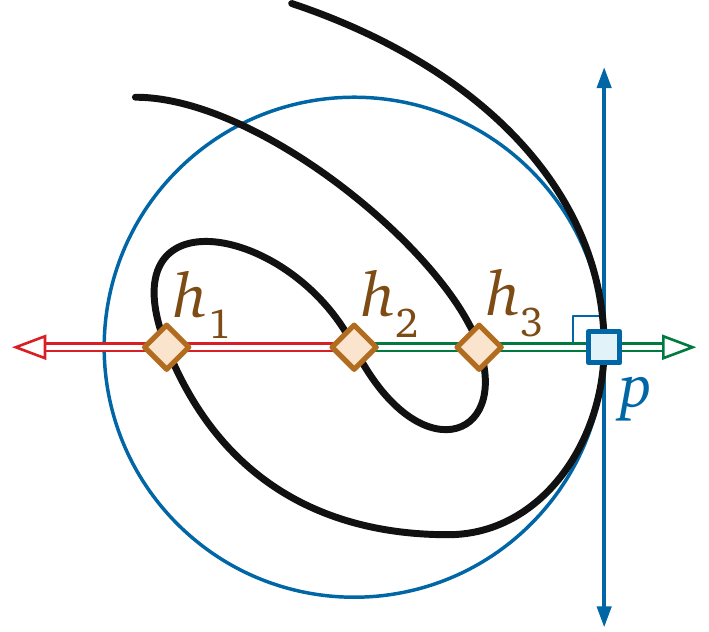}
\caption{Three critical configurations: $(h_1, p)$ is unstable; $(h_2, p)$ is a pivot configuration, and $(h_3, p)$ is stable.}
\label{F:critical}
\end{figure}

Genericity of the track $\gamma$ implies that this classification of critical configurations is exhaustive, and moreover, that the set of pivot configurations is finite.  
In particular, our analysis requires that in any pivot configuration $(x,y)$, the puppy point $\gamma(y)$ is not a local curvature minimum or maximum.\footnote{More concretely, we assume the track $\gamma$ intersects its evolute (the locus of centers of curvature) transversely, away from its cusps.}  Otherwise, we would need higher derivatives to disambiguate the puppy's behavior.  In the extreme case where $\gamma$ contains both an open circular arc $\alpha$ and its center $c$, all configurations where $h=c$ and $p\in \alpha$ are stable.

\subsection{Attraction diagrams}
\label{sec:puppydiagram}

The \EMPH{attraction diagram} of the track $\gamma$ is a decomposition of the configuration space $S^1\times S^1$ by critical configurations.  Our genericity assumptions imply that the set of critical points---the common boundary of the forward and backward configurations---is the union of a finite number of disjoint simple closed curves, which we call \emph{critical cycles}.  At least one of these critical cycles, the main diagonal $x=y$, consists entirely of stable configurations; critical cycles can also consist entirely of unstable configurations.  If a critical cycle is neither entirely stable nor entirely unstable, then its points of vertical tangency are pivot configurations, and these points subdivide the curve into $x$-monotone paths, which alternately consist of stable and unstable configurations.

\cref{F:diagram} shows a sketch of the attraction diagram of a simple closed curve.  We visualize the configuration torus $S^1\times S^1$ as a square with opposite sides identified.  Green and red paths indicate stable and unstable configurations, respectively; blue dots indicate pivot configurations; and backward configurations are shaded light gray.  \cref{F:spiral-diagram} shows the attraction diagram for a more complex polygonal track, with slightly different coloring conventions.  (Again, we will discuss polygonal tracks in more detail in \cref{S:polygons}.)

\begin{figure}[ht]
\centering
\raisebox{-0.5\height}{\includegraphics[scale=0.6,page=1]{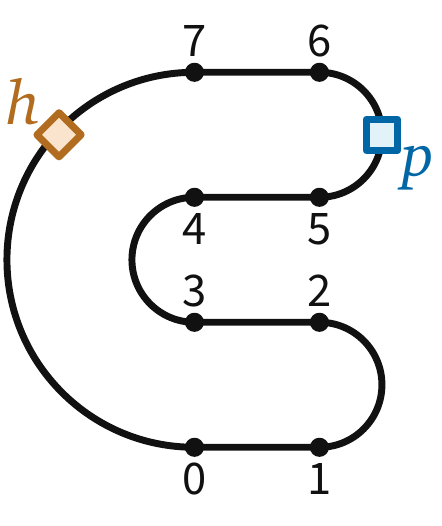}}
\qquad
\raisebox{-0.5\height}{\includegraphics[scale=0.5,page=3]{Fig/square-puppy-diagram}}
\caption{The attraction diagram of a simple closed curve, with one unstable critical configuration emphasized.}
\label{F:diagram}
\end{figure}

\begin{figure*} 
\centering
\includegraphics[scale=0.5]{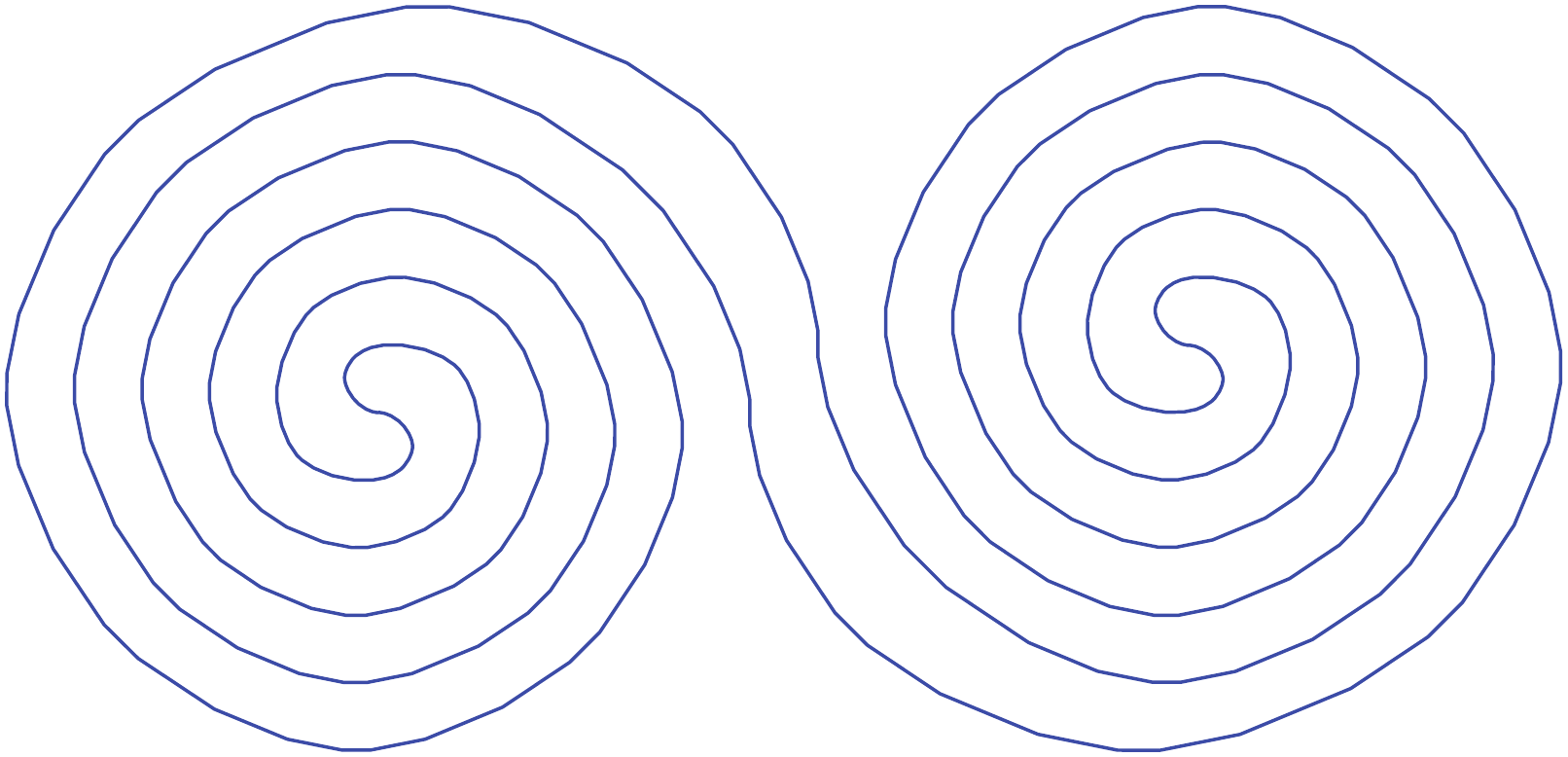}\\[3ex]
\includegraphics[width=\linewidth]{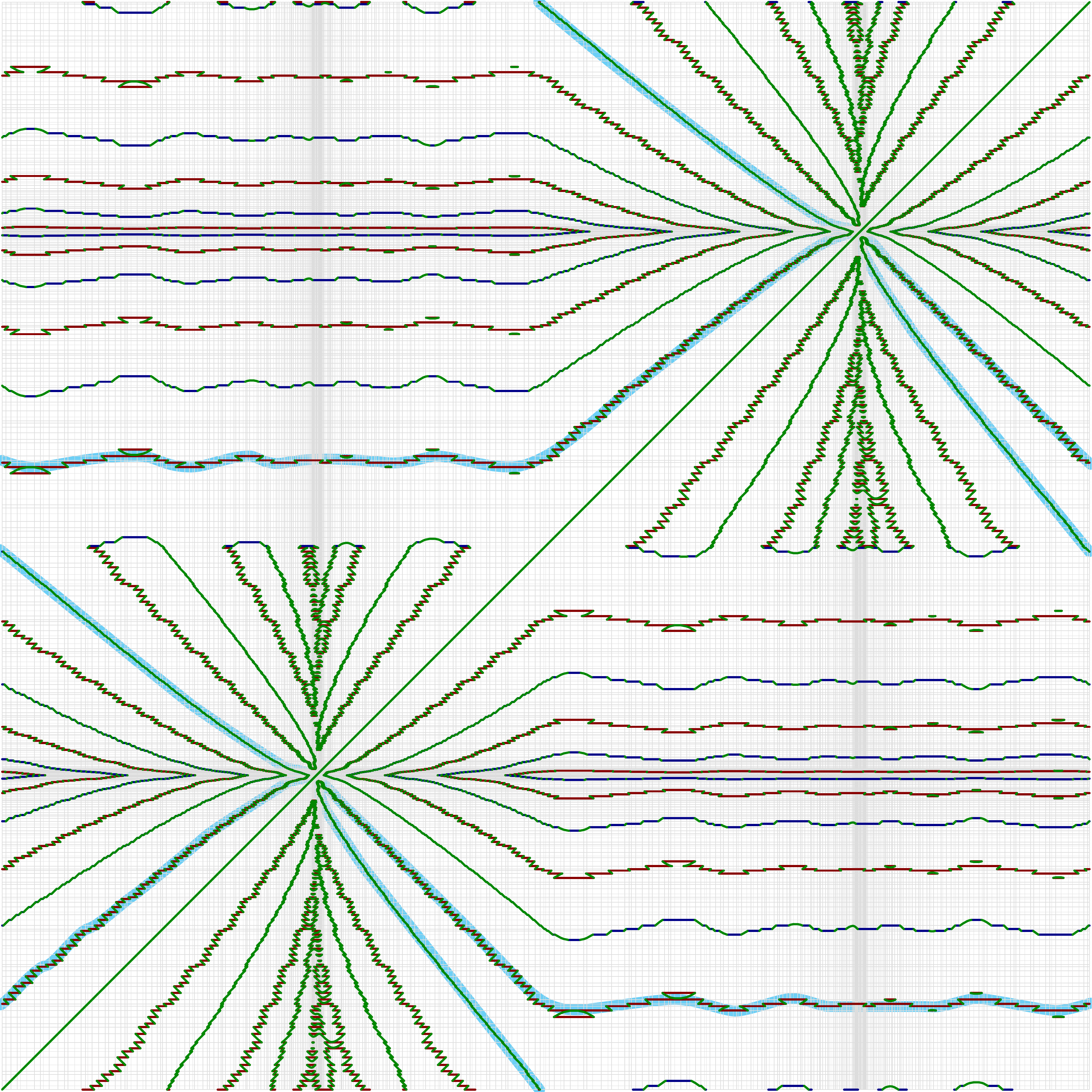}
\caption{The attraction diagram of a complex simple polygon.  Serrations in the diagram are artifacts of the curve being polygonal instead of smooth.  The river is highlighted in blue.}
\label{F:spiral-diagram}
\end{figure*}

The cycles in any attraction diagram have a simple but important topological structure.  
A critical cycle in the attraction diagram is \emph{contractible} if it is the boundary of a simply connected subset of the torus $S^1\times S^1$ and \emph{essential} otherwise.  For example, the main diagonal is essential, and the attraction diagram in \cref{F:diagram} contains two contractible critical cycles and two essential critical cycles.

\begin{lemma}
\label{L:even}
The attraction diagram of any generic closed curve contains an even number of essential critical cycles.
\end{lemma}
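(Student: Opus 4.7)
The plan is a homological argument on the configuration torus $T^2 = S^1 \times S^1$. First I would replace $\partial_y D$ by the smooth function $f(x,y) := \langle \gamma(y)-\gamma(x),\,\gamma'(y)\rangle$, which is defined and smooth on all of $T^2$ (including the diagonal, where $D$ is not smooth). Off the diagonal one has $D(x,y)\,\partial_y D(x,y) = f(x,y)$, so the sign of $f$ matches that of $\partial_y D$; a Taylor expansion along $y=x$ shows that $f$ also vanishes transversely along the diagonal. Hence the zero set of $f$ coincides with the set of critical configurations, and under the genericity assumptions on $\gamma$, $0$ is a regular value of $f$, so this zero set is a disjoint union of smooth simple closed curves $C_1, \ldots, C_m$---precisely the critical cycles.

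Next I would observe that these critical cycles jointly bound the open region $\{f>0\} \subseteq T^2$, so the $\mathbb{Z}/2$-fundamental class of $\bigcup_i C_i$ is a boundary in $T^2$, and thus $\sum_i [C_i] = 0$ in $H_1(T^2;\mathbb{Z}/2) \cong (\mathbb{Z}/2)^2$. Contractible critical cycles contribute $0$ to this sum, so the mod 2 classes of the essential critical cycles also sum to zero.

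Finally I would invoke the classical fact that any two disjoint essential simple closed curves on $T^2$ represent the same unoriented homology class. This follows from the intersection form on $H_1(T^2;\mathbb{Z}) \cong \mathbb{Z}^2$: two primitive classes $(a_1,b_1),(a_2,b_2)$ with algebraic intersection $a_1 b_2 - a_2 b_1 = 0$ must be proportional, hence equal up to sign. Therefore all essential critical cycles are parallel to the main diagonal and share a common nonzero mod 2 class $c = (1,1)$ in $(\mathbb{Z}/2)^2$. If $k$ denotes the number of essential critical cycles, the relation above becomes $k \cdot c = 0$, and since $c \neq 0$ we conclude that $k$ must be even.

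The main obstacle I anticipate is verifying that the stated genericity conditions really suffice to make $0$ a regular value of $f$---equivalently, that the pivot configurations are isolated and transverse so that no critical cycle degenerates or self-intersects. The two topological facts underlying the argument---that the zero set of a smooth function on a closed manifold is a null-homologous $\mathbb{Z}/2$-cycle, and that disjoint essential simple closed curves on $T^2$ share an unoriented homology class---are standard, so the bulk of the work is in fixing the correct analytic setup.
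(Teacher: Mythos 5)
Your proof is correct and is essentially the paper's argument: it is precisely the ``standard homological argument'' that the paper's proof explicitly alludes to before recasting it, self-containedly, as a parity count of crossings with the two generating cycles $\alpha = \set{(0,y)}$ and $\beta = \set{(x,0)}$. Both versions rest on the same two facts---the critical set bounds the forward region and is therefore null-homologous mod $2$, and disjoint essential simple closed curves on the torus represent the same unoriented homology class---and your worry about whether genericity makes $0$ a regular value of $f$ is exactly what the paper's standing assumptions on $\gamma$ are meant to guarantee.
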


\begin{proof}
This lemma follows immediately from standard homological arguments, but for the sake of completeness we sketch a self-contained proof.

Fix a generic closed curve $\gamma$.  Let $\alpha$ be the horizontal cycle $\{(0, y) \mid y\in S^1\}$, and let $\beta$ be the vertical cycle $\{(x, 0) \mid x\in S^1\}$ in the torus $S^1\times S^1$. 
 Without loss of generality, assume $\alpha$ and $\beta$ intersect every critical cycle in the attraction diagram of $\gamma$ transversely.

A critical cycle $C$ in the attraction diagram is contractible if and only if $\alpha$ and $\beta$ each  cross $C$ an even number of times.  (Indeed, this parity condition characterizes all simple contractible closed curves in the torus.)  
On the other hand, $\alpha$ and $\beta$ each cross the main diagonal once.  
It follows that $\alpha$ and $\beta$ each cross \emph{every} essential critical cycle an odd number of times;
otherwise, some pair of essential critical cycles would intersect.

Because the critical cycles are the boundary between the forward and backward configurations, $\alpha$ and $\beta$ each contain an even number of critical points.  The lemma now follows immediately.
\end{proof}

We emphasize that this lemma does \emph{not} actually require the track $\gamma$ to be simple; the argument relies only on properties of generic functions over the torus that are minimized along the main diagonal.

\subsection{Dual attraction diagrams}

Our analysis also relies on a second diagram, which we call the \EMPH{dual attraction diagram} of the track.  
We hope the following  intuition is helpful. 
While the attraction diagram tells us the possible positions of the puppy depending on the position
of the human, the dual attraction diagram gives us the possible positions of the human depending on the position of the puppy.
For each puppy configuration~$y\in S^1$, we consider the 
normal line $\normal(y)$. 
We are interested in the intersection points of $\gamma$ with $\normal(y)$,
as those are the possible positions of the human.
The idea of the dual attraction diagram is to trace the positions of the human as a function of the position of the puppy, see \cref{F:human-diagram}.

Let $\tangent(y)$ denote the directed line tangent to $\gamma$ at the point $\gamma(y)$.  For any configuration $(x,y)$, let \EMPH{$\ell(x,y)$} denote the the distance from $\gamma(x)$ to the tangent line $\tangent(y)$, signed so that $\ell(x,y)>0$ if the human point $\gamma(x)$ lies to the left of $\tangent(y)$ and $\ell(x,y)<0$ if $\gamma(y)$ lies to the right of $\tangent(y)$.   More concisely, assuming without loss of generality that the track $\gamma$ is parameterized by arc length, $\ell(x,y)$ is twice the signed area of the triangle with vertices $\gamma(x)$, $\gamma(y)$, and $\gamma(y) + \gamma'(y)$.

Let $L\colon S^1\times S^1\to S^1\times \Real$ denote the function $L(x,y) = (y, \ell(x,y))$.
The dual attraction diagram is the decomposition of the infinite cylinder $S^1\times \Real$ by the points $\set{ L(x,y) \mid \text{$\strut(x, y)$ is critical}}$.  At the risk of confusing the reader, we refer to the image $L(x,y)\in S^1\times \Real$ of any critical configuration $(x,y)$ as a critical point of the dual attraction diagram.

The dual attraction diagram can also be described as follows.  For any $y\in S^1$ and $d\in \Real$, let $\Gamma(y, d)$ denote the point on the normal line $\normal(y)$ at distance $d$ to the left of the tangent vector $\gamma'(y)$.  More formally, assuming without loss of generality that $\gamma$ is parametrized by arc length, we have $\Gamma(y, d) = \gamma(y) + d \left[\begin{smallmatrix} 0 & -1 \\ 1 & 0 \end{smallmatrix}\right] \gamma'(y)$.
We emphasize that $\Gamma(y,d)$ does not necessarily lie on the curve $\gamma$.   The dual attraction diagram is the decomposition of the cylinder $S^1\times\Real$ by the preimage $\Gamma^{-1}(\gamma)$ of $\gamma$. 

\begin{figure}[ht]
\centering
\includegraphics[scale=0.5]{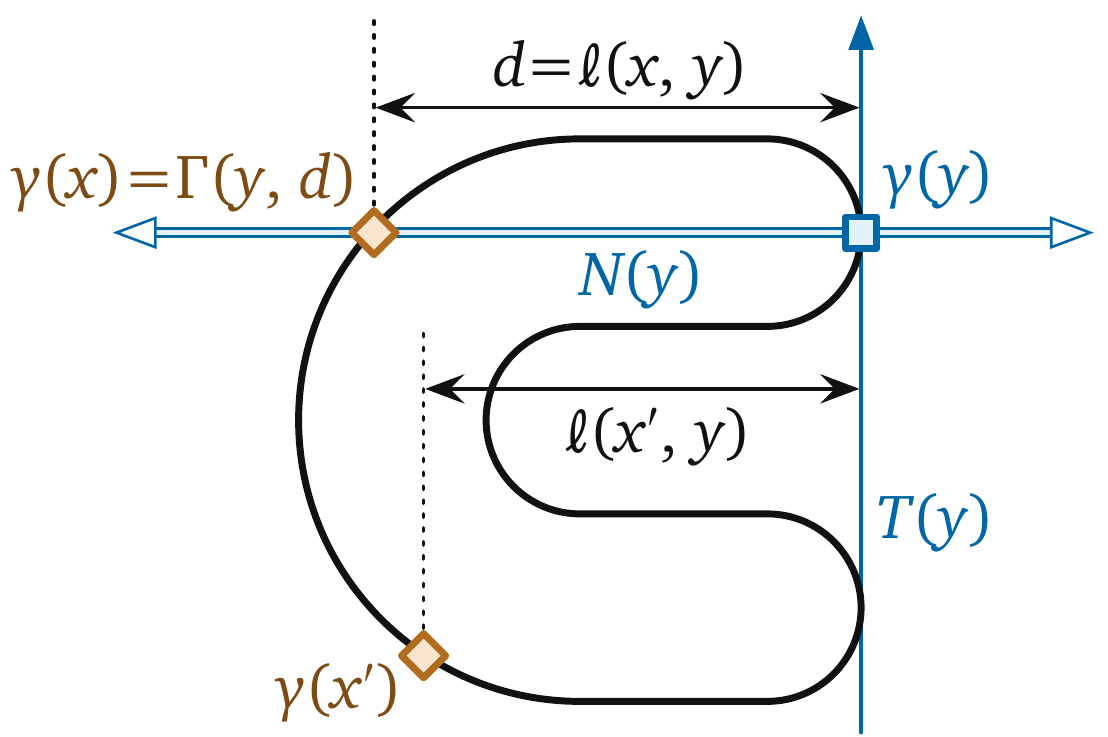}
\caption{Examples of the functions $\ell$ and $\Gamma$ used to define the dual attraction diagram.}
\label{F:human-parameters}
\end{figure}

Because $\gamma$ is simple and regular, the dual attraction diagram is the union of simple disjoint closed curves.  
The function $L$ continuously maps each critical cycle in the attraction diagram to a closed curve in the cylinder $S^1\times \Real$.
Thus, the restriction of $L$ to the set of critical configuration is a homeomorphism onto its image in the dual attraction diagram.  
In particular, $L$ maps  the main diagonal $x=y$ to the horizontal axis $\ell(x,y) = 0$ of the dual attraction diagram.  
We emphasize, however, that the two diagrams are not topologically  equivalent.  
\cref{F:human-diagram} shows the dual attraction diagram of the same track whose attraction diagram is shown in \cref{F:diagram}; here preimages of points inside the track are shaded.

\begin{figure}[ht]
\centering
\raisebox{-0.5\height}{\includegraphics[scale=0.5,page=1]{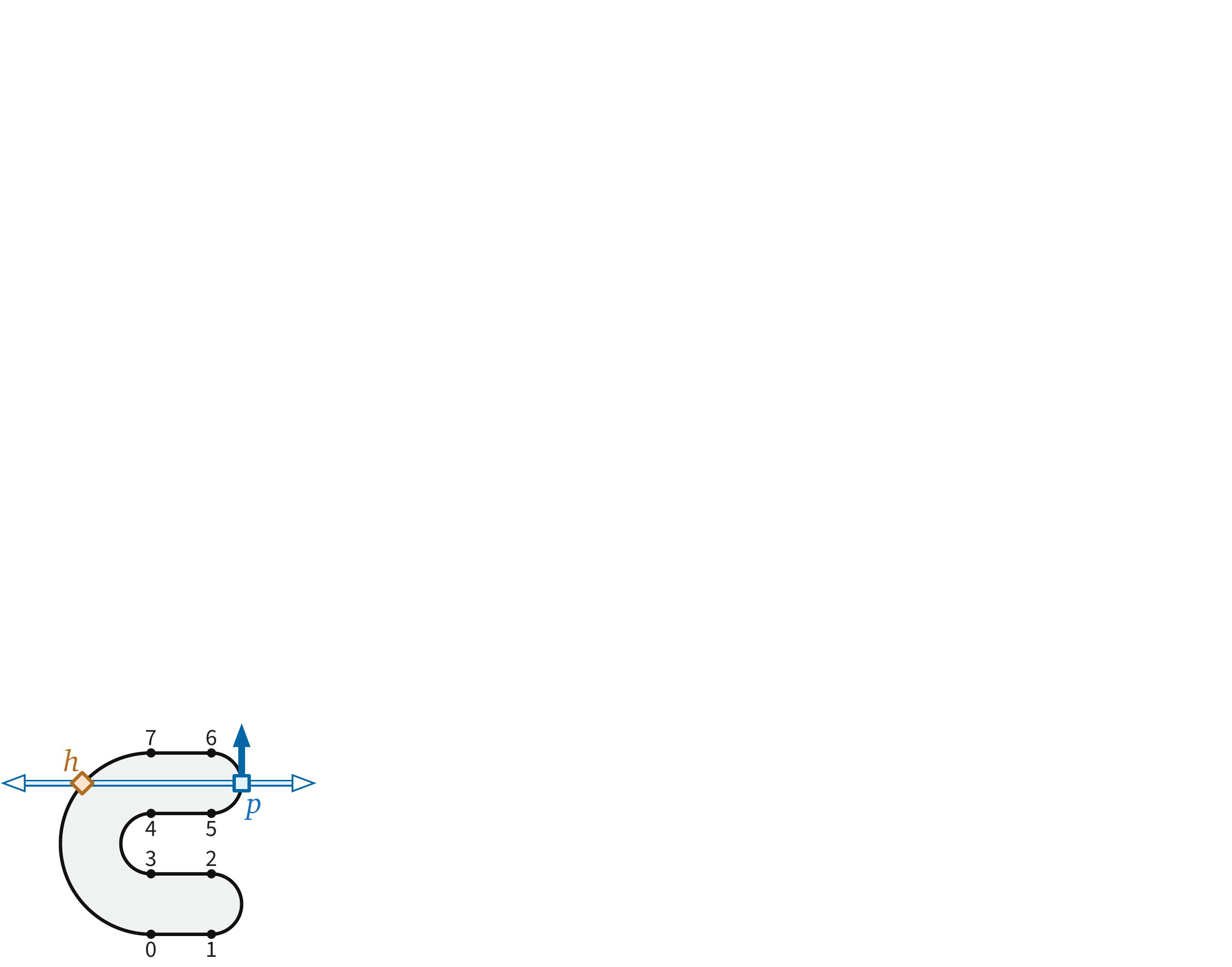}}
\raisebox{-0.5\height}{\includegraphics[scale=0.45,page=2]{Fig/normal-puppy}}
\caption{The dual attraction diagram of a simple closed curve, with one critical configuration emphasized.  Compare with \cref{F:diagram}.}
\label{F:human-diagram}
\end{figure}

\begin{lemma}
For any generic simple closed curve $\gamma$, the attraction diagram of $\gamma$ and the dual attraction diagram of $\gamma$ contain the same number of essential critical cycles.
\end{lemma}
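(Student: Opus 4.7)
The plan is to exhibit $L$ itself as an essentiality-preserving bijection between the critical cycles of the two diagrams. From the preceding paragraph we already know that $L$ restricted to the set of critical configurations is a homeomorphism onto its image, so it sends each critical cycle of the attraction diagram homeomorphically onto a critical cycle of the dual attraction diagram, and this correspondence is a bijection. It therefore suffices to check that $L$ maps essential critical cycles to essential ones and contractible critical cycles to contractible ones.

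The whole argument then hinges on one commutative square: the second-factor projection $\pi\colon S^1\times S^1 \to S^1$, $(x,y)\mapsto y$, factors as $\pi = \rho \circ L$, where $\rho\colon S^1\times \Real \to S^1$ is the first-factor projection of the cylinder. Consequently, for any closed curve $C$ in the configuration torus, the degree of $\pi|_C$ (the ``$y$-winding'' of $C$) equals the degree of $\rho|_{L(C)}$ (the winding of $L(C)$ around the $S^1$ factor of the cylinder). Both sides of the claimed equivalence can now be read off from this single invariant.

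A closed curve in the cylinder $S^1\times \Real$ is contractible precisely when its winding around the $S^1$ factor vanishes, since the cylinder deformation retracts onto that circle. On the torus side, a contractible simple closed curve bounds a disk and so has $y$-winding zero; conversely, the proof of \cref{L:even} shows that each essential critical cycle crosses the cycle $\beta = S^1\times\{0\}$ an odd number of times, and the number of preimages of a regular value under $\pi|_C$ has the same parity as $\deg(\pi|_C)$, so the $y$-winding of an essential critical cycle is odd and in particular nonzero. Combined with the commutative square, $C$ is essential in the attraction diagram if and only if $L(C)$ is essential in the dual attraction diagram, which finishes the proof. The one step requiring any care is this parity-to-degree passage, which is exactly the elementary fact already used implicitly in \cref{L:even}; no genuinely new ingredient is needed beyond organizing the data through $L$.
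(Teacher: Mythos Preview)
Your proof is correct and is essentially the same argument as the paper's. The paper counts intersections of a critical cycle $C$ with the fiber $\{y=0\}$ in the torus and of $L(C)$ with the fiber $\{y=0\}$ in the cylinder, observes that $L$ matches these two sets bijectively, and invokes the parity criterion from \cref{L:even}; you package the identical information as the degree of the $y$-projection and transport it through the commutative square $\pi=\rho\circ L$, which is just the continuous version of the same bookkeeping.
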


\begin{proof}
Let $\alpha$ denote the horizontal cycle $y=0$ in the torus $S^1\times S^1$, and let $\alpha'$ be the vertical line $y=0$ in the infinite cylinder $S^1\times \Real$.  Let $C$ be any critical cycle on the attraction diagram, and let $C' = L(C)$ be the corresponding critical cycle in the dual attraction diagram. 

Recall from the proof of \cref{L:even} that $C$ is contractible on the torus if and only if $\abs{C\cap \alpha}$ is even.  Similarly, $C'$ is contractible in the cylinder if and only if $\abs{C'\cap \alpha'}$ is even.  The map $L\colon S^1\times S^1\to S^1\times \Real$ maps $C\cap \alpha$ bijectively to $C'\cap \alpha'$.  We conclude that $C$ is essential if and only if $C'$ is essential.
\end{proof}

With this correspondence in hand, we can now more carefully describe the topological structure of the \emph{attraction} diagram when the track is simple.

\begin{lemma}
\label{L:simple-good}
The attraction diagram of a \textbf{simple} generic closed curve contains \textbf{two} essential critical cycles.
\end{lemma}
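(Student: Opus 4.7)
Plan: The attraction diagram contains the main diagonal as an essential critical cycle (homotopy class $(1,1)$ on the torus), so by \cref{L:even} the number of essential cycles is even and hence at least two. It remains to establish the matching upper bound of two.

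For the upper bound, I would pass to the dual attraction diagram, which has the same number of essential cycles by the preceding lemma. Using simplicity of $\gamma$, label each open region of the cylinder $S^1 \times \Real$ cut out by the dual diagram as \emph{inside} or \emph{outside}, according to whether $\Gamma$ maps the region into the bounded or unbounded component of $\Real^2 \setminus \gamma$. This labelling is well-defined by the Jordan curve theorem and flips across every cycle. For $|\ell|$ large, $\Gamma(y,\ell)$ lies far from $\gamma$, so the unbounded top and bottom slabs of the cylinder are both \emph{outside}; by the orientation convention making the positive normal point into $\gamma$, the region immediately above the main diagonal is \emph{inside} while that immediately below is \emph{outside}. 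Counting label switches along a vertical fibre then shows that the number of essential cycles strictly above the main diagonal is odd, while the number strictly below is even.

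To collapse these parity bounds to the exact count of two, I would produce some $y_0 \in S^1$ at which $\normal(y_0) \cap \gamma$ consists of exactly two points, namely $\gamma(y_0)$ and one other. A natural candidate is an endpoint of a diameter (longest chord) of $\gamma$: by a first-variation argument the diameter is orthogonal to $\gamma$ at both endpoints, so $\normal(y_0)$ automatically passes through the other diameter endpoint, and the genericity hypothesis on $\gamma$ can be used to exclude additional intersections on the same line. At such $y_0$, the vertical fibre in the dual diagram meets the diagram in exactly two points, one of which is on the main diagonal. Since every essential cycle contributes an odd (hence $\geq 1$) number of intersections to every vertical fibre, while every contractible cycle contributes an even (possibly zero) number, this pins the count of essential cycles at exactly two. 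The \textbf{main obstacle} is precisely the production of $y_0$: in degenerate configurations the diameter chord could meet $\gamma$ in further points, so one must either tighten the genericity hypothesis on $\gamma$ or argue more carefully, for example by minimizing $|\normal(y)\cap\gamma|$ over $y \in S^1$ and using a Morse-theoretic argument on the squared-distance function to show that this minimum is always $2$.
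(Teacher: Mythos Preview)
Your lower bound and the passage to the dual attraction diagram match the paper, and your inside/outside parity step is correct as far as it goes---but, as you yourself observe, it does not by itself bound the number of essential cycles from above. Everything rests on producing a puppy parameter $y_0$ with $|\normal(y_0)\cap\gamma|=2$, and this step is a genuine gap that you do not close. The diameter argument fails: the chord through the two diameter endpoints is indeed normal to $\gamma$ at both ends, but nothing prevents it from meeting $\gamma$ elsewhere (picture a dumbbell-shaped curve whose diameter passes through the neck, giving four intersections). That failure is stable under perturbation, so it is \emph{not} ruled out by the paper's genericity hypotheses, which concern only smoothness, regularity, and transversality with the evolute. Your fallback---minimising $|\normal(y)\cap\gamma|$ and appealing to Morse theory on the squared-distance function---is not fleshed out, and I do not see how it would force the minimum down to $2$; curves where every normal line meets $\gamma$ at least three times are not obviously excluded.

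The paper avoids this difficulty by working on the \emph{human} side rather than fixing a puppy fibre. Choose $\gamma(0)$ on the convex hull of $\gamma$. Every essential critical cycle $C$ in the attraction diagram projects onto all human positions, so it contains some configuration $(0,y)$; the image $L(0,y)$ then lies on $C'=L(C)$ in the dual diagram. Because $\gamma(0)$ is a convex-hull point, it is extremal along the line $\normal(y)$, so $L(0,y)$ is the highest or lowest critical point on that vertical fibre. Now use the simple observation that the essential cycles in the cylinder $S^1\times\Real$ are pairwise disjoint and therefore totally ordered vertically; hence the highest and lowest critical points on any fibre always lie on the same two (top and bottom) essential cycles. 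Thus every $C'$ is forced to be either the top or the bottom one, giving at most two. No special fibre with only two intersections is needed; the convex-hull point does the work your missing $y_0$ was meant to do.
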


\begin{proof}
Fix a generic closed curve $\gamma$.  \cref{L:even} implies that the attraction diagram of $\gamma$ contains at least two essential critical cycles, one of which is the main diagonal.  Thus, to prove the lemma, it remains to show that there are \emph{at most} two essential critical cycles, in either the attraction diagram or the dual attraction diagram.

Let $\Sigma \subset S^1\times \Real$ denote the set of essential critical cycles in the \emph{dual attraction} diagram.  Any two cycles in $\Sigma$ are homotopic---meaning one can be continuously deformed into the other---because there is only one nontrivial homotopy class of simple cycles on the infinite cylinder $S^1\times \Real$.  It follows that the cycles in $\Sigma$ have a well-defined vertical total order.  In particular, the highest and lowest intersection points between any vertical line and $\Sigma$ always lie on the \emph{same} two essential cycles in $\Sigma$.

Without loss of generality, suppose $\gamma(0)$ is a point on the convex hull of $\gamma$ with a unique tangent line.  Let $C$ be any essential critical cycle in the attraction diagram of $\gamma$, and let $C' = L(C)$ denote the corresponding essential cycle in the dual attraction diagram.  $C$ must pass through all possible puppy positions \emph{and} all possible human positions; thus, $C$  contains a configuration $(0,y)$ for some parameter $y\in S^1$.  Recall that $\normal(y)$ denotes the line normal to $\gamma$ at $\gamma(y)$.  Then $\gamma(0)$ must also lie on the convex hull of $\gamma\cap \normal(y)$.  We conclude that $C'$ must be either the highest or lowest essential critical cycle in the dual attraction diagram.  We conclude that there are at most two critical cycles, completing the proof.
\end{proof}

In the rest of the paper, we mnemonically refer to the two essential critical cycles in the attraction diagram of a simple track as the \EMPH{main diagonal} and the \EMPH{river}.

We emphasize that the converse of \cref{L:simple-good} is false; there are non-simple tracks whose attraction diagrams have exactly two essential critical cycles.  (Consider the figure-eight curve $\infty$.)  Moreover, we conjecture that \cref{L:simple-good} can be generalized to all (smooth) tracks with turning number $\pm 1$.

% ——————————————————————————————————————————————
\section{Dexter and sinister strategies}
\label{S:dexterity}

We can visualize any strategy for the human to catch the puppy as a path through the attraction diagram that consists entirely of segments of stable critical paths and vertical segments, as shown in \cref{F:catch}.  We refer to the vertical segments as \emph{pivots}.  Every pivot (except possibly the first) starts at a pivot configuration, and every pivot ends at a stable configuration.

\begin{figure}[ht]
\centering
\raisebox{-0.5\height}{\includegraphics[scale=0.4]{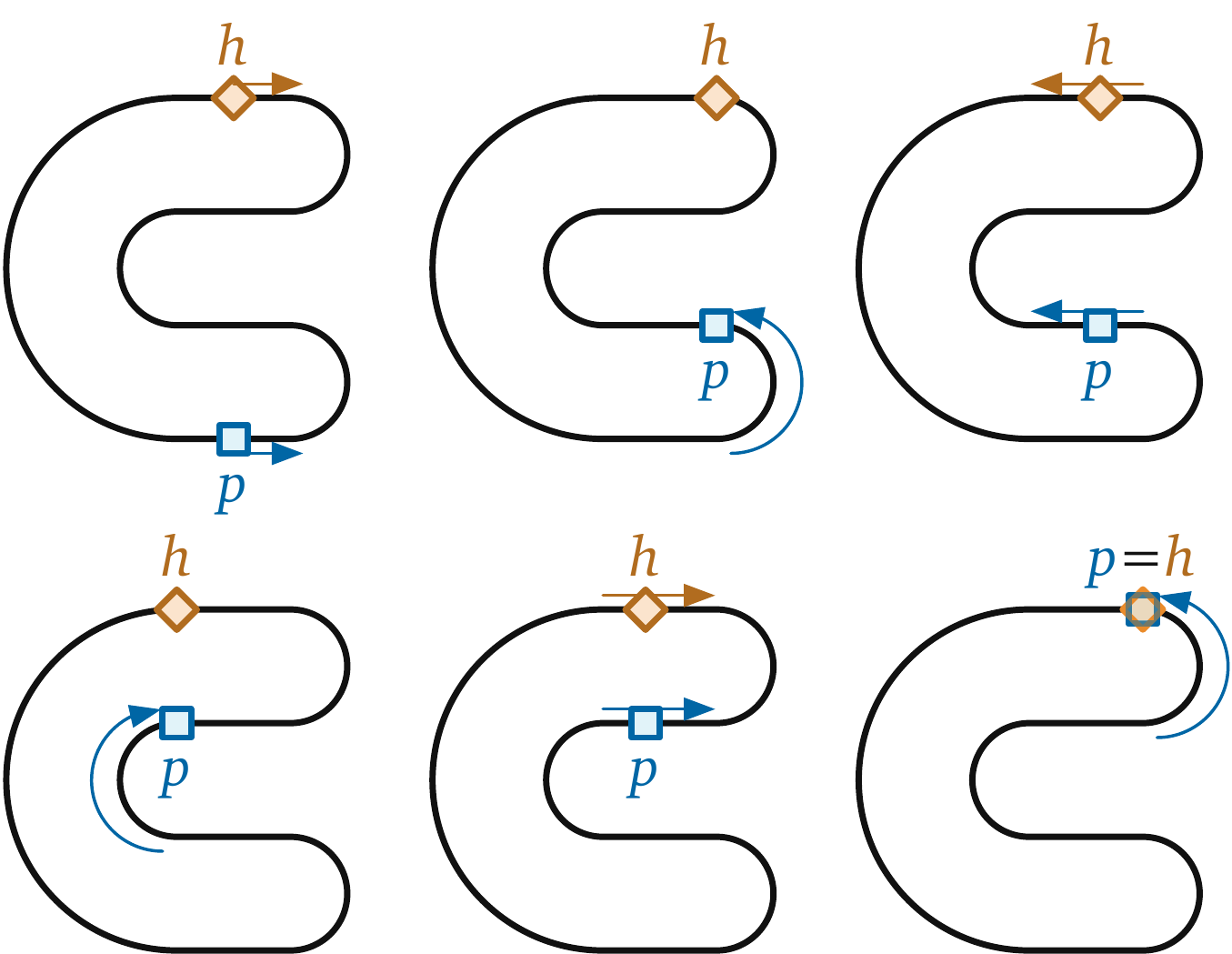}}\hfil
\raisebox{-0.5\height}{\includegraphics[scale=0.5]{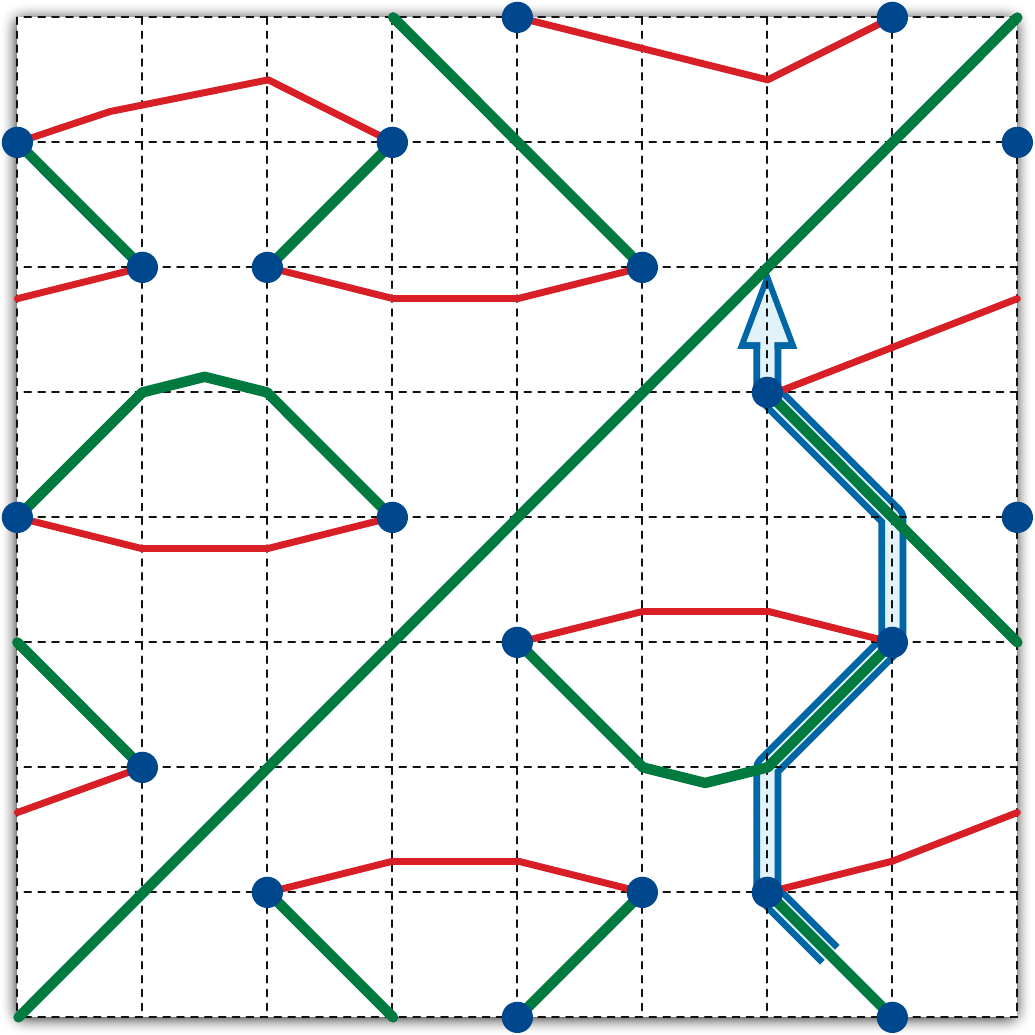}}
\caption{A sinister strategy for catching the puppy; compare with \cref{F:init-example,F:diagram}.}
\label{F:catch}
\end{figure}

We call a strategy \EMPH{dexter} if it ends with a backward pivot---a \emph{downward} segment, approaching the main diagonal to the \emph{right}---and we call a configuration $(x,y)$ \emph{dexter} if there is a dexter strategy for catching the puppy starting at $(x,y)$.  Similarly, a strategy is \EMPH{sinister} if it ends with a forward pivot---a \emph{skyward} segment, approaching the main diagonal to the \emph{left}---and a configuration is sinister if it is the start of a sinister strategy.\footnote{\emph{Dexter} and \emph{sinister} are Latin for right (or skillful, or fortunate, or proper, from a Proto-Indo-European root meaning “south”) and left (or unlucky, or unfavorable, or malicious), respectively.}  A single configuration can be both dexter and sinister; see \cref{F:dexter}.

\begin{figure}[ht]
\centering
\includegraphics[scale=0.5,page=3]{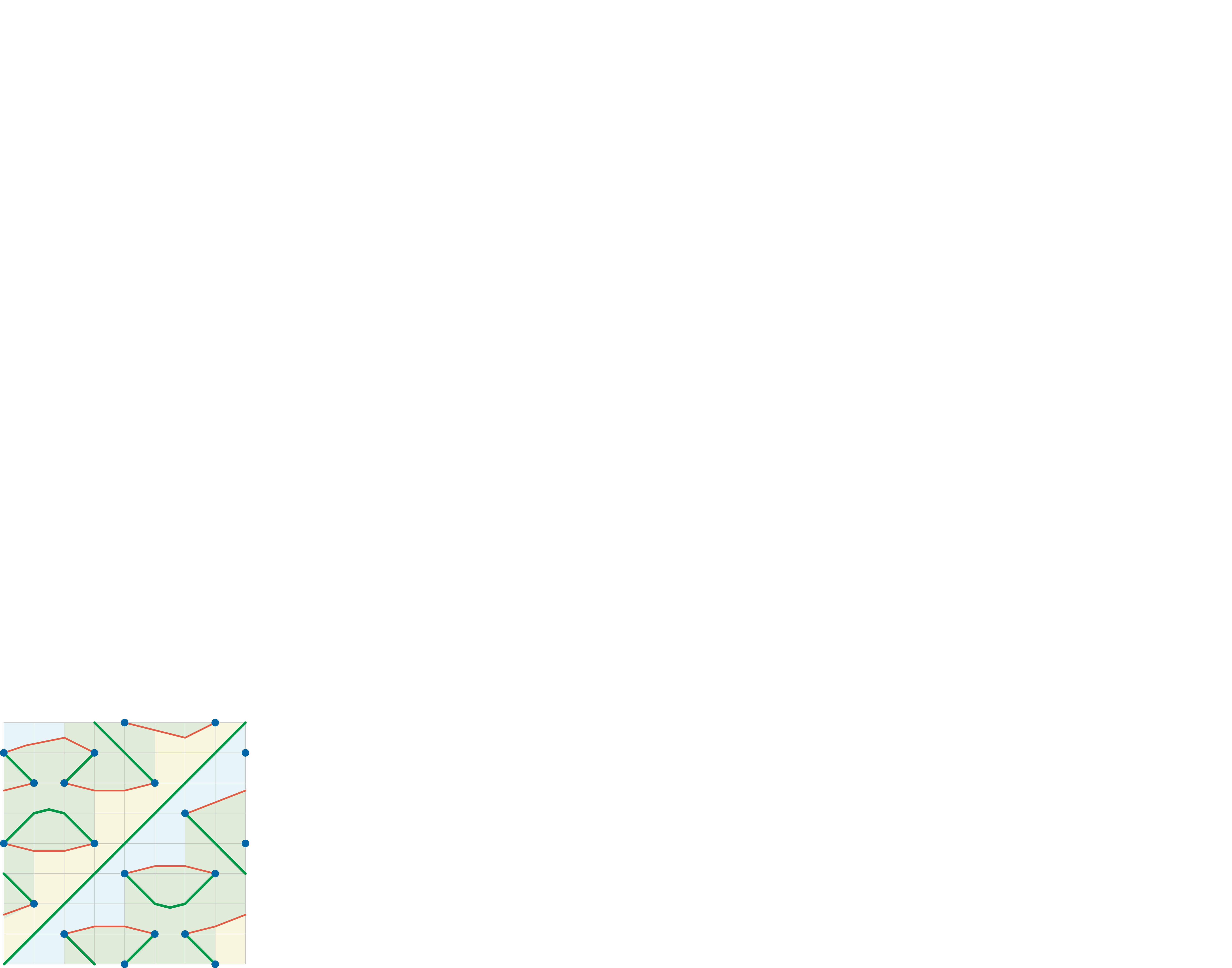}\hfil
\includegraphics[scale=0.5,page=2]{Fig/dexter-sinister}
\caption{Dexter (orange) and sinister (cyan) configurations in the example attraction diagram.  Arrows on the stable critical paths describe dexter and sinister strategies for catching the puppy.}
\label{F:dexter}
\end{figure}

\begin{theorem}
\label{Th:good-catch}
Let $\gamma$ be a generic track whose attraction diagram has exactly two essential critical cycles.  Every configuration on $\gamma$ is either dexter or sinister; thus, the human can catch the puppy on $\gamma$ from any starting configuration.
\end{theorem}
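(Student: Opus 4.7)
We may assume the starting configuration $(x_0, y_0)$ is stable, since otherwise the puppy first runs (with the human stationary) to a stable configuration, possibly landing on the main diagonal en route and finishing the job. The plan is to consider the strategy graph whose vertices are stable critical arcs and whose directed edges represent pivot transitions: the human walks to a pivot endpoint of an arc, at which point the puppy jumps vertically (downward at backward pivots, upward at forward pivots, per the classification in \cref{sec:puppydiagram}) to the next stable arc. I will show that in this graph every vertex has a directed path to the main diagonal. The final edge of such a path is either a downward backward-pivot transition (yielding a dexter strategy) or an upward forward-pivot transition (yielding a sinister strategy), which proves the theorem.

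Suppose for contradiction that the set $B$ of stable arcs from which no directed path reaches the main diagonal is non-empty. Then $B$ is closed under the outgoing edges of the strategy graph: if a transition escaped $B$, the main diagonal would already be reachable. Fix any arc $A \in B$ and follow the ``walk right'' transitions; since $B$ is finite, the resulting trajectory is eventually periodic. The analogous ``walk left'' trajectory is also eventually periodic. Each such periodic trajectory is a simple closed curve on the torus, disjoint from the main diagonal, with strictly monotone $x$-coordinate per period, hence essential and homologous to the main diagonal with class $(1,1)$.

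The main obstacle is deriving the contradiction from these periodic trajectories using the hypothesis that the attraction diagram has exactly two essential critical cycles, namely the main diagonal and the river. My plan is to analyze a periodic right-flow trajectory $C$ by passing to the dual attraction diagram through the map $L$ defined in \cref{sec:puppydiagram}. The image $L(C)$ is an essential cycle in the cylinder, and by \cref{L:simple-good} the only two essential critical cycles in the dual diagram are the horizontal axis $\ell=0$ (the image of the main diagonal) and the image of the river, which by the proof of \cref{L:simple-good} are the extremal essential cycles and lie on opposite sides of each other. Pivot jumps in the attraction diagram induce specific motions of $\ell$ in the dual diagram, and by carefully tracking the signed distance $\ell$ along $L(C)$ -- using that backward pivots translate $\ell$ in one consistent direction while forward pivots translate it in the opposite direction -- I will show that $\ell$ must strictly drift over each period, contradicting the fact that $L(C)$ is a closed curve. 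The symmetric analysis for the left-flow trajectory completes the proof, and the hardest part will be turning the rough monotonicity of $\ell$ into an honest strict drift despite the possibility of cancellation between jumps and the intervening motion along stable arcs.
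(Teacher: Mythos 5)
Your setup is sound and runs parallel to the paper's: your strategy graph on stable arcs is essentially the paper's decomposition of the diagram into trapezoids, and reducing the theorem to ``every stable arc has a directed path to the main diagonal'' is correct. Your observation that a trapped arc forces an eventually periodic, $x$-monotone ``walk right'' trajectory is also a legitimate starting point (and the class-$(1,1)$ claim can be justified: the closed trajectory is simple because a vertical jump cannot pass through a stable arc without landing on it, and a simple essential curve disjoint from the main diagonal must have primitive class proportional to $(1,1)$). But the entire content of the theorem --- the one place where the hypothesis of exactly two essential critical cycles is used --- is the contradiction you defer to the end, and the mechanism you sketch for it does not hold up. The map $L(x,y)=(y,\ell(x,y))$ sends a vertical pivot jump (fixed $x$, varying $y$) to a curve along which \emph{both} coordinates change in a geometry-dependent way; there is no sense in which a backward pivot ``translates $\ell$ in one consistent direction.'' Moreover $L$ is only injective on the critical set, so $L(C)$ may cross the axis $\ell=0$ at non-diagonal configurations and need not stay between the two extremal essential cycles of the dual diagram; no monotone or conserved quantity is identified, so the claimed ``strict drift'' is a promissory note rather than an argument, and I do not believe it is true as stated.

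For contrast, the paper closes this gap without any drift argument. It shows that the dexter set $D$ is a union of trapezoids whose boundary (other than the main diagonal) consists only of unstable paths, pivots, and ``red'' vertical segments; a local analysis at pivots (the concave-bracket argument) forces each vertical slice $D(x)$ to be a single interval, so $D$ is a monotone annulus. Its upper boundary is then a staircase bounded above by a complete essential cycle $\phi$ of forward configurations; since $\phi$ is disjoint from every critical cycle and the river is the \emph{only} essential critical cycle besides the main diagonal, $\phi$ must lie beyond the river, so the river lies inside $D$. The symmetric statement for $S$ places the river inside $S$ as well, and the two annuli together cover the torus. If you want to rescue your contrapositive framing, the analogue would be to show that your trapped periodic trajectory, being an essential cycle of stable arcs and jumps disjoint from the main diagonal, would force a third essential critical cycle (or would have to coincide with the river, which is impossible because the river contains unstable configurations reachable into the dexter side) --- but that step is exactly what is missing from your proposal.
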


Before giving the proof, we emphasize that \cref{Th:good-catch} does not require  the track $\gamma$ to be simple.  Also, it is an open question whether having exactly two essential critical cycle curves is a \emph{necessary} condition for the human to always be able to catch the puppy.  (We conjecture that it is not.)

\begin{proof}
Fix a generic track $\gamma$ whose attraction diagram has exactly two essential critical cycles, which we call the \emph{main diagonal} and the \emph{river}.  Assume $\gamma$ has at least one pivot configuration, since otherwise, from any starting configuration, the puppy runs directly to the human.

Let $D$ be the set of all dexter configurations, and let $S$ be the set of all sinister configurations.  We claim that $D$ and $S$ are both annuli that contain both the main diagonal and the river.  Because $S$ and $D$ meet on opposite sides of the main diagonal, this claim implies that $D\cup S$ is the entire torus, completing the proof of the lemma.  We prove our claim explicitly for $D$; a symmetric argument establishes the claim for $S$.

\medskip
For purposes of argument, we partition the attraction diagram of $\gamma$ by extending vertical segments from each pivot configuration to the next critical cycles directly above and below.  We call the cells in this decomposition \emph{trapezoids}, even though their top and bottom boundaries may not be straight line segments.  At each forward pivot configuration $p$, we color the vertical segment above $(x,y)$ \emph{green} and the vertical segment below $p$ \emph{red}; the colors are reversed for backward vertical segments, see \cref{F:trapezoids}.

The first step of any strategy is a (possibly trivial) pivot onto a stable critical path.  Because the human and puppy can move freely within any stable critical path $\sigma$, either every point in $\sigma$ is dexter, or no point in $\sigma$ is dexter.  Similarly, for any green pivot segment $\pi$, either every point in $\pi$ is dexter or no point in $\pi$ is dexter.

Consider any trapezoid $\tau$, and let $\sigma$ be the stable critical path on its boundary.  Starting in any configuration in $\tau$, the puppy immediately moves to a configuration on $\sigma$.  Thus, if any point in $\tau$ is dexter, then $\sigma$ is dexter, which implies that \emph{every} point in $\tau$ is dexter.  Thus, we can describe entire trapezoids as dexter or not dexter.  It follows that $D$ is the union of trapezoids.

If two trapezoids share a stable critical path \emph{other than the main diagonal}, then either both trapezoids are dexter or neither is dexter.  Similarly, if the green pivot segment leaving a pivot configuration $p$ is dexter, then all four trapezoids incident to $p$ are dexter; otherwise, either two or none of these four trapezoids are dexter.

We conclude that aside from the main diagonal, the boundary of $D$ consists entirely of unstable critical paths, pivot configurations, and red vertical segments.  Moreover, for every pivot configuration $p$ on the boundary of~$D$, the green pivot segment leaving $p$ is \emph{not} dexter.

\begin{figure}[ht]
\centering
\includegraphics[scale=0.5]{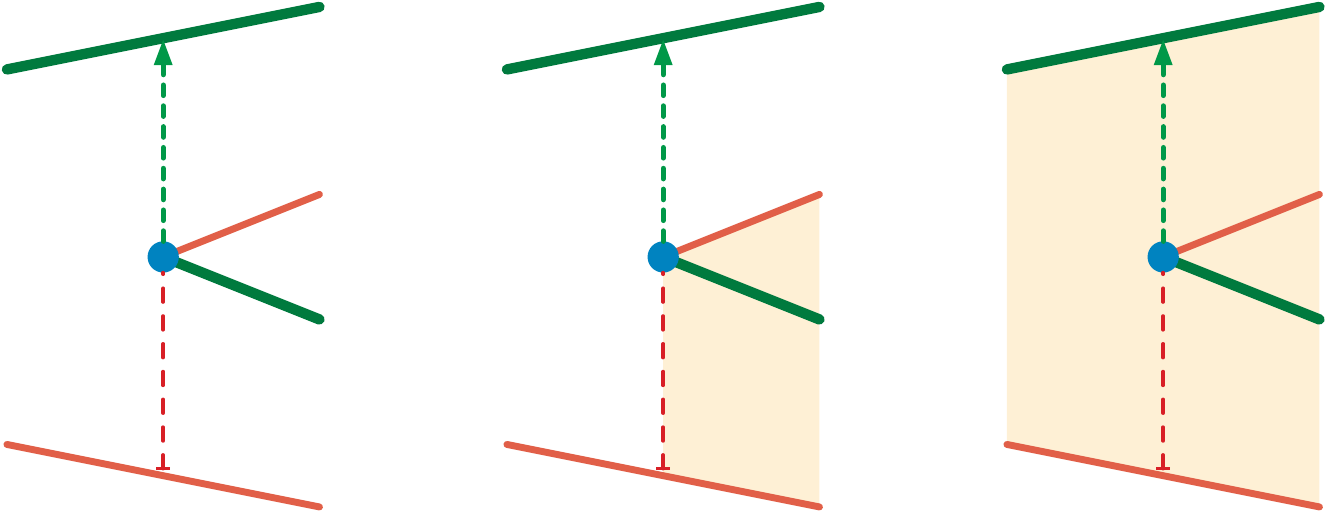}
\caption{Possible arrangements of dexter trapezoids near a forward pivot configuration.}
\label{F:trapezoids}
\end{figure}

By definition, every point in $D$ is connected by a (dexter) path to the main diagonal, so $D$ is non-empty and connected.  On the other hand, $D$ excludes a complete cycle of forward configurations just below the main diagonal.  For any $x \in S^1$, let $D(x)$ denote the set of dexter configurations $(x, y)$; this set consists of one or more vertical line segments in the attraction diagram.

Suppose for the sake of argument that some set $D(x)$ is disconnected.  Because~$D$ is connected, the boundary of $D$ must contain a \emph{concave vertical bracket}: A vertical boundary segment $\pi$ whose adjacent critical boundary segments both lie (without loss of generality) to the right of $\pi$, but $D$ lies locally to the left of $\pi$.  See \cref{F:bracket}.  Let $p$ be the pivot configuration at one end of $\pi$.  The green vertical segment on the other side of $p$ is dexter, which implies that \emph{all} trapezoids incident to $p$ are dexter, contradicting the assumption that $\pi$ lies  on the boundary of $D$.  We conclude that for all $x$, the set $D(x)$ is a single vertical line segment; in other words, $D$ is a \emph{monotone} annulus.

\begin{figure}[ht]
\centering
\includegraphics[scale=0.5]{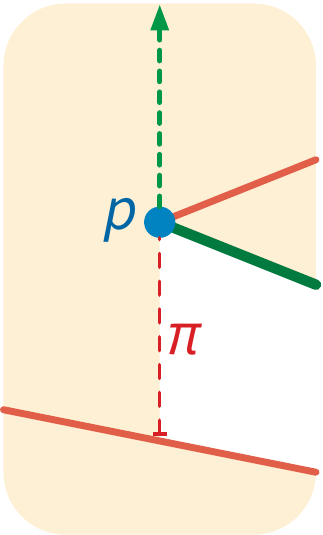}
\caption{A hypothetical concave vertical bracket on the boundary of $D$.}
\label{F:bracket}
\end{figure}

The bottom boundary of $D$ is the main diagonal.  The monotonicity of $D$ implies that the top boundary of $D$ is a monotone “staircase” alternating between upward red vertical segments and rightward unstable critical paths.  Every trapezoid immediately above the top boundary of $D$ contains only forward configurations.  Thus, there is a complete essential cycle $\phi$ of forward configurations just above the upper boundary of $D$.  Because $\phi$ contains only forward configurations, $\phi$ must lie entirely above the river.  It follows that $D$ contains the entire river. 

Symmetrically, $S$ is an annulus bounded above by the main diagonal and bounded below by a non-contractible cycle of backward configurations; in particular, the entire river lies inside $S$.  We conclude that $D\cup S$ is the entire configuration torus.
\end{proof}

If the attraction diagram of $\gamma$ has more than two essential critical cycles curves, then $D$ and $S$ are still monotone annuli, each bounded by the main diagonal and an essential cycle of red vertical segments and unstable paths, and thus $S$ and $D$ each contain at least one essential critical cycle other than the main diagonal.  However, $D\cup S$ need not cover the entire torus.

\begin{corollary}
The human can catch the puppy on any generic simple closed track, from any starting configuration.
\end{corollary}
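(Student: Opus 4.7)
The plan is to derive the corollary as an immediate composition of the two main structural results established earlier in the paper, namely \cref{L:simple-good} and \cref{Th:good-catch}. Since the corollary deals with generic \emph{simple} closed tracks, while \cref{Th:good-catch} applies to any generic track whose attraction diagram has exactly two essential critical cycles, the bridge between them is exactly the content of \cref{L:simple-good}.

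Concretely, I would proceed as follows. Fix an arbitrary generic simple closed track $\gamma\colon S^1\into \Real^2$ and an arbitrary starting configuration $(x_0,y_0)\in S^1\times S^1$. First, I would invoke \cref{L:simple-good} to conclude that the attraction diagram of $\gamma$ contains exactly two essential critical cycles, namely the main diagonal and the river. Then, with this hypothesis in hand, I would directly apply \cref{Th:good-catch} to the same curve $\gamma$, which yields that every configuration in $S^1\times S^1$ is either dexter or sinister. In particular, $(x_0,y_0)$ is dexter or sinister, so by the definitions given at the start of \cref{S:dexterity} there is an explicit path through the attraction diagram---a sequence of moves along stable critical paths and pivot segments---that brings the human and the puppy together in finite time.

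There is no real obstacle here: both ingredients are already in place and the corollary is a one-line deduction once they are chained together. The only thing worth being careful about is a trivial edge case already handled inside the proof of \cref{Th:good-catch}, namely when $\gamma$ has no pivot configuration at all; in that situation, from any starting configuration the puppy runs directly to the human along a stable critical path, so the statement holds a fortiori. Thus the corollary follows without any additional work, and it is the culmination of the smooth-curve analysis developed in \cref{S:smooth-diagrams,S:dexterity}.
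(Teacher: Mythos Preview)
Your proposal is correct and matches the paper's approach exactly: the corollary is stated without proof in the paper precisely because it is the immediate composition of \cref{L:simple-good} and \cref{Th:good-catch} that you describe. There is nothing to add.
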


% ——————————————————————————————————————————————
\section{Polygonal tracks}
\label{S:polygons}

Our previous arguments require, at a minimum, that the track has a continuous derivative that is never equal to zero. 
We now extend our results to polygonal tracks, which do not have well-defined tangent directions at their vertices.

\subsection{Polygonal attraction diagrams}

Throughout this section, we fix a simple polygonal track $P$ with $n$ vertices.
We regard $P$ as a continuous piecewise-linear function $P\colon S^1\to \Real^2$, parametrized by arc length.  Without loss of generality $P(0)$ is a vertex of the track.  
We index the vertices and edges of $P$ in order, starting with $v_0 = P(0)$, where edge $e_i$ connects $v_i$ to $v_{i+1}$; all index arithmetic is implicitly performed modulo $n$.

To properly describe the puppy's behavior, we must also account for the direction that the puppy is facing, even when the puppy lies at a vertex.  To that end, we represent the track using both a continuous \emph{position} function $\pi\colon S^1\to \Real$ and a continuous \emph{direction} function $\theta\colon S^1\to S^1$, such that for all $y\in S^1$, the derivative vector $\pi'(y)$ is a non-negative scalar multiple of the unit vector $\theta(y)$.  
Intuitively, as we increase~$y$, the puppy alternately moves at constant speed along edges (when $\pi'(y)$ is a positive multiple of $\theta(y)$) and continuously turns at constant speed at vertices (when $\pi'(y) = 0$).

We classify any human-puppy configuration $(x,y) \in S^1\times S^1$ as \emph{forward}, \emph{backward}, or \emph{critical}, if the dot product $(P(x) - \pi(y))\cdot \theta(y)$ is negative, positive, or zero, respectively.  In any forward configuration $(x,y)$, the puppy moves to increase the parameter $y$; in any backward configuration, the puppy moves to decrease the parameter $y$.  (The human's direction is irrelevant.)  The \emph{attraction diagram} is the set of all critical configurations $(x,y)\in S^1\times S^1$.  We further classify critical configurations $(x,y)$ as follows:
\begin{itemize}\itemsep0pt
\item \emph{final} if $P(x) = \pi(y)$,
\item \emph{stable} if $(x, y-\e)$ is forward and $(x, y+\e)$ is backward for all suffic. small $\e>0$,
\item \emph{unstable} if $(x, y-\e)$ is backward and $(x, y+\e)$ is forward for all suffic. small $\e>0$,
\item \emph{forward pivot} if $(x, y-\e)$ and $(x, y+\e)$ are both forward for all suffic. small $\e>0$, or 
\item \emph{backward pivot} if $(x, y-\e)$ and $(x, y+\e)$ are both backward for all suffic. small $\e>0$.
\end{itemize}
Straightforward case analysis implies that this classification is exhaustive.  

To define the attraction diagram of $P$, we decompose the torus $S^1\times S^1$ into a $2n\times n$ grid of rectangular cells, where each column corresponds to an edge $e_j$ containing the human, and each row corresponds to either a vertex $v_i$ or an edge $e_i$ containing the puppy.  
The \emph{main diagonal} of the attraction diagram is the set of all final configurations. Strictly speaking, in this case the ``main diagonal'' is not just a straight line, but consists of alternating diagonal and vertical segments.
We can characterize the critical points inside each cell as follows:

Each edge-edge cell $e_i\times e_j$ contains at most one boundary-to-boundary path of stable critical configurations $(x,y)$. 
Refer to \cref{F:edge-edge}.

\begin{figure}[ht]
\centering
\includegraphics[scale=0.4]{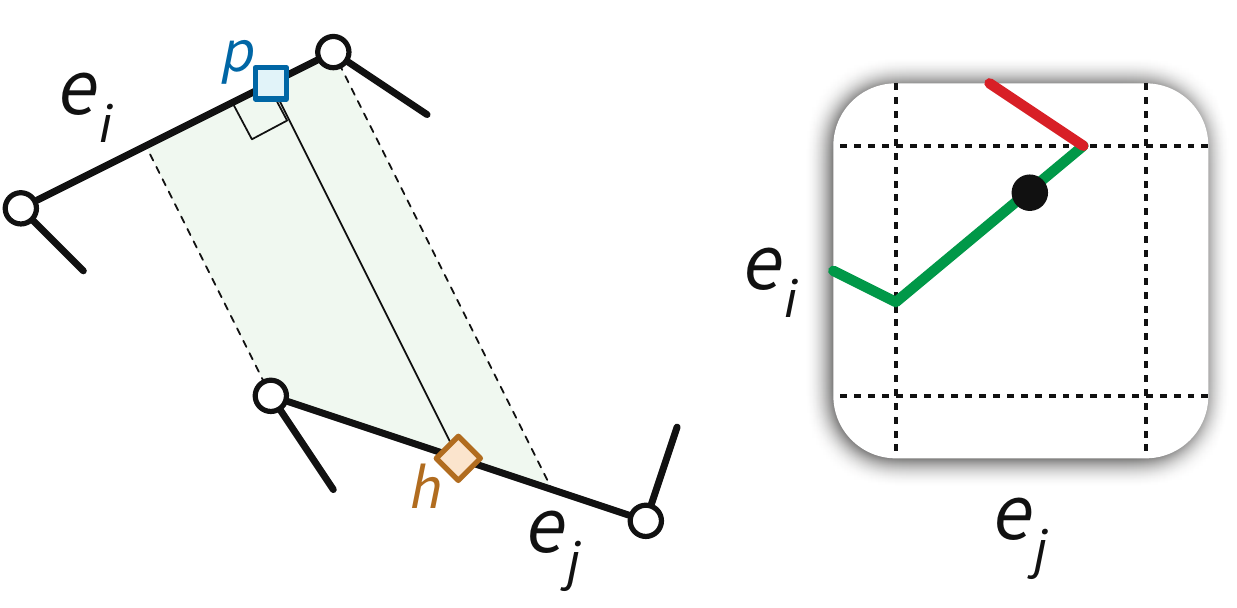}
\caption{All edge-edge critical configurations are stable.}
\label{F:edge-edge}
\end{figure}

Each vertex-edge cell $v_i\times e_j$ contains at most one boundary-to-boundary path of stable critical configurations and at most one boundary-to-boundary path of unstable critical configurations.  If the cell contains both paths, they are disjoint.  A configuration $(x,y)$ with $\pi(y) = v_i$ is stable if and only if $P(x)$ lies in the outer normal cone at $v_i$, and unstable if and only if $P(x)$ lies in the inner normal cone at $v_i$; see \cref{F:vertex-edge}.  

\begin{figure}[ht]
\centering
\includegraphics[scale=0.4]{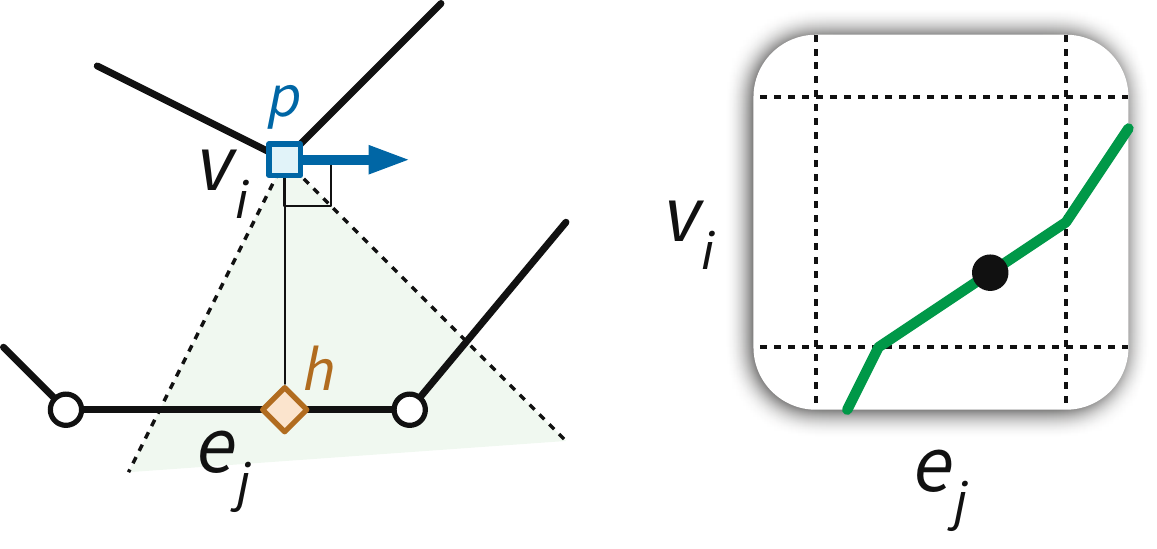}\qquad\qquad
\includegraphics[scale=0.4]{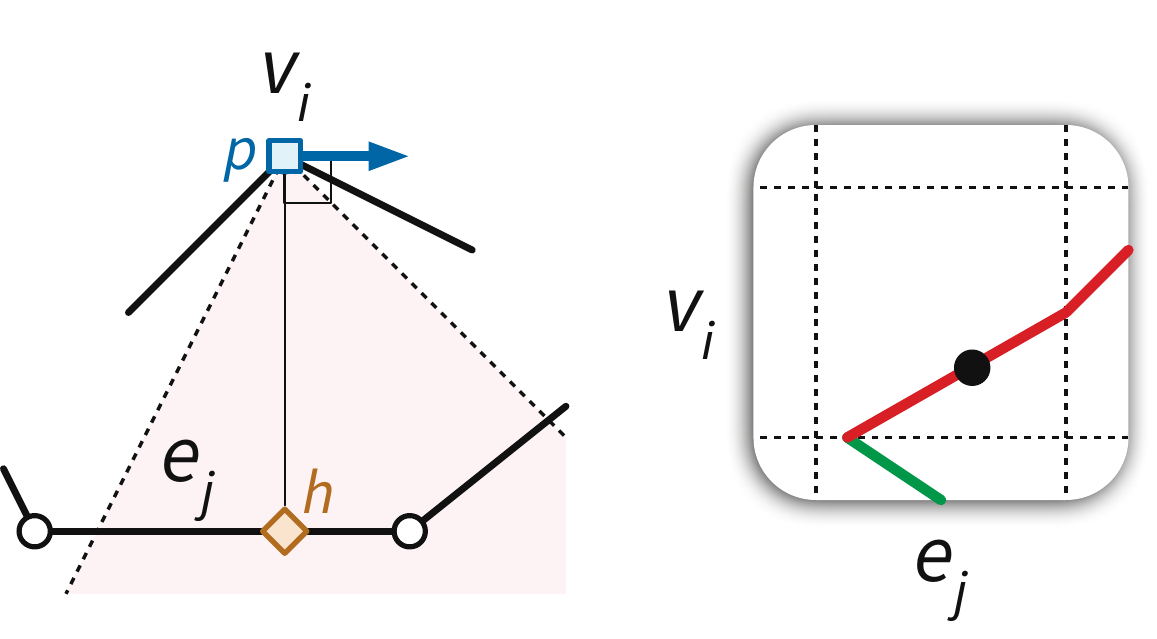}
\caption{Stable and unstable vertex-edge critical configurations.}
\label{F:vertex-edge}
\end{figure}

\subsection{Polygonal pivot configurations}

Unlike the attraction diagrams of generic smooth curves defined in \cref{sec:puppydiagram}, the attraction diagrams of polygons are not always well-behaved.  In particular, a pivot configuration may be incident to more (or fewer) than two critical curves, and in extreme cases, pivot configurations need not even be discrete.  We call such a configuration a \emph{degenerate} pivot configuration.

In any pivot configuration $(x,y)$, the puppy $\pi(y)$ lies at some vertex $v_i$, the  puppy's direction $\theta(y)$ is parallel to either $e_i$ (or $e_{i+1}$).
Generically, each pivot configuration is a shared endpoint of an unstable critical path in cell $v_i\times e_j$ and a stable critical path in cell $e_i\times e_j$ (or $e_{i-1}\times e_j$); see \cref {F:vertex-edge-pivot}.

\begin{figure}[ht]
\centering
\includegraphics[scale=0.4]{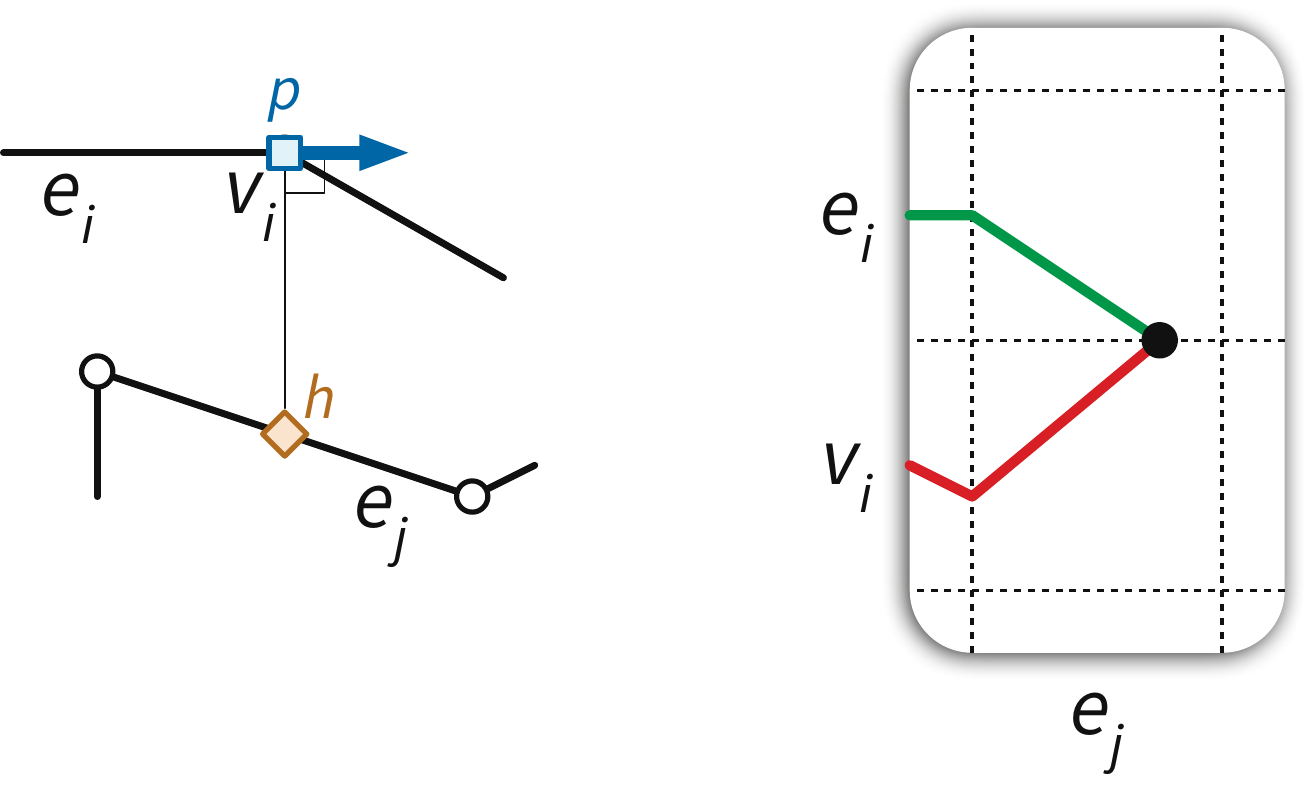}
\caption{Near a non-degenerate pivot configuration.}
\label{F:vertex-edge-pivot}
\end{figure}

There are three distinct ways in which degenerate pivot configurations can appear.

A \EMPH{type-1 degeneracy} is caused by an acute angle on $P$.
Specifically,
let $v_i$ be a vertex of $P$.
The configuration $(x, y)$ with $P(x) = \pi(y) = v_i$ is degenerate if the angle between $e_{i-1}$ and $e_i$ is strictly acute.
In the attraction diagram of a type-1 degeneracy, two stable critical curves and two unstable critical curves end on a single vertical section of the main diagonal (corresponding to the human and the puppy being both at $v_i$, but the puppy facing in different directions). 
Refer to \cref{F:degen1}.

\begin{figure}[ht]
\centering
\includegraphics[scale=0.4]{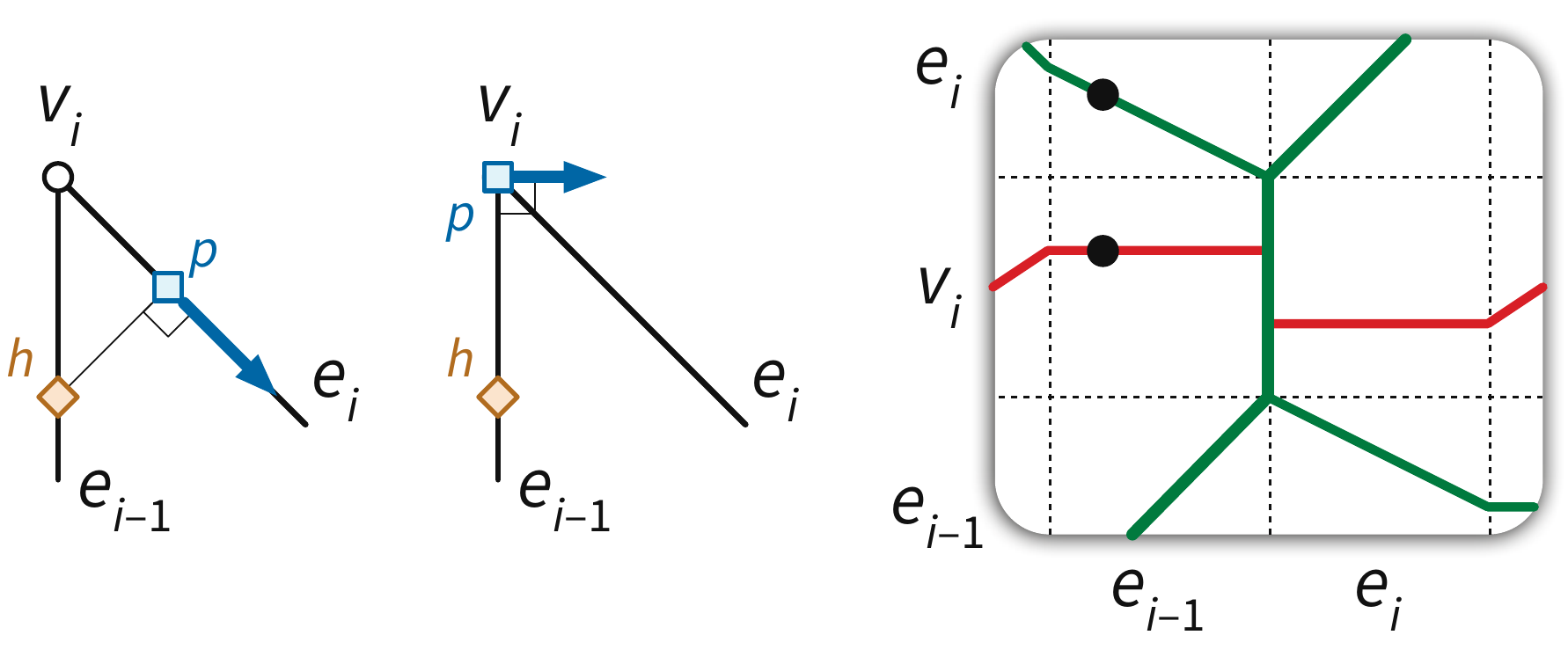}
\caption{Stable and unstable configurations near an acute vertex angle.}
\label{F:degen1}
\end{figure}

A \EMPH{type-2 degeneracy} is caused by a more specific configuration.
Let $e_i$ be an edge of~$P$, and let $\ell$ be the line perpendicular to $e_i$ through $v_i$ (or, symmetrically, through $v_{i+1}$). 
Let $v_j$ be another vertex of $P$ which lies on $\ell$. 
The configuration $(x, y)$ with $P(x) = v_j$ and $\pi(y) = v_i$ is degenerate if:
\begin {itemize} [noitemsep]
  \item $v_{i-1}$ and $v_j$ lie in the same open halfspace of the supporting line of $e_i$; \textbf{and}
  \item $v_{j-1}$ and $v_{j+1}$ lie in the same open halfspace of $\ell$.
\end {itemize}
A type-2 degeneracy corresponds to a vertex (pivot configuration) of degree 4 or 0 in the attraction diagram. We distinguish these further as \emph{type-2a} and \emph{type-2b}.
Refer to \cref{F:degen2}.

\begin{figure}[ht]
\centering
\includegraphics[scale=0.4]{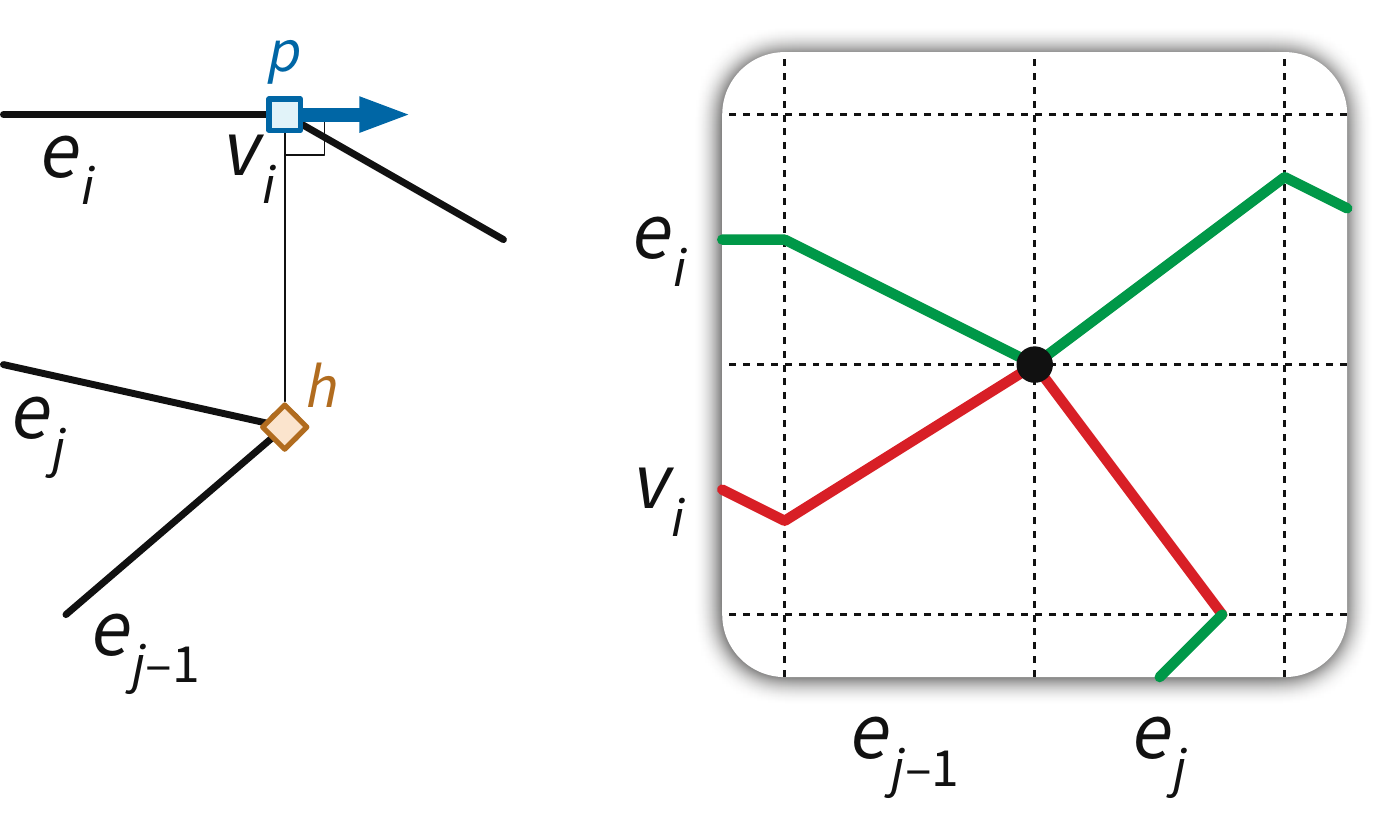}
\hfil
\includegraphics[scale=0.4]{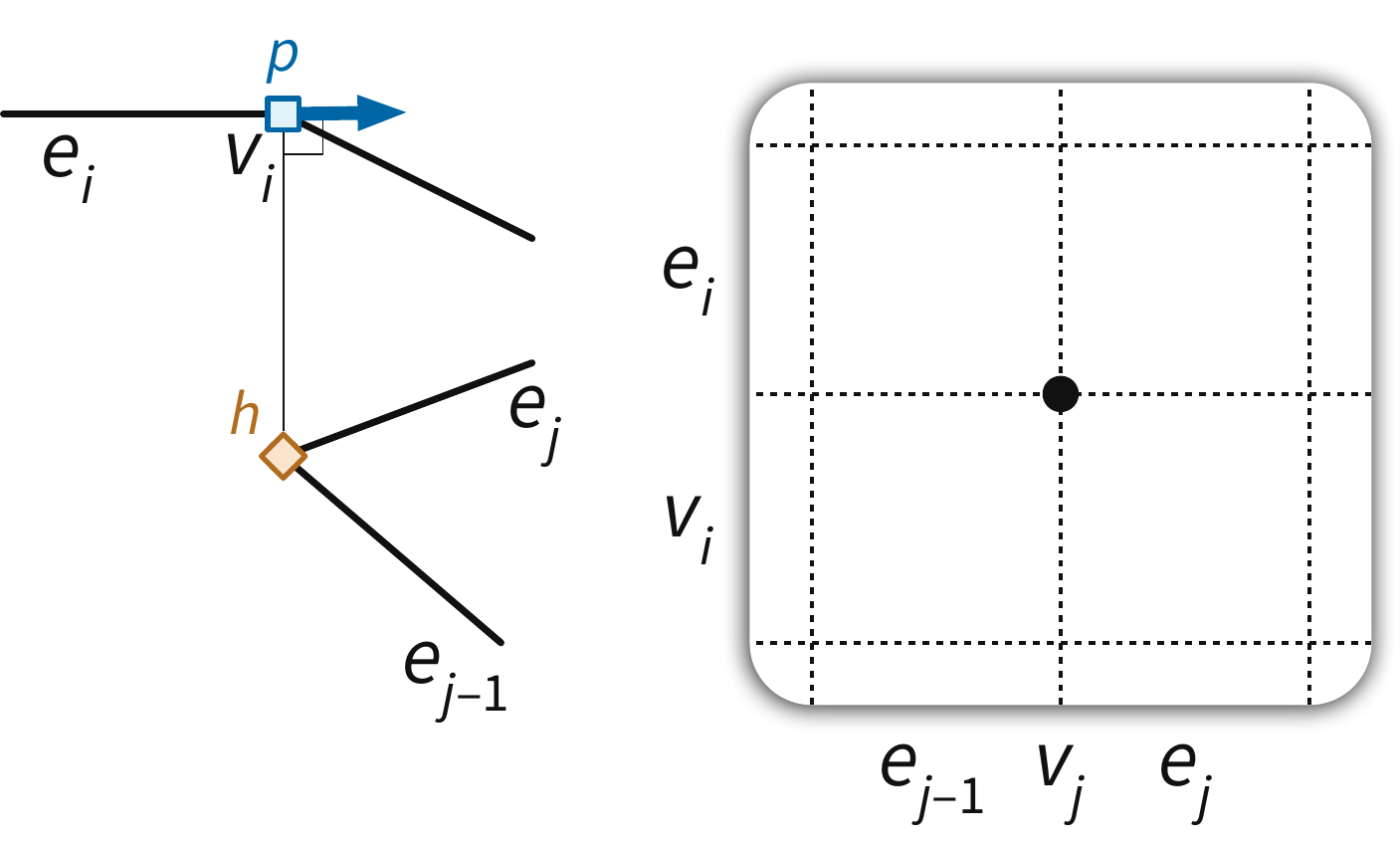}
\caption{Type-2a and type-2b degenerate pivot configurations.}
\label{F:degen2}
\end{figure}

Finally, a \EMPH{type-3 degeneracy} is essentially a limit of both of the previous types of degeneracies.  Let $e_i$ be an edge of $P$, let $\ell$ be the line perpendicular to $e_i$ through $v_i$, and let $e_j$ be another edge of $P$ which lies on $\ell$. 
The configuration $(x, y)$ with $P(x) \in e_j$ and $\pi(y) = v_i$ is degenerate if vertices  $v_{i-1}$ and $v_j$ lie in the same open halfspace of the supporting line of $e_i$.
When this degeneracy occurs, pivot configurations are not discrete, because the point $P(x) \in e_j$ can be chosen arbitrarily.  Moreover, the vertex-vertex configurations $(v_j, v_i)$ and $(v_{j-1}, v_i)$ have odd degree in the attraction diagram.  
A type-3 degeneracy can be connected to (2 or more) other critical curves, or be isolated. We distinguish these further as \emph{type-3a} and \emph{type-3b}.
See \cref{F:degen3}.

\begin{figure}[ht]
\centering
\raisebox{-0.5\height}{\includegraphics[scale=0.4]{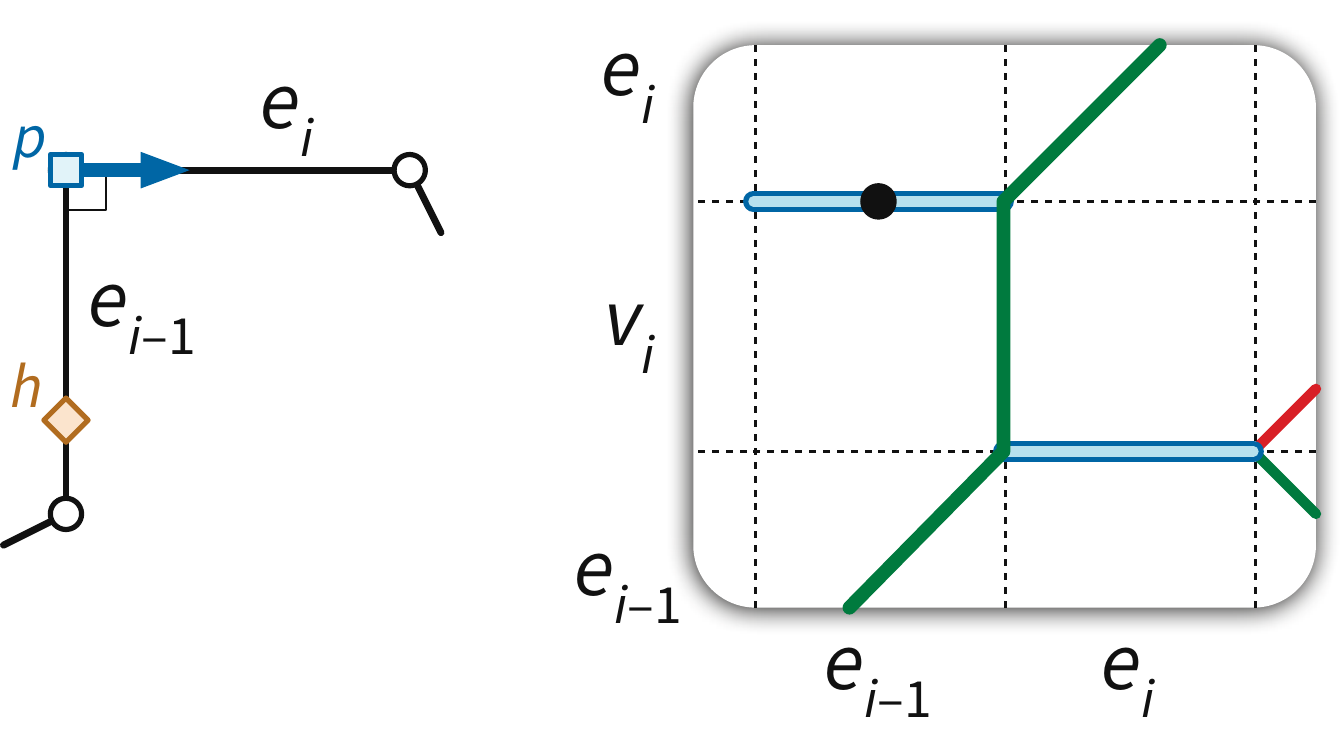}}
\hfil
\raisebox{-0.5\height}{\includegraphics[scale=0.4]{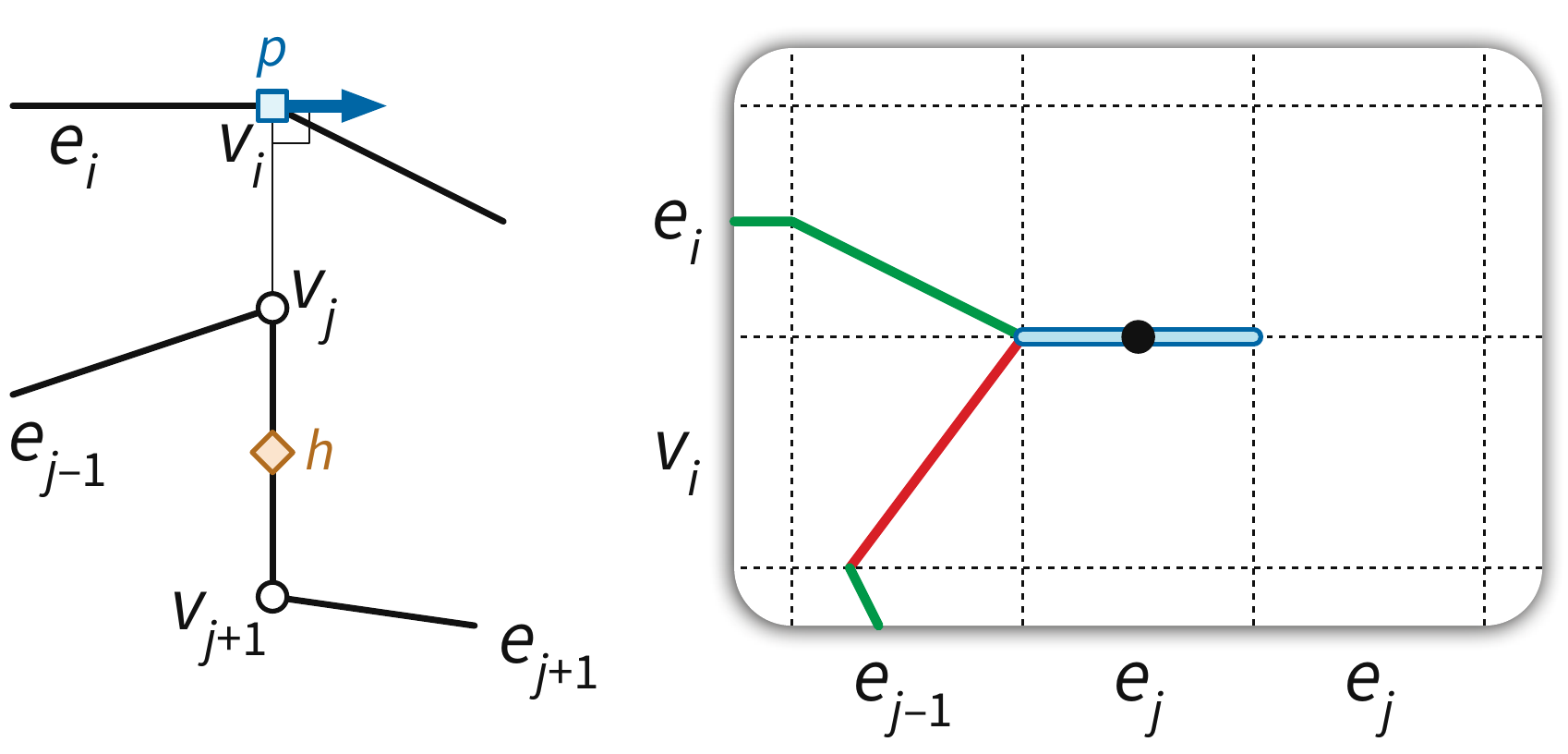}}
\caption{Type-3 degenerate pivot configurations on incident and non-incident edge pairs.}
\label{F:degen3}
\end{figure}

In \cref {sec:genericpolygons} we first consider polygonal tracks which do not have any degeneracies of these three types.  To simplify exposition, we forbid degeneracies by assuming that no vertex angle in $P$ is acute and that no three vertices of $P$ define a right angle.  In \cref {sec:degeneratepolygons} we lift these assumptions by \emph{chamfering} the polygon, cutting off a small triangle at each vertex.

\subsection{Catching puppies on generic obtuse polygons}
\label{sec:genericpolygons}

Generic obtuse polygonal tracks behave almost identically to smooth tracks, once we properly define the attraction diagram and dual attraction diagram.

\begin{lemma}\label{lem:nodedegeracy}
Let $P$ be a simple polygon with no acute vertex angles, in which no three vertices define a right angle.  The attraction diagram of $P$ is the union of disjoint simple critical cycles.
\end{lemma}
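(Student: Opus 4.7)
The plan is to analyze the attraction diagram cell-by-cell on the $2n \times n$ grid introduced just before the lemma, verify that inside each cell the critical locus is a disjoint union of at most two simple arcs with endpoints on the cell boundary, and then show that under the hypotheses these arcs match up along cell boundaries into a closed $1$-manifold in the torus, which is necessarily a disjoint union of simple closed curves.

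First I would invoke the characterizations already recorded in the excerpt. Each edge-edge cell $e_i \times e_j$ contains at most one boundary-to-boundary path of stable critical configurations (\cref{F:edge-edge}), since the equation $(P(x)-\pi(y))\cdot\theta(y)=0$ restricts to an affine condition in the two arc-length parameters on that cell. Each vertex-edge cell $v_i \times e_j$ contains at most one stable and at most one unstable boundary-to-boundary path, disjoint if both present, because a configuration with $\pi(y)=v_i$ is stable exactly when $P(x)$ lies in the outer normal cone at $v_i$ and unstable exactly when $P(x)$ lies in the inner normal cone (\cref{F:vertex-edge}). Hence the critical set within any single cell is already a finite disjoint union of simple arcs whose endpoints lie on the cell's boundary.

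Next I would argue that, under the two hypotheses, the arcs concatenate cleanly at cell boundaries. A vertical cell boundary $\{x=x_0\} \times e_i$ corresponds to the human sitting at a vertex $v_j$ while the puppy traverses $e_i$; here the critical equation degenerates to the puppy being at the foot of the perpendicular from $v_j$ to $e_i$, and transversality across this boundary is precisely the condition that $v_j, v_i, v_{i+1}$ do not form a right angle, which is forbidden. A horizontal cell boundary $e_j \times \{y=y_i\}$ corresponds to the puppy crossing vertex $v_i$; the absence of acute angles at $v_i$ (\cref{F:degen1}) ensures that the outer and inner normal cones at $v_i$ are separated by the strict reflex halfplane, so the stable arc in $e_{i-1}\times e_j$ (respectively $e_i\times e_j$) meets the stable segment on the horizontal boundary at exactly one point, and similarly for the unstable arcs. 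Thus each arc endpoint on a cell boundary is matched by exactly one arc endpoint in the adjacent cell.

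The remaining possible pathology is a pivot configuration of anomalous degree, which by the classification given just before the lemma would be a type-1, type-2, or type-3 degeneracy. A type-1 degeneracy requires an acute vertex angle in $P$; a type-2 degeneracy requires that a vertex $v_j$ lie on the perpendicular to an edge $e_i$ at one of its endpoints $v_i$, with $v_{j-1}$ and $v_{j+1}$ on the same side of that perpendicular, which forces the triple $v_j, v_i, v_{i\pm 1}$ to form a right angle; a type-3 degeneracy requires an entire edge $e_j$ to lie on such a perpendicular, which again produces a triple of vertices defining a right angle. All three possibilities are excluded by the two hypotheses, so every pivot configuration is generic: exactly one stable arc (in an edge-edge cell) and one unstable arc (in the adjacent vertex-edge cell) meet there, as in \cref{F:vertex-edge-pivot}, yielding local degree two. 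Combined with the cell-interior analysis, the critical set is then a closed topological $1$-manifold embedded in the torus with no boundary and no branch points, hence a disjoint union of simple closed curves.

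The main obstacle is the bookkeeping at pivot configurations, where naively many critical arcs from different cells could try to converge. I would handle this by observing that a pivot configuration forces the human to lie on the perpendicular to an edge $e_i$ at one of its endpoints $v_i$; the set of edges and vertices of $P$ that can host the human at such a perpendicular is discrete and, away from the excluded right-angle and acute-angle configurations, each contributes exactly one incoming stable arc and one outgoing unstable arc. Once this local degree-two property is verified, the global conclusion that the diagram is a disjoint union of simple closed curves follows formally from the fact that a compact $1$-manifold without boundary is a finite union of circles.
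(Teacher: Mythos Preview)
Your proposal is correct and follows essentially the same cell-by-cell approach as the paper: analyze the critical locus inside each edge-edge and vertex-edge cell, verify that the hypotheses rule out the degenerate pivot types, and conclude that the critical set is a compact $1$-manifold and hence a disjoint union of circles. The paper argues slightly more directly---it checks that critical paths avoid cell corners (using the ``no right angle'' hypothesis) and that the main diagonal is isolated (using the ``no acute angle'' hypothesis) without explicitly invoking the type-$1$/$2$/$3$ taxonomy or the $1$-manifold classification---but the substance is the same. One small wrinkle: your sentence about horizontal cell boundaries attributes the matching there to the obtuse-angle hypothesis, but in fact the obtuse-angle hypothesis is only needed to keep the main diagonal clean (type~$1$), while matching at other horizontal boundaries is governed by the non-degeneracy of pivot configurations (the right-angle hypothesis); since you handle pivots separately anyway, this does not affect the validity of the argument.
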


\begin{proof}
Each edge-edge cell $e_i\times e_j$ contains at most one section of stable critical configurations $(x,y)$ (\cref {F:edge-edge}).  For each such configuration, the points $\pi(y)\in e_i$ and $P(x)\in e_j$ are connected by a line perpendicular to $e_i$.  Because no three vertices of $P$ define a right angle, these points cannot both be vertices of $P$; thus, any critical path inside the cell $e_i\times e_j$ avoids the corners of that cell. 

Each vertex-edge cell $v_i\times e_j$ contains at most one section of a stable and one section of an unstable path (\cref {F:vertex-edge}). Again, because no three vertices of $P$ define a right angle, these paths avoid the corners of the cell $v_i\times e_j$.

For every pivot configuration $(x,y)$, the puppy $\pi(y)$ lies at a vertex $v_i$, the  puppy's direction $\theta(y)$ is parallel to either $e_i$ (or $e_{i+1}$), and because no three vertices of $P$ form a right angle, the human $P(x)$ lies in the interior of some edge $e_j$.  Since we avoid degenerate pivot configurations, each pivot configuration is a shared endpoint of an unstable critical path in cell $v_i\times e_j$ and a stable critical path in cell $e_i\times e_j$ (or $e_{i-1}\times e_j$).

It now follows that the set of unstable critical configurations is the union of monotone paths whose endpoints are pivot configurations.  Similarly, the set of stable critical configurations is also the union of monotone paths whose endpoints are pivot configurations.  Each unstable critical path lies in a single vertex strip.

Because every vertex angle in $P$ is obtuse, every configuration $(x,y)$ where the human $P(x)$ lies on an edge $e_i$ and the puppy $\pi(y)$ lies on the previous edge $e_{i-1}$ is either stable of final.  In particular, the only non-pivot critical configurations $(x,y)$ with $\pi(y) = v_i$ are the final configurations with $P(x) = v_i$.  Thus, the main diagonal is disjoint from all other critical cycles; in fact, no other critical cycle intersects any grid cell that touches the main diagonal.

\begin{figure}[ht]
\centering
\includegraphics[scale=0.4]{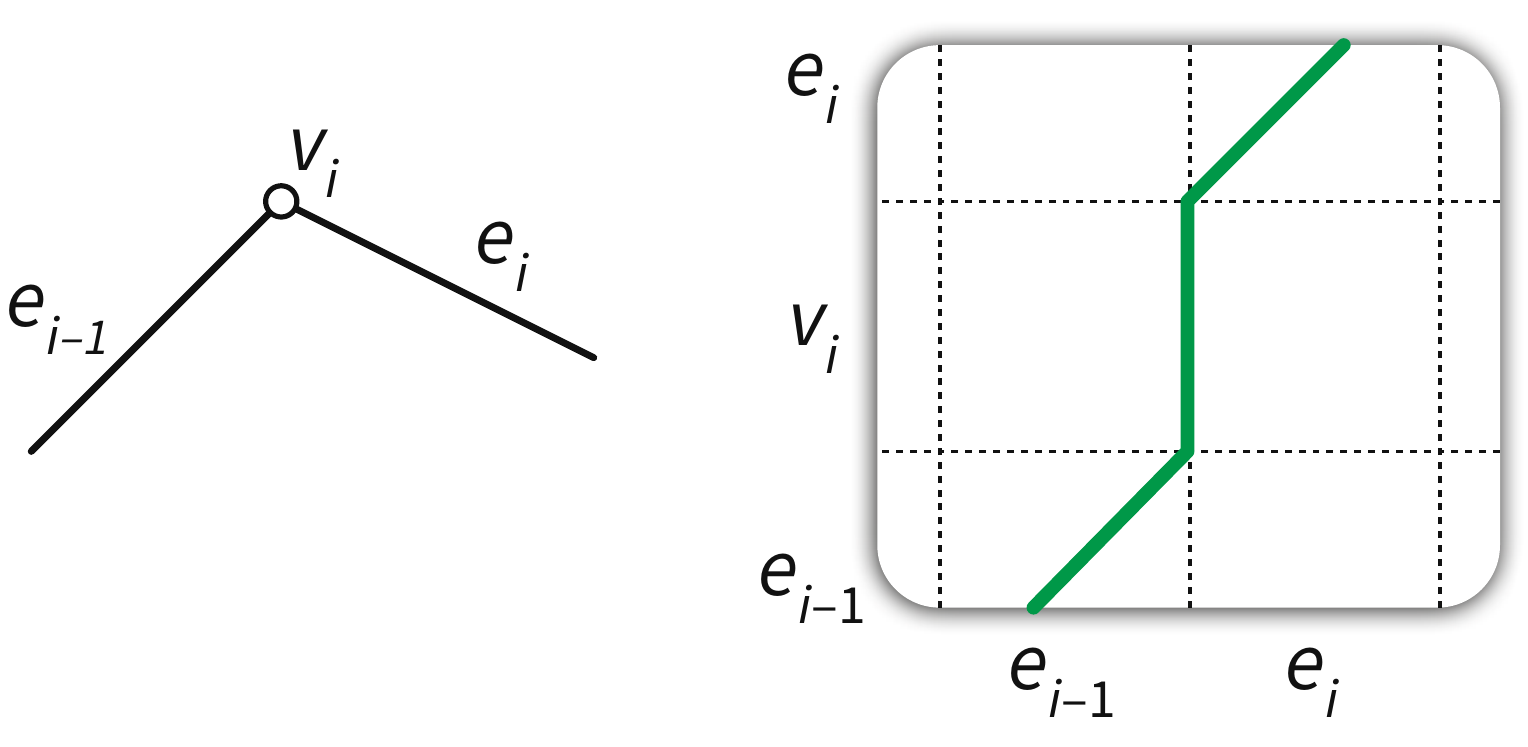}
\caption{Near the main diagonal.}
\label{F:main-diagonal}
\end{figure}

This completes the classification of all critical configurations.  We conclude that the attraction diagram consists of the (simple, closed) main diagonal and possibly other simple closed  curves composed of  stable and unstable critical paths meeting at pivot configurations.  All these critical cycles are disjoint.
\end{proof}

\begin{lemma}
Let $P$ be a simple polygon with no acute vertex angles, in which no three vertices define a right angle.  If the attraction diagram of $P$ has exactly two essential critical cycles, then the human can catch the puppy on $P$, starting from any initial configuration.
\end{lemma}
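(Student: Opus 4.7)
The plan is to adapt the proof of \cref{Th:good-catch} to the polygonal setting essentially verbatim, relying on \cref{lem:nodedegeracy} to supply the structural facts that were previously derived from smoothness. That lemma guarantees that for an obtuse polygon with no right-angle triples, the attraction diagram is a disjoint union of simple closed cycles whose local architecture matches the smooth case: stable and unstable monotone critical paths meet in pairs at pivot configurations, and the main diagonal is disjoint from all other cycles (in fact isolated inside a strip of cells, so pivots cannot ``leak'' onto it). This structural match is exactly what the smooth-case argument consumed, so the overall skeleton should port across unchanged.

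First I would define dexter and sinister configurations exactly as in \cref{S:dexterity}: a strategy is a path in the attraction diagram alternating between traversals of stable critical paths and vertical pivot segments, and it is \emph{dexter} (resp.\ \emph{sinister}) if its final pivot is backward and lands on the main diagonal from above on the right (resp.\ forward and lands from below on the left). Let $D$ and $S$ be the sets of dexter and sinister configurations. The claim to prove is $D \cup S = S^1 \times S^1$.

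Next I would build the trapezoidal decomposition by extending vertical segments from each pivot configuration to the critical cycles directly above and below, colored green/red exactly as in \cref{F:trapezoids}. The local propagation rules then carry over without change: stable critical paths and entire trapezoids are uniformly dexter or uniformly not; trapezoids sharing a non-diagonal stable path share their dexter status; and at a pivot configuration $p$ whose outgoing green segment is dexter, all four incident trapezoids are dexter. With these in hand I would run the monotonicity argument: $D$ is non-empty and connected by construction; any concave vertical bracket on $\partial D$ would force the green pivot segment at its endpoint to be dexter, contradicting its presence on the boundary; so each fiber $D(x)$ is a single vertical interval, making $D$ a monotone annulus bounded below by the main diagonal and above by a staircase of red segments and unstable paths. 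Since the trapezoids just above this staircase contain only forward configurations, they form an essential forward cycle, which must lie above the river; hence $D$ contains the river. A symmetric argument handles $S$, and because $D$ and $S$ abut the main diagonal from opposite sides, their union is the whole torus.

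The step I expect to require the most care is checking that the main diagonal behaves as a ``bottom boundary'' of $D$ in the polygonal sense. Unlike in the smooth case, the main diagonal of a polygon alternates between truly diagonal segments (along edge-edge cells) and vertical segments (corresponding to the puppy rotating in place at a vertex while the human stands still). I would need to verify that the no-acute-angle hypothesis, which by \cref{lem:nodedegeracy} forces every configuration in a cell touching the main diagonal to be stable or final, really does make the pivot-onto-diagonal operation well-defined at every $x \in S^1$, so that ``landing dexter'' and ``landing sinister'' partition the approach directions as in the smooth case. Once this is secured, the staircase argument and the conclusion $D\cup S = S^1\times S^1$ follow exactly as in \cref{Th:good-catch}.
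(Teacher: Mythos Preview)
Your proposal is correct and matches the paper's approach exactly: the paper states this lemma without a separate proof and simply remarks that ``the remainder of the proof is essentially unchanged from the smooth case,'' i.e., one re-runs the dexter/sinister argument of \cref{Th:good-catch} after \cref{lem:nodedegeracy} supplies the needed structure. Your extra care about the polygonal main diagonal (its alternating diagonal/vertical shape and its isolation from other critical cycles) is appropriate and already justified by the last paragraph of the proof of \cref{lem:nodedegeracy}.
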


The remainder of the proof is essentially unchanged from the smooth case.  For any configuration $(x,y)$, let $T(y)$ denote the directed “tangent” line through $\pi(y)$ in direction $\theta(y)$, and let $L(x,y)$ denote the signed distance from $P(x)$ to $T(y)$, signed positively if $P(x)$ lies to the left of $T(y)$ and negatively if $P(x)$ lies to the right of $T(y)$.  The \emph{dual attraction diagram} of $P$ consists of all points $(y, L(x,y)) \in S^1\times \Real$ where $(x,y)$ is a critical configuration.  As in the smooth case, the map $(x,y) \mapsto (y, L(x,y))$ is a homeomorphism from the critical cycles in the attraction diagram to the curves in the dual attraction diagram; moreover, this map preserves the contractibility of each critical cycle.  

\begin{lemma}
\label{L:polygon-simple-two}
Let $P$ be a simple polygon with no acute vertex angles, in which no three vertices define a right angle.  The attraction diagram of $P$ contains exactly two essential critical cycles.
\end{lemma}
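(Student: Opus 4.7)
The plan is to mirror the proof of Lemma~\ref{L:simple-good} from the smooth case, using the structural information for polygons now at hand. Lemma~\ref{lem:nodedegeracy} guarantees that the attraction diagram of $P$ is a disjoint union of simple critical cycles, and the paragraph preceding the statement supplies a contractibility-preserving homeomorphism between the attraction and dual attraction diagrams, so the essential-cycle counts on the two sides agree. The exact count of two will follow from a matching lower bound (parity plus essentiality of the main diagonal) and upper bound (a convex-hull argument carried out in the dual diagram).

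For the lower bound I would observe that Lemma~\ref{L:even} transfers directly. Its proof, as the authors flag, relies only on the critical cycles being simple and disjoint and on the critical set being the forward/backward boundary on the torus; both hold here by Lemma~\ref{lem:nodedegeracy} and the definitions at the start of \cref{S:polygons}. The main diagonal (the set of final configurations) is an essential cycle, since it projects surjectively onto each $S^1$ factor, so the total number of essential cycles is even and hence at least two.

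For the upper bound I would pick a vertex $v_k$ on the convex hull of $P$ and set $x_0 \in S^1$ with $P(x_0) = v_k$. Let $C$ be any essential critical cycle of the attraction diagram. The parity argument of \cref{L:even} shows that $C$ crosses the horizontal fiber $\{x = x_0\}$, so $C$ contains some configuration $(x_0, y_0)$. Criticality places $v_k$ on the line $N(y_0)$ through $\pi(y_0)$ perpendicular to $\theta(y_0)$. Since $v_k$ lies on the boundary of $\mathrm{conv}(P)$, it is an extreme point of the chord $N(y_0) \cap \mathrm{conv}(P)$, and hence of the finite set $P \cap N(y_0)$, along the line $N(y_0)$. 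Consequently $L(x_0, y_0)$ realizes either the maximum or the minimum of $L(\cdot, y_0)$ over all critical configurations with second coordinate $y_0$, so the image of $(x_0, y_0)$ is the topmost or bottommost critical point of the dual diagram over the fiber $\{y = y_0\}$. Because essential cycles on the infinite cylinder $S^1 \times \Real$ are vertically totally ordered, the image of $C$ must coincide with one of the two extreme essential cycles, bounding the count by two.

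The main obstacle is handling the finer combinatorial structure of polygonal criticality at vertex rows of the diagram: multiple parameter values $y$ can share the same puppy point $\pi(y) = v_i$, with $\theta$ sweeping a wedge of directions. I expect this to be harmless because the argument is carried out entirely on the human side --- extremality of $v_k$ in $P \cap N(y_0)$ along $N(y_0)$ depends only on the line $N(y_0)$, not on how $\theta(y_0)$ is chosen within its wedge --- and the hypotheses that $P$ has no acute vertex angle and no three vertices forming a right angle rule out the tangential incidences of $N(y_0)$ with $v_k$ that would otherwise make the extremality vacuous.
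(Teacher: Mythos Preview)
Your proposal is correct and matches the paper's approach exactly: the paper offers no separate proof for this lemma, stating only that ``the remainder of the proof is essentially unchanged from the smooth case,'' and you have faithfully reproduced the argument of Lemma~\ref{L:simple-good} with the appropriate polygonal substitutions (Lemma~\ref{lem:nodedegeracy} for the cycle structure, a convex-hull vertex $v_k$ in place of a smooth convex-hull point with unique tangent, and the polygonal map $L$ defined just before the lemma). Your explicit handling of the vertex-row subtlety and of the extremality of $v_k$ along $N(y_0)$ goes slightly beyond what the paper spells out, but is consistent with its intended argument.
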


\begin{theorem}
\label{thm:genericpolygons}
Let $P$ be a simple polygon with no acute vertex angles, in which no three vertices define a right angle.  The human can catch the puppy on $P$, starting from any initial configuration.
\end{theorem}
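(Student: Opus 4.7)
The proof is a direct combination of the three preceding results of this subsection. Given any simple polygon $P$ satisfying the stated hypotheses, \cref{lem:nodedegeracy} guarantees that the attraction diagram of $P$ is a disjoint union of simple closed critical cycles, and \cref{L:polygon-simple-two} guarantees that exactly two of these cycles are essential, namely the main diagonal and a second cycle playing the role of the ``river.'' The intermediate (unlabeled) lemma just above \cref{L:polygon-simple-two}, which is the polygonal analogue of \cref{Th:good-catch}, then immediately yields that the human can catch the puppy from any starting configuration. No new argument is needed at the level of the theorem itself; the work is in the supporting lemmas.

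My plan for the intermediate lemma is to reuse the proof of \cref{Th:good-catch} essentially verbatim. Partition the attraction diagram into trapezoidal cells by extending vertical segments from each pivot configuration, color each vertical half-pivot green or red as in \cref{F:trapezoids}, and define the dexter set $D$ and sinister set $S$ accordingly. Every argument used in the smooth case to show that $D$ is a monotone annulus containing both the main diagonal and the river transfers word-for-word once we know that stable and unstable critical paths meet only at isolated pivot configurations---which is the content of \cref{lem:nodedegeracy} under the ``no acute vertex / no right-angle triple'' hypotheses. The only cosmetic issue is that the main diagonal in the polygonal diagram is not a straight line but alternates between diagonal and vertical segments; this is harmless, since the dexter/sinister argument uses only that the main diagonal is a single essential stable cycle, disjoint from every other critical cycle.

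For \cref{L:polygon-simple-two}, my plan is to mirror the proof of \cref{L:simple-good}. A parity argument analogous to \cref{L:even}, applied to generic horizontal and vertical reference cycles on the torus $S^1\times S^1$, gives that the number of essential critical cycles is even, and therefore at least two once we account for the main diagonal. For the upper bound, pass to the polygonal dual attraction diagram defined just above the lemma's statement: the map $(x,y)\mapsto (y,L(x,y))$ is a homeomorphism on critical cycles, preserves essentiality by the same parity argument, and sends essential critical cycles to cycles in the cylinder $S^1\times\Real$ that are vertically totally ordered. Choosing $P(0)$ to be a convex-hull vertex of $P$ whose outward direction is uniquely defined (which exists since no vertex angle is acute) and repeating the convex-hull extremality argument from \cref{L:simple-good} forces every essential dual cycle to achieve either the vertically highest or the vertically lowest intersection with the fiber over $y=0$, giving at most two.

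The main obstacle will be verifying that the parity and extremality arguments really do survive the combinatorial structure of a polygonal attraction diagram, in particular the vertex strips where the puppy ``turns in place'' and the $2n\times n$ grid decomposition. Concretely, one must check that reference cycles can be positioned generically with respect to all grid corners, and that the ``tangent'' line $T(y)$ used to define the dual diagram is well-behaved across vertex strips via the direction function $\theta(y)$ rather than a derivative of $\pi$. The ``no three vertices define a right angle'' hypothesis is precisely what rules out the most harmful coincidences, and \cref{lem:nodedegeracy} has already done most of the bookkeeping. Once these supporting lemmas are in hand, the theorem itself requires nothing further.
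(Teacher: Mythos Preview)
Your proposal is correct and matches the paper's approach exactly: the theorem is stated without a separate proof, as an immediate consequence of \cref{lem:nodedegeracy}, the unlabeled polygonal analogue of \cref{Th:good-catch}, and \cref{L:polygon-simple-two}, with the paper noting that ``the remainder of the proof is essentially unchanged from the smooth case.'' Your plans for the supporting lemmas likewise coincide with the paper's sketch.
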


\subsection{Chamfering}
\label{sec:chamfering}

  We now extend our analysis to arbitrary simple polygons. We define a {\em chamfering} operation which transforms a polygon $P$ into a new polygon $\cham{P}$. 
  First we show that $\cham{P}$ has no more degeneracies of type 1, 2a, or 3a. 
  The polygon $\cham{P}$ may still have degenerate pivot configurations of type~2b and type~3b; 
  however, since these correspond to isolated forward or backward pivot configurations in the attraction diagram, they do not impact the existence of a strategy to catch the puppy on $\cham{P}$. The puppy will just move over them as if they were normal forward or backward configurations.
  Thereafter we show that such a strategy can be correctly translated back to a strategy on~$P$.

  Let $P$ be an arbitrary simple polygon, and let $\e > 0$ be smaller than half of any distance between two non-incident features of $P$.
  Then the {\em $\e$-chamfered} polygon $\cham{P}$ is another simple polygon with twice as many vertices as $P$, defined as follows.
  Refer to \cref {fig:poly-chamfering}.
  For each vertex $v_i$ of $P$, we create two new vertices $v_i'$ and $v_i''$, where $v_i'$ is placed on $e_{i-1}$ at distance $\e$ from $v_i$, and $v_i''$ is placed on $e_i$ at distance $\e$ from $v_i$.
  Edge $e_i'$ in $\cham{P}$ connects $v_i''$ to $v_{i+1}'$, and a new {\em short edge} $s_i$ connects $v_i'$ to $v_i''$.

\begin{figure}[ht]
\centering
\includegraphics[width=\textwidth]{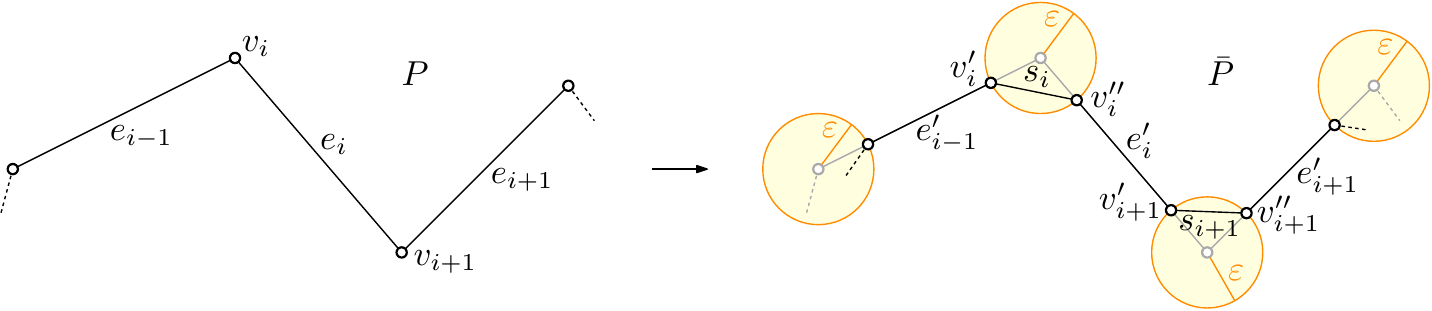}
\caption{The chamfering operation.}
\label{fig:poly-chamfering}
\end{figure}
    
  The chamfering operation alters the local structure of the attraction diagram near every vertex. 
  The idea is that for non-degenerate configurations, the change will not influence the behaviour of the puppy in such configurations, and as such will not influence the existence of any catching strategies. 
  However, at degenerate configurations, the change in the structure is significant. 
  We will argue in \cref{sec:degeneratepolygons} that the changes are such that every strategy  in the chamfered polygon translates to a strategy in the original polygon.
  
\begin{figure*}
\centering
\includegraphics[scale=0.4]{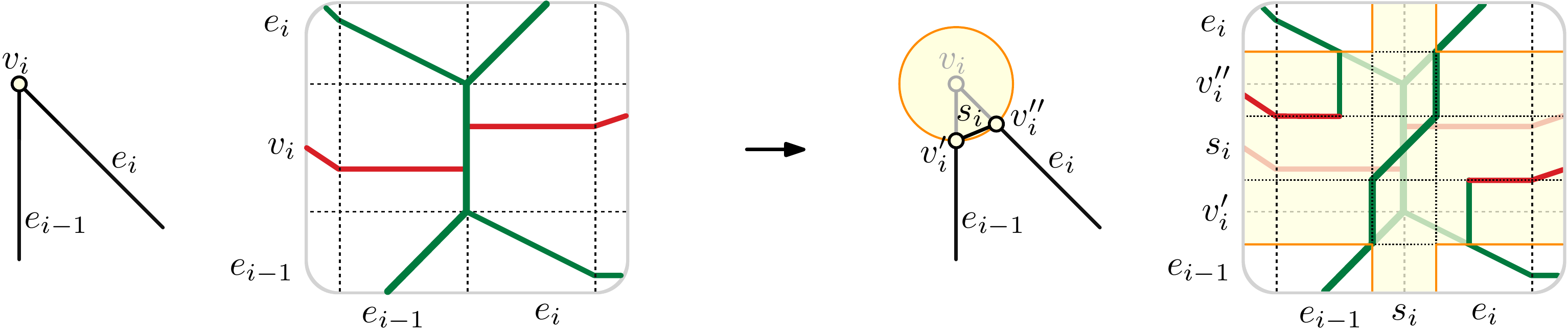}
\\{\bf type 1}\\[1em]
\includegraphics[scale=0.4]{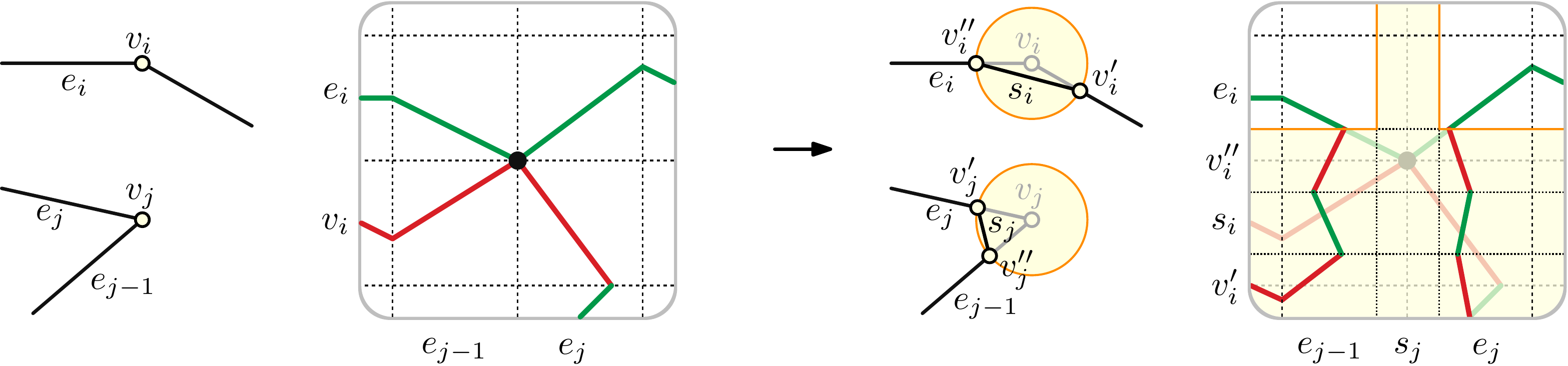}
\\{\bf type 2a}\\[1em]
\includegraphics[scale=0.4]{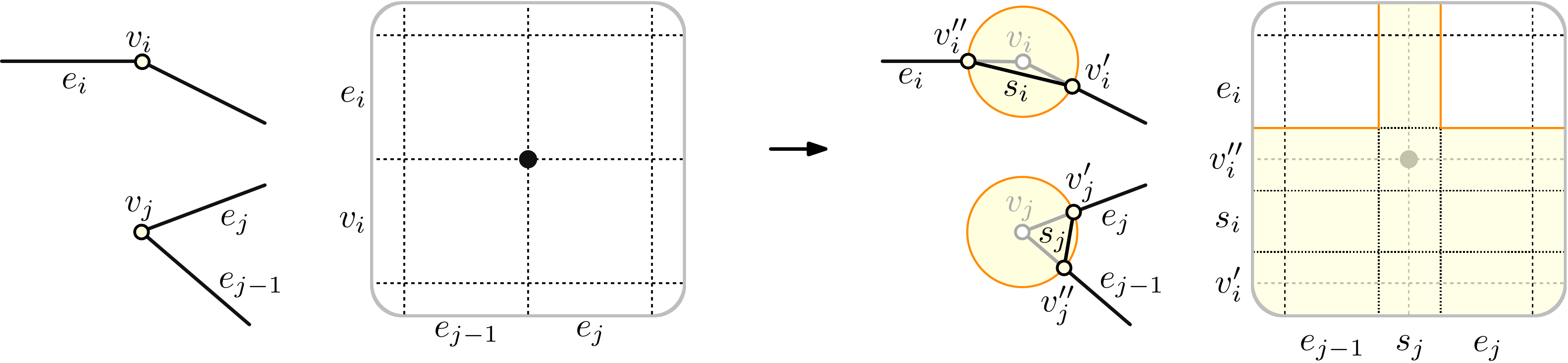}
\\{\bf type 2b}\\[1em]
\includegraphics[scale=0.4]{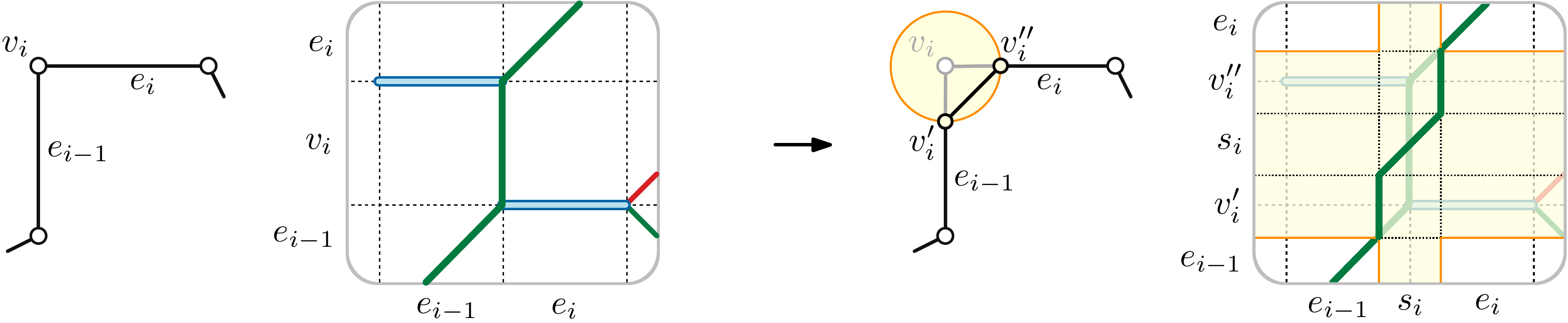}
\\{\bf type 3a}\\[1em]
\includegraphics[scale=0.4]{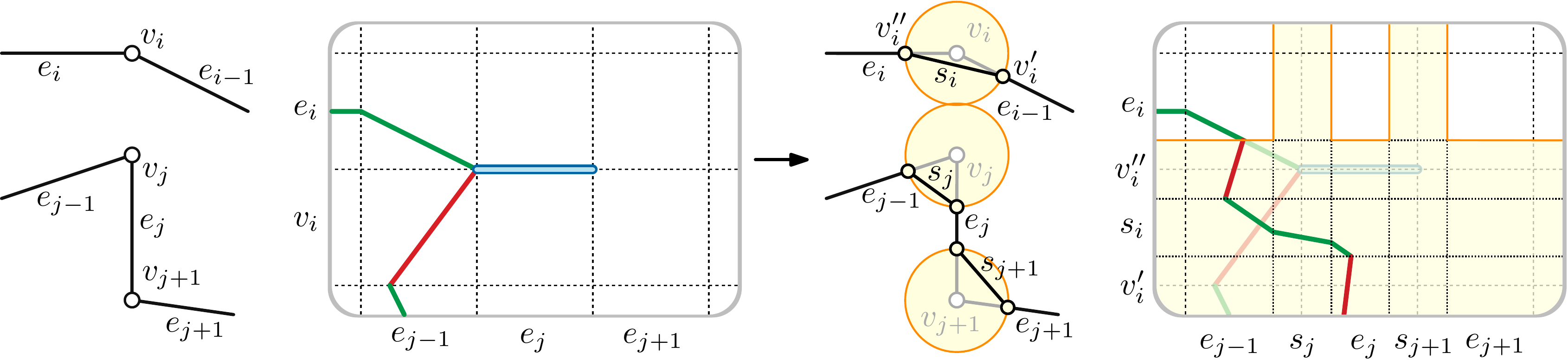}
\\{\bf type 3b}\\[1em]
\caption{Effect of the chamfering operation on the attraction diagram near degenerate pivot configurations. The size of $\e$ is exaggerated; the figures show the combinatorial structure of the chamfered diagram for a much smaller value of $\e$. Only the effect of chamfering vertices relevant for the degeneracy is shown.}
\label{F:degen-cham}
\end{figure*}

  Here we review again the different types of degenerate pivot configurations, and how the chamfering operation affects the structure of the attraction diagram in each case.
  Refer to \cref {F:degen-cham}.

  \begin {itemize}
    \item Near type~1 degeneracies, the higher-degree vertices on the main diagonal disappear. Instead, two separate critical curves almost touch the main diagonal: one from above and one from below.
    \item Near type~2a degeneracies, the degree-4 vertex disappears. Instead, the two incident critical curves coming from the left are connected, and the two incident curves coming from the right are connected.
    \item Near type-2b degeneracies, the isolated pivot vertex simply disappears.
    \item Near type-3 degeneracies, the degenerate pivot ``vertex'' disappears. Any connected critical curve is locally rerouted away from the degenerate location.
  \end {itemize}

\subsection{Catching puppies on arbitrary simple polygons}
\label{sec:degeneratepolygons}

  \begin {lemma}
  \label {lem:cham}
    Let $P$ be an abitrary simple polygon. 
    There exists an \e such that the \e-chamfered polygon~$\cham{P}$ has no degenerate pivot configurations of type 1, type~2a, or type~3a.
  \end {lemma}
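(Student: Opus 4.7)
The plan is to handle the three forbidden degeneracy types in turn.

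For type~1, a direct angle computation suffices. The isosceles triangle with vertices $v_i$, $v_i'$, and $v_i''$ has apex angle equal to the original interior angle $\alpha_i$ of $P$ at $v_i$ and two equal legs of length $\e$, so its base angles are each $(\pi - \alpha_i)/2$. Hence the interior angle of $\cham P$ at $v_i'$, and symmetrically at $v_i''$, equals the supplementary angle $(\pi + \alpha_i)/2$. Since every original interior angle $\alpha_i$ lies in $(0, 2\pi)$, this new angle strictly exceeds $\pi/2$, so $\cham P$ has no acute vertex angle for any $\e > 0$ small enough that $\cham P$ remains simple.

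For types~2a and 3a, the key structural observation is that every edge of $\cham P$ has a direction that is \emph{independent of $\e$}: each surviving portion of an original edge retains direction $\hat e_i$, and each short edge $s_i$ points in the fixed bisector direction $\hat e_{i-1} + \hat e_i$. In contrast, each chamfered vertex is an explicit linear function of $\e$, namely $v_i' = v_i - \e\, \hat e_{i-1}$ and $v_i'' = v_i + \e\, \hat e_i$. Thus, for each combinatorial ``pattern'' of a candidate type~2a or type~3a degeneracy in $\cham P$ (a choice of two chamfered vertices $v_*, v_\diamond$ and an incident chamfered edge $e_*$ with direction $d_*$), the defining perpendicularity condition $(v_\diamond - v_*) \cdot d_* = 0$ reduces to a linear equation $A + B\e = 0$ whose coefficients are determined entirely by the combinatorics and geometry of~$P$. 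Since $\cham P$ has $O(n)$ vertices and edges, there are only $O(n^3)$ such patterns. Each yields either a unique root (when $B \ne 0$), no root (when $B = 0$ but $A \ne 0$), or every $\e$ a root (when $A = B = 0$). Letting $\mathcal E$ denote the finite set of roots arising from patterns with $B \ne 0$, any $\e \in (0, \e_0) \setminus \mathcal E$---where $\e_0$ is the upper bound that keeps $\cham P$ simple---avoids every pattern with $B \ne 0$.

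The \textbf{main obstacle} is the degenerate case $A = B = 0$, where the perpendicularity is trivially satisfied for all $\e$ and so cannot be eliminated by varying $\e$. The plan here is to show that whenever $A = B = 0$ the strict \emph{halfspace inequalities} in the definition of a type~2a or type~3a degeneracy must fail in $\cham P$ for all sufficiently small $\e$. The intuition is that $A = 0$ means $v_j$ already lies on the perpendicular line $\ell$ through $v_i$ in $P$; after chamfering, $v_j$ is replaced by the two nearby vertices $v_j'$ and $v_j''$ on its two incident edges, and the perpendicular line $\ell'$ of $\cham P$ is a small translate of $\ell$. The $A = B = 0$ conditions then force the chamfered neighbors of the ``puppy'' vertex ---$v_j'$ and $v_{j+1}'$ in the canonical case--- to straddle $\ell'$, destroying the strict halfspace condition. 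A short case analysis over the long/short type of $e_*$ and the choice of $'$ vs.\ $''$ for each chamfered vertex confirms this in every subcase. Combined with the exceptional finite set $\mathcal E$, any sufficiently small $\e \in (0, \e_0) \setminus \mathcal E$ yields a chamfered polygon $\cham P$ without type~1, type~2a, or type~3a degenerate pivot configurations.
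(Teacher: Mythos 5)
Your overall strategy matches the paper's: the type-1 case is dispatched by the same angle computation ($(\pi+\alpha_i)/2 > \pi/2$), and for types 2a/3a both you and the paper exploit the fact that the chamfered vertices move affinely in \e while all edge directions stay fixed, so each candidate degeneracy pattern contributes an affine condition in \e that is either killed by avoiding finitely many roots or persists for every \e and must be analyzed separately.

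The gap is in your treatment of the persistent case $A=B=0$. You claim that whenever the perpendicularity condition holds identically in \e, the strict halfspace inequalities must fail, i.e.\ no degeneracy of any kind survives. That is false in one of the subcases, and it is exactly the subcase the paper spends most of its effort on: when the perpendicular line $\ell$ is erected on a \emph{short} edge $s_i$ (through $v_i'$ or $v_i''$), the line $\ell$ does not translate at the same rate as the chamfered vertices, and the halfspace conditions \emph{can} hold for all sufficiently small \e. What saves the lemma there is not that the configuration fails to be degenerate, but that the surviving degeneracy is necessarily of type 2b or type 3b (an isolated, degree-0 pivot), which the lemma explicitly tolerates. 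To see this one must track which side of $\ell$ the neighbors $v_{j-1}''$ and $v_j''$ fall on: one of them is forced to the far side because $e_i'$ and $s_i$ lie on opposite sides of $\ell$, and the position of the other determines whether one gets a 2b degeneracy, a 3b degeneracy, or none. Your plan, as stated, would attempt to prove a universally failing halfspace condition and would get stuck here; to repair it you need to (i) separate the subcase where the perpendicular foot is on a long (shortened original) edge---where parallelism of two original edges is forced and the neighbors genuinely straddle $\ell$, so there is no degeneracy at all---from the short-edge subcase, and (ii) in the latter, argue about the \emph{degree} of the resulting pivot in the attraction diagram (distinguishing 2a from 2b and 3a from 3b), not merely about the halfspace inequalities. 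A related symptom of the same confusion: the halfspace condition on $\ell$ concerns the neighbors of the \emph{human} vertex $v_j$, not of the puppy vertex, so the vertices you name ($v_j'$ and $v_{j+1}'$ "of the puppy vertex") are not the ones the definition constrains.
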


  \begin {proof}
     First, note that $P'$ has no type 1 degeneracies: we replace each vertex $v_i$ with angle $\alpha_i$ by two new vertices $v_i'$ and $v_i''$ with angles $\alpha_i' = \alpha_i'' = \pi - \frac12(\pi - \alpha_i) = \frac12\pi + \frac12\alpha_i > \frac12\pi$.
     
     Next, we argue about type 2 degeneracies, which may occur for some values of $\e$.
     We argue that each potential type 2a degeneracy only occurs for a speficic value of $\e$; since there are finitely many potential degeneracies the lemma then follows.
     
     Note that, as we vary $\e$, all vertices of $P'$ move linearly and with equal speed.
     Suppose such a configuration is {\em not} unique for a specific choice of $\e$; that is, sustained for all values of $\e$. Then two vertices involved in the degenerate configuration must move in the same direction; that is, two edges of $P$, say $e_i$ and $e_j$, must be parallel.
     There are two configurations in $P'$ which could potentially give rise to a type 2 degeneracies.
     We argue that then, in fact, it cannot satisfy all requirements of a type 2 degeneracy.
     \begin {itemize} [noitemsep]
       \item An edge $e_i'$ has endpoint $v_i'$ (or symmetrically, $v_{i-1}''$) such that the line $\ell$ through $v_i'$ and perpendicular to $e_i'$ contains another vertex $v_j'$ (or $v_{j-1}'')$.  
       Refer to \cref {F:type2b-original}.
       Then, as $v_i'$ move along $e_i'$, $\ell$ move at the same speed as $v_i'$, so $v_j'$ moves in the same direction at the same speed along $e_{j-1}'$. So $e_{j-1}'$ is parallel to $e_i'$. 
       But now consider $v_j''$. Since $v_j$ lies on the supporting line of $e_{j-1}'$ at distance $\e$ from $v_j'$, and $v_j''$ lies at distance $\e$ from $v_j$, we conclude that $v_{j-1}''$ and $v_j''$ lie on the opposite side of $\ell$; thus, this is not a type 2 degeneracy.

\begin{figure}[ht]
\centering
\includegraphics{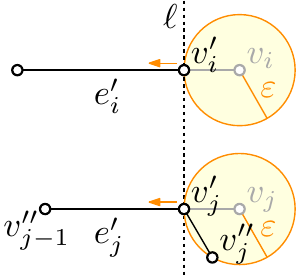}
\caption{Potential new degenerate pivot configurations based on a (shortened) original edge $e_i'$. For $\e$ small enough, there can be no degeneracy.}
\label{F:type2b-original}
\end{figure}

       \item A short edge $s_i$ of $P'$ has an endpoint $v_i'$ (or symmetrically, $v_{i}''$) such that the line $\ell$ through $v_i'$ and perpendicular to $s_i$ contains another vertex $v_j'$ (or $v_{j-1};'')$. 
       Refer to \cref {F:type2b-short}.
       Then, as $\e$ increases, $v_i'$ move along $e_i'$, $\ell$ moves at a slower speed; hence, we cannot conclude anything about the orientation of $e_j$. However, in this case,  note that $e_i'$ and $s_i$ lie on opposite sides of $\ell$; therefore, $e_j'$ and thus $v_{j-1}''$ lies on the opposite side of $\ell$. It is possible that $v_j''$ lies on the same side, in which case we have a degenerate pivot configuration of type 2b (\cref {F:type2b-short} (left)), or that $v_j''$ lies on $\ell$, in which case we have a degenerate pivot configuration of type 3b (\cref {F:type2b-short} (middle)). If $v_j''$ lies on the opposite side of $\ell$, there is no degeneracy (\cref {F:type2b-short} (right)).
     \end {itemize}
  \end {proof}

\begin{figure}[ht]
\centering
\includegraphics{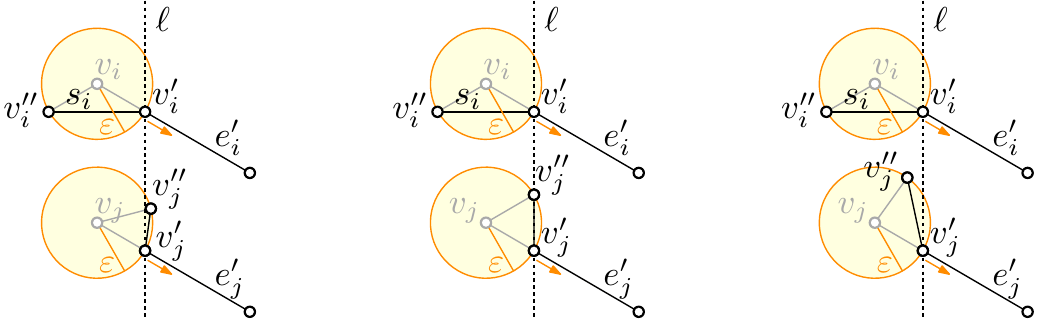}
\caption{Potential new degenerate pivot configurations based on a short edge $s_i$. For any $\e$ we may still have a new degeneracy of type 2b (left), 3b (middle), or no degeneracy (right).}
\label{F:type2b-short}
\end{figure}
   
\begin{figure*}
\centering
\begin{tabular}{c@{\qquad}c}
\includegraphics[scale=0.5,page=1]{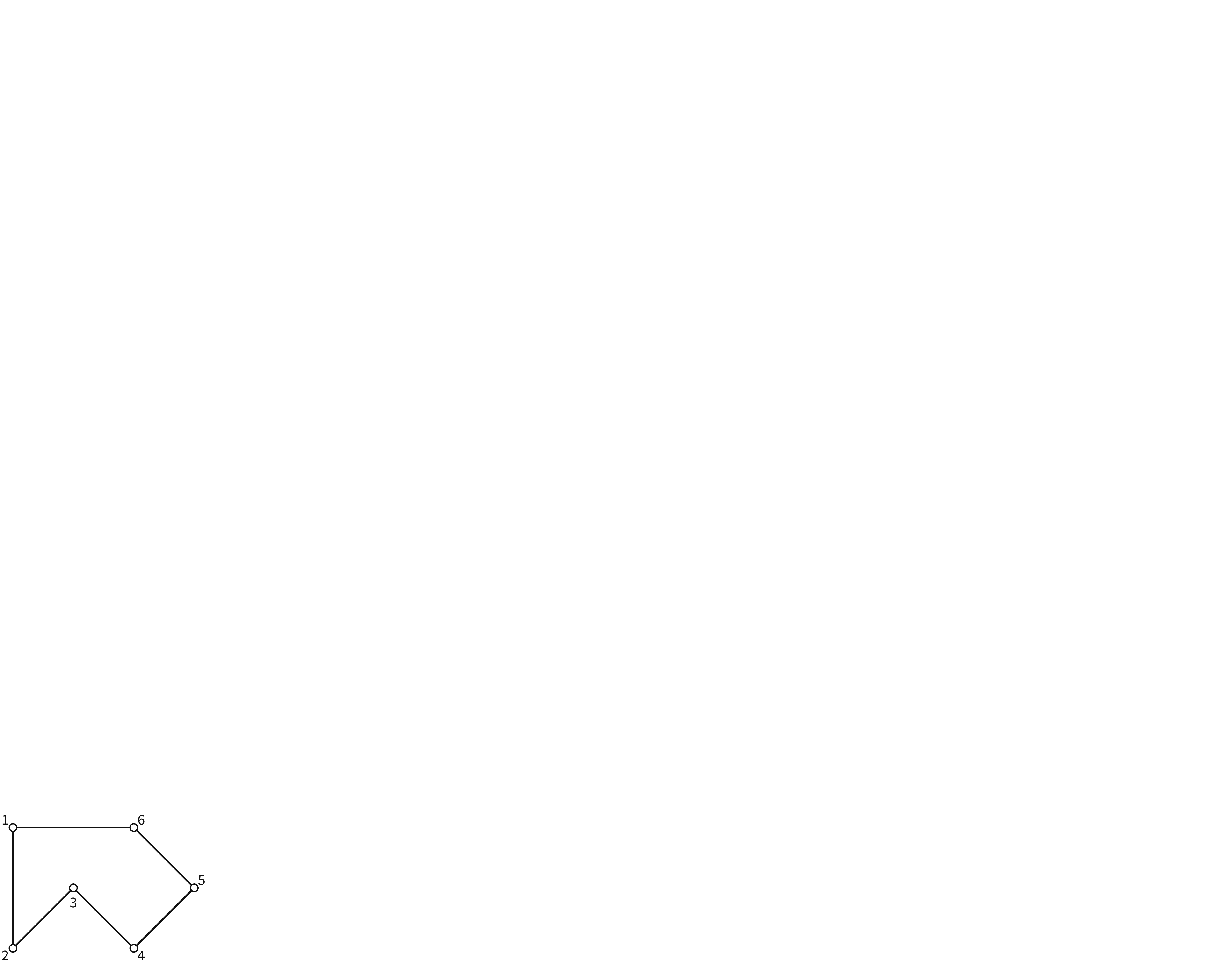} &
\includegraphics[scale=0.5,page=1]{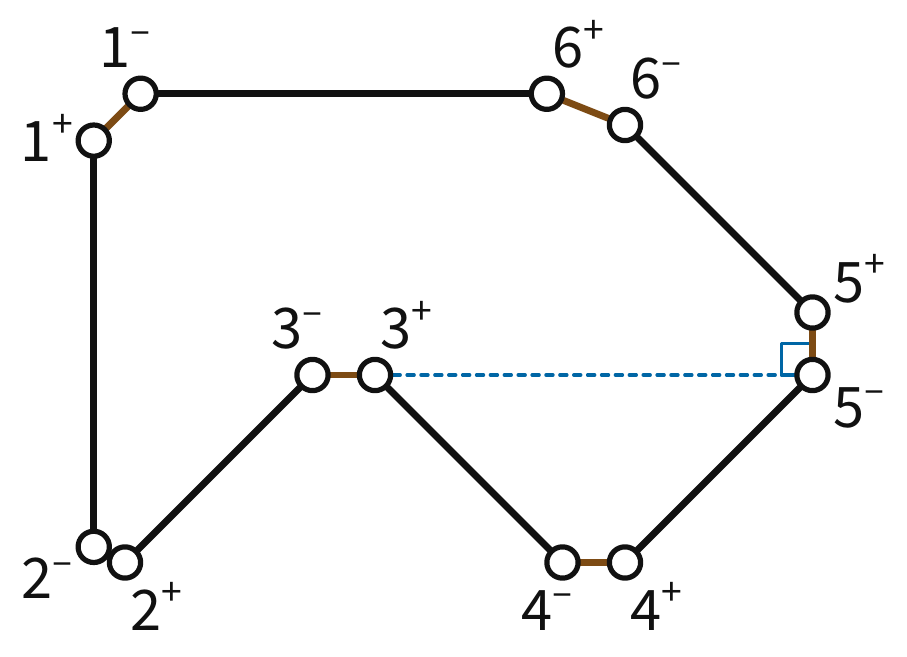} \\
\includegraphics[scale=0.375,page=2]{Fig/stonks} &
\includegraphics[scale=0.375,page=2]{Fig/stonks-chamfered}
\end{tabular}
\caption{The attraction diagram of a degenerate polygon, before and after chamfering.  All existing degeneracies disappeared in the chamfered polygon, which does have one new but harmless degeneracy.}
\label{fig:poly-puppy}
\end{figure*}

    Let $P$ be an arbitrary simple polygon and $\cham{P}$ a chamfered copy without degeneracies.
    We say a parameter value $x$ is {\em verty} whenever $P(x)$ is  within distance $\e$ from a vertex.
    We say a parameter value $x$ is {\em edgy} whenever $P(x)$ is not within distance $\e$ from a vertex.
    We reparameterize $\cham{P}$ such that $P(x) = \cham{P}(x)$ whenever $x$ is edgy; the parameterization of $\cham{P}$ is uniformly scaled for verty parameters.
    We say a configuration $(x, y)$ is edgy when $x$ and $y$ are both edgy.

    We say a path~$\sigma$ is a \emph{valid} path in the attraction diagram
    if it describes a human and puppy behavior that obeys the rules 
    imposed on the puppy and the human, as explained in \cref{S:intro}.

\begin {lemma}
  \label {lem:equivalent}
    Assuming $\e$ is sufficiently small.
    For any valid path $\sigma$ between two edgy  configurations $(x_1, y_1)$ and $(x_2, y_2)$ in the attraction diagram of $\cham{P}$, there is an equivalent path between $(x_1, y_1)$ and $(x_2, y_2)$ in the attraction diagram of $P$.
\end {lemma}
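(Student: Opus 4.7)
The plan is to decompose $\sigma$ into a finite sequence of maximal subpaths that are either entirely edgy or contain at least one verty coordinate, and to translate each subpath separately. Because the endpoints of $\sigma$ are edgy, every verty subpath is sandwiched between edgy configurations. At any edgy configuration $(x,y)$, the reparameterization guarantees $P(x) = \cham{P}(x)$, that the puppy's position and direction agree in the two polygons, and hence that the distance $D(x,y)$ and its forward/backward/stable/unstable/pivot classification also agree. Every edgy subpath of $\sigma$ therefore pulls back verbatim to a valid path in the attraction diagram of $P$.

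For each verty subpath, I set up a local correspondence between a neighborhood of a vertex $v_i$ of $P$ and the neighborhood of the chamfered corner in $\cham{P}$ formed by the edges $e_{i-1}'$, $s_i$, $e_i'$ and the new vertices $v_i'$, $v_i''$. The key observation is that for sufficiently small $\e$, the critical curves in $\cham{P}$ restricted to this neighborhood are a small perturbation of those in $P$ that only resolves the degeneracies of types 1, 2a, and 3a as summarized in Figure~\ref{F:degen-cham}. In each case, every stable critical arc, unstable critical arc, and pivot configuration traversed by $\sigma$ has an unambiguous counterpart in the attraction diagram of $P$, obtained by collapsing the resolved degeneracy back to the degenerate vertical section of the main diagonal (type~1), the degree-$4$ pivot (type~2a), or the continuum of pivots along an edge (type~3a). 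The remaining type-2b and type-3b degeneracies only introduce isolated forward or backward pivot configurations which $\sigma$ either does not visit as a stable transition or simply crosses as ordinary forward or backward configurations, and hence do not obstruct the translation.

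The main obstacle will be a pivot segment of $\sigma$ whose vertical span straddles the verty region, because the puppy's motion in $\cham{P}$ then traverses the short edge $s_i$ and turns at $v_i'$ and $v_i''$, whereas its counterpart in $P$ should rotate in place at $v_i$. The plan for this is to argue that the puppy's motion along the chamfer converges, as $\e \to 0$, to the rotating motion at $v_i$, and that the landing stable configuration in $\cham{P}$ converges to the matching stable configuration in $P$. Since the attraction diagram of $P$ has finitely many discrete critical features separated by a positive minimum distance, and $\sigma$ is a finite path that uses only finitely many such features, choosing $\e$ below this minimum distance ensures that each pivot matches up exactly. Concatenating all translated subpaths then yields a valid path between $(x_1, y_1)$ and $(x_2, y_2)$ in the attraction diagram of $P$, as required.
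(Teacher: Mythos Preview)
Your proposal is correct and follows essentially the same strategy as the paper: edgy configurations correspond exactly under the reparameterization, and verty interludes are handled by a local analysis near each vertex, going through the degeneracy types of Figure~\ref{F:degen-cham} one by one. Your organization is slightly different---you decompose $\sigma$ into maximal edgy and verty subpaths and use a convergence argument as $\e \to 0$ together with the discreteness of critical features to match up pivot landings, whereas the paper argues more dynamically about the puppy's motion, building explicit strips around the normal lines $\ell,\ell'$ at each generic vertex $v_i$ to certify that the human is edgy whenever the puppy enters or leaves the verty region at $s_i$---but both arguments are informal sketches resting on the same geometric picture, and neither buys materially more than the other.
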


\begin{proof}
    The intuition is that the behavior of the puppy around vertices is the same as in the original polygon.
    Note that when the human reaches a vertex in~$P$ 
    then it can happen that the puppy suddenly moves infinitely fast until it reaches a stable position.
    However, if the human is \emph{leaving} a vertex then the puppy may move, but only at a finite speed.
    (At least for some short distance.)
    Thus if the puppy is at an edgy position, we can always move the human
    such that we get into an edgy configuration, given that \e is small enough.
    Recall that there is a one to one correspondence between edgy configurations of $P$ and $\cham{P}$.
    Also once the puppy is in an edgy configuration, we can also move
    the human into an edgy configuration, assuming we have chosen \e small enough.
    Thus, we only care about the situation that the puppy is in a verty configuration. We need to make sure that the same way the puppy can leave the verty position in \cham{P} can be simulated in $P$. 
    
    \begin{figure}
        \centering
        \includegraphics[page = 2, scale = 0.8]{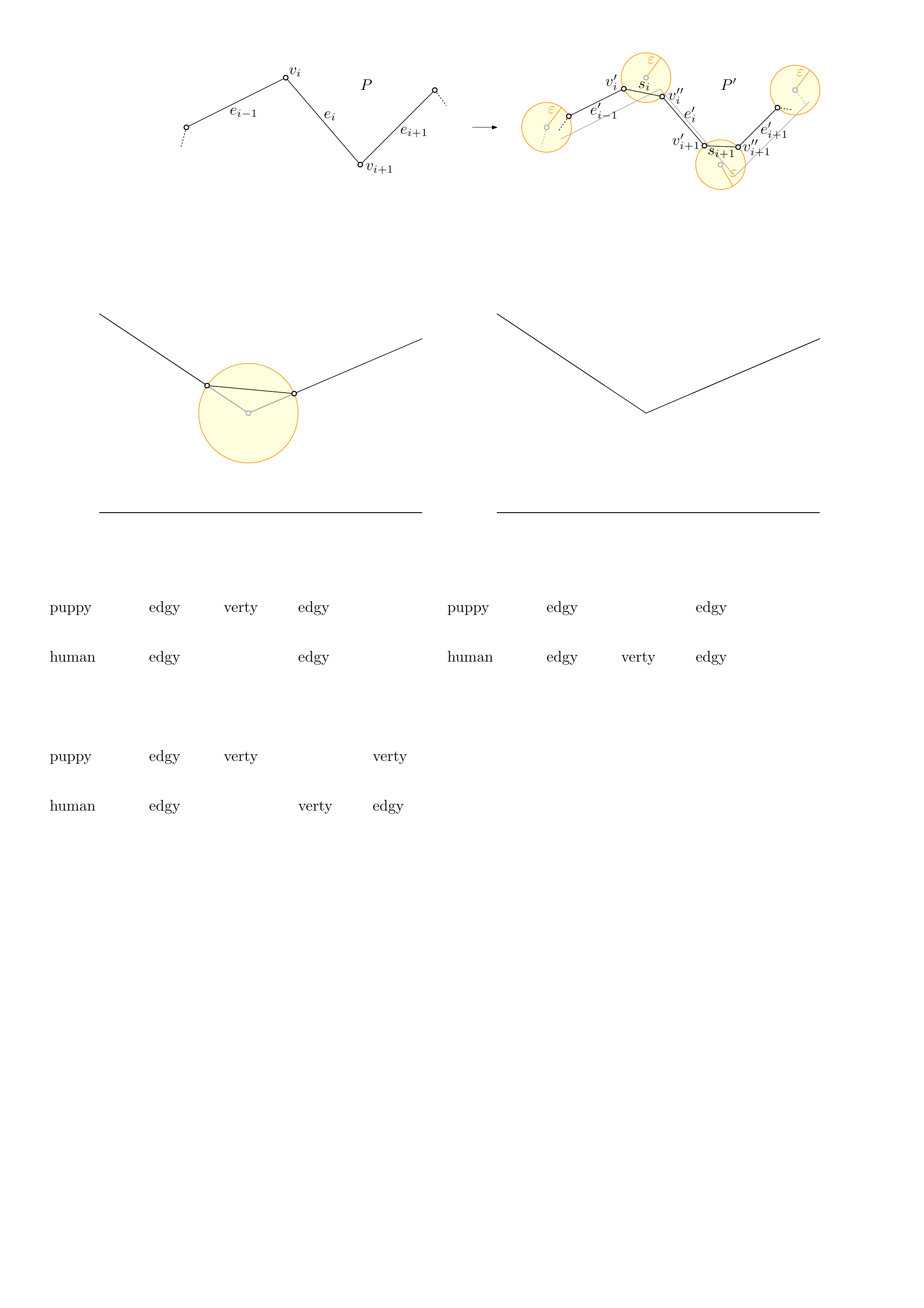}

        \caption{Left: When the puppy is at the vertex~$v_i$, then its behavior is determined on the region that the human lies. 
        We have four regions, called~$A,B,C,D$.
        Right: Once, the human gets into a verty configuration, the puppy makes an infinitely fast jump forward.
        This can be simulated in $P$, as the lower vertex stems from a type~2a degeneracy.}
        \label{fig:strategy-preservance}
    \end{figure}
    
    We first handle the situation that the puppy traverses
    over a generic vertex $v_i$.
    To be precise, let~$v_i$ be a vertex of~$P$ 
    and~$e_{i-1}$ and~$e_{i}$ its incident edges. 
    Let $\ell$ and $\ell'$ be the lines through $v_i$ orthogonal to $e_{i-1}$ and $e_i$ respectively.
    We say that $v_i$ is generic if there is no  vertex that
    lies on either $\ell$ or $\ell'$.
    We denote the infinite strip bounded by $\ell$ and $v_i'$ by the letter~$S$.
    Similarly, we denote the infinite strip bounded by $\ell$ and $v_i'$ by the symbol~$S'$.
    As no vertex of $P$ lies on one of the two lines~$\ell, \ell'$ then, we can choose \e small enough such that also there is no vertex of \cham{P} in one of the strips~$S,S'$, other than $v_i'$ and $v_i''$.
    
    Thus whenever the human crosses over one of the strips $S$
    or $S'$, the human is on an edge of~$\cham{P}$. 
    Thus, when the puppy will leave the verty position defined by the segment $s_i$, then the human will be in an edgy configuration, as is the puppy.
    Thus we can simulate the same valid path moving the puppy out of the verty configuration at $s_i$.
    
    We are now handling the case that the puppy moves towards a degenerate configuration. 
    Recall all degenerate positions as listed in \cref{F:degen-cham}.
    Interestingly, while those configurations are degenerate for the attraction diagram the behavior of the puppy and the human is easily understood.
    For instance in a type~1 degeneracy the human walks 
    towards the acute angle and the puppy will move to the same position.
    Note that type~2b configurations are not changing the behavior of the puppy at all, as the isolated pivot point will never be reached in the attraction diagram.
    Similarly, type~3a and type~3b degeneracies are not altering
    the puppy behavior in an interesting way.
    Given a movement of the human in~\cham{P} it is straightforward how the movement of the human should be in~$P$. 
    See to the right of \cref{fig:strategy-preservance}, for the most difficult example of a type~2a degeneracy.
\end{proof}

  We are now ready to prove our main result.

\begin{theorem}
Let $P$ be a simple polygon.  The human can catch the puppy on $P$, starting from any initial configuration.
\end{theorem}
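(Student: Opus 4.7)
The plan is to reduce the arbitrary case to the obtuse, generic case handled by \cref{thm:genericpolygons}, using chamfering as the bridge. Given an arbitrary simple polygon $P$, first I would invoke \cref{lem:cham} to choose a sufficiently small $\e>0$ so that the chamfered polygon $\cham{P}$ has no degenerate pivot configurations of type~1, type~2a, or type~3a. By the very definition of chamfering, every vertex of $\cham{P}$ has angle $\tfrac12\pi + \tfrac12\alpha_i > \tfrac12\pi$, so $\cham{P}$ has no acute vertex angles. The only remaining degeneracies are of type~2b and type~3b.

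Next, I would argue that \cref{thm:genericpolygons} (and the structural lemmas \cref{lem:nodedegeracy,L:polygon-simple-two} behind it) still apply to $\cham{P}$, even though its attraction diagram might contain isolated type~2b or type~3b pivot configurations. The point is that such pivots are \emph{isolated} in the attraction diagram: they are not shared endpoints of any stable or unstable critical path, so the puppy simply traverses them as ordinary forward or backward configurations. Consequently the attraction diagram of $\cham{P}$ decomposes into disjoint simple critical cycles with exactly two essential cycles (the main diagonal and a river), and \cref{Th:good-catch} yields, for every configuration on $\cham{P}$, a valid dexter or sinister strategy for catching the puppy.

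Given any initial configuration $(x_0,y_0)$ on $P$, I would first perturb it (if necessary) to a nearby edgy configuration: since the human is free to move in either direction and any verty position is entered and left at finite puppy-speed, a small human nudge produces an edgy configuration to which $(x_0,y_0)$ is joined by a trivially valid path in $P$. Under the shared edgy parameterization, this configuration corresponds to an edgy configuration $(x_0',y_0')$ on $\cham{P}$, where we run the chamfered catching strategy until it reaches the main diagonal (a final configuration is itself edgy, or may be taken edgy without loss of generality). Applying \cref{lem:equivalent} then converts this chamfered strategy into a valid catching strategy on $P$.

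The main obstacle is making sure the correspondence between strategies on $\cham{P}$ and $P$ survives the chamfered segments the strategy necessarily crosses: whenever the human or the puppy crosses a short edge $s_i$ in $\cham{P}$, we must match that motion to a legal motion on $P$, where $s_i$ has collapsed to the single vertex $v_i$. This is precisely the content built into \cref{lem:equivalent}: a case analysis over the five types of degeneracies verifies that crossings of $s_i$ in $\cham{P}$ correspond either to the puppy remaining momentarily at $v_i$ or to it passing through $v_i$ in the analogous direction. The most delicate case is the type~2a degeneracy, where the chamfered strategy's finite-speed traversal of a short edge must be realized in $P$ by an instantaneous puppy jump across $v_i$, triggered exactly when the human enters the relevant wedge; since \cref{lem:equivalent} certifies this simulation, the translated strategy is valid and terminates in a catching configuration on $P$.
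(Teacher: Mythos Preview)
Your proposal is correct and follows essentially the same route as the paper's own proof: chamfer via \cref{lem:cham}, observe that the surviving type~2b/3b degeneracies are isolated pivots that do not affect the critical-cycle structure, invoke \cref{thm:genericpolygons} on $\cham{P}$, normalize the start and end configurations to be edgy, and pull the strategy back to $P$ via \cref{lem:equivalent}. Your explicit remark that \cref{thm:genericpolygons} does not literally apply (since $\cham{P}$ may have three vertices forming a right angle) but that its proof goes through once isolated pivots are discarded is a welcome clarification the paper glosses over.
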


\begin {proof}
  By \cref {lem:cham}, there is a value $\e$ such that the $\e$-chamfered polygon $\cham{P}$ has no degeneracies of type 1 or type~2a or type~3. There may still be some type~2b degeneracies, but they result in isolated pivots in the 
  attraction diagram, which can be safely ignored.

  Consider an arbitrary starting configuration on $P$. If the starting configuration is not stable, we let $p$ move until it is. 
  If the resulting configuration is not edgy, we walk $h$ along~$P$ until we reach an edge configuration $(x, y)$.
  (This must be possible, except if the puppy stays on a vertex 
  for the entire time, in that case, we can catch the puppy trivially, by going to that vertex.)
  
  By \cref {thm:genericpolygons}, there exists a strategy for $h$ to catch $p$ on $\cham{P}$.
  If the end configuration of this strategy is not edgy, note that we may now simply move $h$ and $p$ together to an edgy final configuration $(f, f)$.
  By \cref {lem:equivalent}, there is an equivalent strategy to reach $(f, f)$ from $(x, y)$ on $P$. Combined with the initial path to $(x, y)$ this gives us a path from an arbitrary starting configuration to a final configuration on~$P$.
\end {proof}

% ——————————————————————————————————————————————
\section{Further questions}

For simple curves, we have only proved that a catching strategy exists.  At least for polygonal tracks, it is straightforward to compute such a strategy in $O(n^2)$ time by searching the attraction diagram.  In fact, we can compute a strategy that minimizes the total distance traveled by either the human or the puppy in $O(n^2)$ time, using fast algorithms for shortest paths in toroidal graphs \cite{hkrs-fspap-97,ght-stgbg-84}.  Unfortunately, this quadratic bound is tight in the worst case if the output strategy must be represented as an explicit path through the attraction diagram.  We conjecture that an optimal strategy can be described in only $O(n)$ space by listing only the human’s initial direction and the sequence of points where the human reverses direction.  On the other hand, an algorithm to compute such an optimal strategy in subquadratic time seems unlikely.

If the track is a \emph{smooth curve} of length $\ell$ whose attraction diagram has $k$ pivot configurations, a trivial upper bound on the distance the human must walk to catch the puppy is $\ell\cdot k/2$.  In any optimal strategy, the human walks straight to the point on the curve corresponding to a pivot located at one of the two endpoints of the current ``stable sub-curve'' of a critical curve (walking less than $\ell$).  Then the configuration moves to another stable sub-curve, and so on, never visiting the same stable sub-curve twice.  Our question is whether a better upper bound can be proved.

In fact, if minimizing distance is not a concern, we conjecture that \emph{no} reversals are necessary.  That is, on \emph{any} simple track, starting from \emph{any} configuration, we conjecture that the human can catch the puppy \emph{either} by walking only forward along the track \emph{or} by walking only backward along the track.  \cref{F:intro2} and its reflection show examples where each of these naïve strategies fails, but we have no examples where both fail.  (Our proof of Theorem \ref{Th:ortho} implies that the human can always catch the puppy on an \emph{orthogonal} polygon by walking \emph{at most once} around the track in some direction, depending on the starting configuration.)

More ambitiously, we conjecture that the following \emph{oblivious} strategy is always successful: walk twice around the track the track in one (arbitrary) direction, then walk twice around the track in the opposite direction.

Another interesting question is to what extent our result applies to self-intersecting curves in the plane, when we consider the two strands of the curve at an intersection point to be distinct.  It is easy to see that the human cannot catch the puppy on a curve that traverses a circle twice; see \cref{F:doubleLoop}.  Indeed, we know how to construct examples of bad curves with any rotation number \emph{except} $-1$ and $+1$.  We conjecture that \cref{L:simple-good}, and therefore our main result, extends to all non-simple tracks with rotation number $\pm 1$.  Similarly, are there interesting families of curves in $\Real^3$ there the human and puppy can always meet?

Finally, it is natural to consider similar pursuit-attraction problems in more general domains.  \cref{thm:pol} shows that the human can always catch the puppy in the interior of a simple polygon, by walking along the dual tree of any triangulation.  Can the human always catch the puppy in any planar straight-line graph?  Inside any polygon with holes?  

\bibliography{lib}

\end{document}